\documentclass[twocolumn,10pt]{IEEEtran}

\usepackage{amsmath,amssymb,amsthm}
\usepackage{graphicx}
\usepackage[acronym]{glossaries}
\usepackage{xcolor}
\usepackage{cases}
\usepackage{booktabs}
\usepackage{multirow}
\usepackage{algorithm}
\usepackage{algorithmic}

\newacronym{adc}{ADC}{analog-to-digital converter}
\newacronym{dac}{DAC}{digital-to-analog converter}
\newacronym{rf}{RF}{radio frequency}
\newacronym{milac}{MiLAC}{microwave linear analog computer}
\newacronym{svd}{SVD}{singular value decomposition}
\newacronym{em}{EM}{electromagnetic}
\newacronym{dft}{DFT}{discrete Fourier transform}
\newacronym{fft}{FFT}{fast Fourier transform}
\newacronym{sim}{SIM}{stacked intelligent metasurface}
\newacronym{espar}{ESPAR}{electronically steerable parasitic array radiator}
\newacronym{dsa}{DSA}{dynamic scattering array}
\newacronym{lmmse}{LMMSE}{linear minimum mean square error}
\newacronym{nn}{NN}{neural network}
\newacronym{ai}{AI}{artificial intelligence}
\newacronym{rtft}{RTFT}{real-time Fourier transform}
\newacronym{ris}{RIS}{reconfigurable intelligent surface}
\newacronym{mimo}{MIMO}{multiple-input multiple-output}
\newacronym{nmse}{NMSE}{normalized mean squared error}
\newacronym{vna}{VNA}{vector network analyzer}
\newacronym{cmos}{CMOS}{complementary metal-oxide-semiconductor}
\newacronym{snr}{SNR}{signal-to-noise ratio}
\newacronym{doa}{DoA}{direction of arrival}

\newtheorem{definition}{Definition}
\newtheorem{lemma}{Lemma}
\newtheorem{proposition}{Proposition}
\newtheorem{corollary}{Corollary}
\newtheorem{theorem}{Theorem}

\begin{document}
\bstctlcite{BSTcontrol}

\title{Analog Computing with\\Hybrid Couplers and Phase Shifters}

\author{Matteo~Nerini*,~\IEEEmembership{Senior~Member,~IEEE},
        Xuekang~Liu*,~\IEEEmembership{Member,~IEEE},
        Bruno~Clerckx,~\IEEEmembership{Fellow,~IEEE}

\thanks{* Contributed equally (Corresponding author: Xuekang Liu).}
\thanks{This work has been supported in part by UKRI under Grant EP/Y004086/1, EP/X040569/1, EP/Y037197/1, EP/X04047X/1, EP/Y037243/1.}
\thanks{The authors are with the Department of Electrical and Electronic Engineering, Imperial College London, SW7 2AZ London, U.K. (e-mail: \{m.nerini20, x.liu1, b.clerckx\}@imperial.ac.uk).}}

\maketitle

\begin{abstract}
Analog computing with microwave signals can enable exceptionally fast computations, potentially surpassing the limits of conventional digital computing.
For example, by letting some input signals propagate through a linear microwave network and reading the corresponding output signals, we can instantly compute a matrix-vector product without any digital operations.
In this paper, we investigate the computational capabilities of linear microwave networks made exclusively of two low-cost and fundamental components: hybrid couplers and phase shifters, which are both implementable in microstrip.
We derive a sufficient and necessary condition characterizing the class of linear transformations that can be computed in the analog domain using these two components.
Within this class, we identify three transformations of particular relevance to signal processing, namely the \gls{dft}, the Hadamard transform, and the Haar transform.
For each of these, we provide a systematic design method to construct networks of hybrid couplers and phase shifters capable of computing the transformation for any size power of two.
To validate our theoretical results, a hardware prototype was designed and fabricated, integrating hybrid couplers and phase shifters to implement the $4\times4$ \gls{dft}.
A systematic calibration procedure was subsequently developed to characterize the prototype and compensate for fabrication errors.
Measured results from the prototype demonstrate successful \gls{dft} computation in the analog domain, showing high correlation with theoretical expectations.
By realizing an analog computer through standard microwave components, this work demonstrates a practical pathway toward low-latency, real-time analog signal processing.
\end{abstract}

\glsresetall

\begin{IEEEkeywords}
Analog computing, discrete Fourier transform (DFT), hybrid couplers, microwave linear analog computer (MiLAC), phase shifters.
\end{IEEEkeywords}

\section{Introduction}

The growth of data-intensive applications is imposing increasingly stringent requirements on computing systems.
Conventional digital computers are fundamentally constrained by factors such as the clock frequency, costly analog-to-digital conversion, and high power consumption, and therefore are pushed to their limits \cite{rab02}.
As a promising alternative paradigm, analog computing has re-emerged as a hardware accelerator for computation-intensive linear algebraic operations.
By performing computation directly on the physical carrier of information, such as voltages and currents, analog computing can bypass the sampling-rate bottleneck imposed by \glspl{adc} and offer faster (with near-zero-latency) and more energy-efficient computations.
As a result, it is an attractive complement to digital processors for future communication and sensing systems \cite{cal13,sil14,zan21,iel18}.

An important form of analog computing leverages \gls{em} signals in the microwave regime, which are extensively used in communication systems and benefit from mature, well-characterized hardware.
Interestingly, microwave signals naturally interfere with each other during propagation, allowing the computation of specific operations ``for free'' in the analog domain.
Since microwave signals commonly propagate in linear media (e.g., air) or through linear components (e.g., transmission lines, resistors, capacitors, and inductors), in this work, we focus on exploiting linear microwave networks for analog computation.
Such networks designed for the purpose of computation have been denoted as \glspl{milac} \cite{ner25-1,ner25-2}.
A \gls{milac} can be abstracted as a black box, representing the linear microwave network, that receives input signals on some ports and returns output signals on other ports, as illustrated in Fig.~\ref{fig:milac-intro}.

\begin{figure}[t]
\centering
\includegraphics[width=0.36\textwidth]{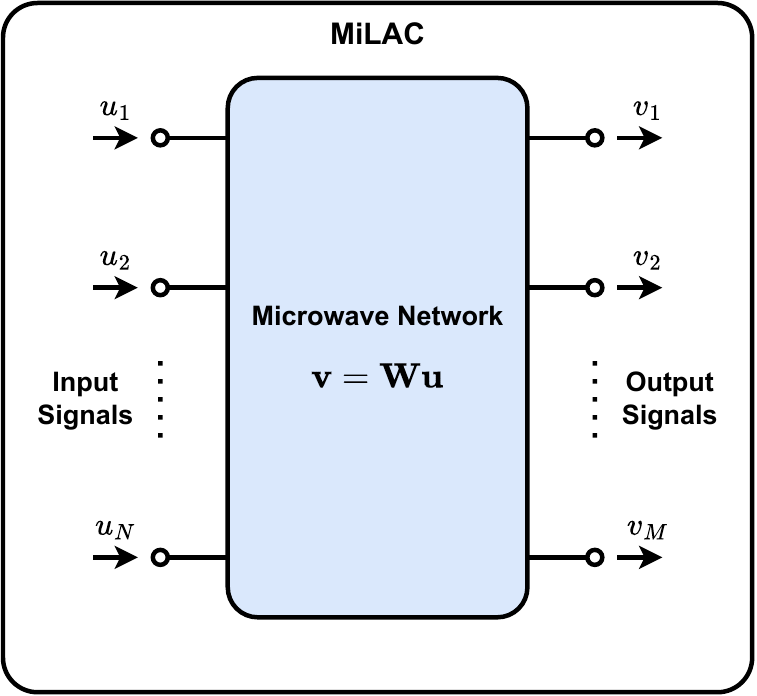}
\caption{High-level representation of a microwave linear analog computer.}
\label{fig:milac-intro}
\end{figure}

In general, say there are $N$ input and $M$ output signals, which are narrowband \gls{rf} signals.
We can write the $n$th input signal as $A_n\cos(\omega t+\varphi_n)$, with complex-valued baseband equivalent $u_n=A_ne^{j\varphi_n}\in\mathbb{C}$, and the $m$th output signal as $B_m\cos(\omega t+\psi_m)$, with complex-valued baseband equivalent $v_m=B_me^{j\psi_m}\in\mathbb{C}$.
Since the microwave network of a \gls{milac} is linear, the output vector $\mathbf{v}=[v_1,\ldots,v_M]^T$ will always be a linear function of the input vector $\mathbf{u}=[u_1,\ldots,u_N]^T$, i.e., $\mathbf{v}=\mathbf{W}\mathbf{u}$, where $\mathbf{W}\in\mathbb{C}^{M\times N}$ is a matrix which depends on the properties of the microwave network.
Therefore, the advantage of such analog computers is to compute the matrix-vector product $\mathbf{W}\mathbf{u}$ in the analog domain, as the signals propagate through the network at light speed.
Note that computing this matrix-vector product on a digital computer would require a number of operations growing with $NM$, i.e., a computational complexity $\mathcal{O}(NM)$.
Besides, no digital operation is required by a \gls{milac}, leading to a computational complexity of $\mathcal{O}(1)$.
This can significantly accelerate matrix-vector products in wireless multi-antenna systems (e.g., communications or radar sensing), where microwave signals can be linearly precoded or combined directly in the analog domain.

While the input vector $\mathbf{u}$ can be arbitrarily set by shaping the input microwave signals, the matrix $\mathbf{W}$ is constrained depending on the specific implementation of the microwave network.
In this work, we investigate the computational capabilities of microwave networks realized exclusively with two low-cost and fundamental linear components: hybrid couplers and phase shifters.
Our focus on these components is motivated by two considerations.
First, hybrid couplers are arguably the simplest components capable of splitting or combining two signals in a controlled manner, as they can be implemented in microstrip technology.
Second, phase shifters, realized as microstrip transmission lines, are the most basic components that enable controlled signal propagation.
These two components are well-known for their use in realizing the Butler matrix, a beamforming network that drives phased arrays and controls beam direction based on the selected input port \cite{moo64}.
Beyond the Butler matrix, our goal here is to explore what other analog computations can be performed using the same two building blocks.
Specifically, we want to answer the two questions:
\textit{For which matrices $\mathbf{W}$ can the product $\mathbf{W}\mathbf{u}$ be computed in the analog domain using only hybrid couplers and phase shifters?
And among these matrices, are there any of practical usefulness?}
This work provides exhaustive answers to both questions, as detailed in the following contributions.

\textit{First}, we characterize all the matrices $\mathbf{W}$ for which the product $\mathbf{W}\mathbf{u}$ can be computed in the analog domain using exclusively hybrid couplers and phase shifters.
This is achieved by providing a sufficient and necessary condition for such matrices.
Such a condition on $\mathbf{W}$ is that it can be decomposed as a product of $2L+1$ matrices, being alternatively permutation matrices and block diagonal matrices with block sizes $2\times2$ or $1\times1$, with special requirements on those blocks.
This main result is presented in Section~\ref{sec:main} and proved in Section~\ref{sec:proof}.

\textit{Second}, we analytically show that exists a microwave network made of hybrid couplers and phase shifters can compute the \gls{dft} of its input signal $\mathbf{u}$.\footnote{Note that the analog-domain \gls{dft} mentioned here is $\mathbf{F}\mathbf{u}$, where $\mathbf{F}$ is the \gls{dft} matrix, which is different from the related concept of \gls{rtft} \cite{cal13,yan25}.
While the \gls{dft} $\mathbf{F}\mathbf{u}$ operates on the baseband equivalent representations of multiple microwave signals $u_1,\ldots,u_N$, the \gls{rtft} performs on-the-fly the Fourier transform of a microwave signal $u(t)$, returning an output signal $U(t)$ whose shape in time corresponds to the Fourier transform of $u(t)$.}
This is established by showing that the \gls{dft} matrix can be decomposed as required by the sufficient and necessary condition on $\mathbf{W}$ of the first contribution (Section~\ref{sec:dft}).
Furthermore, we derive a systematic procedure to design the network of hybrid couplers and phase shifters that computes the $N\times N$ \gls{dft}, for any $N$ power of two.
We also provide four examples of such networks that compute the the $N\times N$ \gls{dft} with $N\in\{2,4,8,16\}$ (Section~\ref{sec:examples}).

\textit{Third}, we analytically show that there exists a microwave network made of hybrid couplers and phase shifters that can compute the $N\times N$ Hadamard transform of its input signal $\mathbf{u}$, for any $N$ power of two (Section~\ref{sec:hada}).
Similar to what is done for the \gls{dft}, we also provide a procedure to systematically design such networks for any valid transform size $N$.

\textit{Fourth}, we analytically demonstrate that there exists a microwave network composed of hybrid couplers and phase shifters that can realize the $N\times N$ Haar transform of an input signal $\mathbf{u}$, for any $N$ power of two (Section~\ref{sec:haar}).
We also present a systematic design procedure enabling the construction of such networks for any valid size $N$.
Applications of the proposed analog \gls{dft}, Hadamard transform, and Haar transform are provided in Section~\ref{sec:app}.

\textit{Fifth}, we fabricate a hardware prototype to validate the presented analytical results (Section~\ref{sec:experiment}).
Specifically, we implement in microstrip technology a network of hybrid couplers and phase shifters that computes the $4\times4$ \gls{dft} of its input signals.
We also develop a calibration procedure to mitigate fabrication imperfections.
Experimental measurements confirm successful analog-domain \gls{dft} computation, exhibiting strong agreement with theoretical predictions.

We review related literature in Section~\ref{sec:work}, and conclude this work and present directions for future research in Section~\ref{sec:conclusion}.

\textit{Notation}: Vectors and matrices are denoted with bold lower and bold upper letters, respectively.
Scalars are represented with letters not in bold font.
$\mathbf{a}^T$, $[\mathbf{a}]_{i}$, and $\Vert\mathbf{a}\Vert$ refer to the transpose, the $i$th element, and the $\ell_2$-norm of a vector $\mathbf{a}$, respectively.
$\mathbf{A}^T$, $[\mathbf{A}]_{i,k}$, and $\Vert\mathbf{A}\Vert_F$ refer to the transpose, the $(i,k)$th element, and the Frobenius norm of a matrix $\mathbf{A}$, respectively.
$\mathbb{N}$ and $\mathbb{C}$ denote the natural and complex number sets, respectively.
$j=\sqrt{-1}$ denotes the imaginary unit.
$\mathbf{0}_{N\times M}$, $\mathbf{0}_{N}$, and $\mathbf{I}_{N}$ denote the all-zero $N\times M$ matrix, the all-zero $N\times N$ matrix, and the $N\times N$ identity matrix, respectively.
diag$(a_1,\ldots,a_N)$ refers to a diagonal matrix with diagonal entries being $a_1,\ldots,a_N$, while diag$(\mathbf{A}_1,\ldots,\mathbf{A}_N)$ refers to a block diagonal matrix with blocks being $\mathbf{A}_1,\ldots,\mathbf{A}_N$.
$\mathbf{A}\otimes\mathbf{B}$ denotes the Kronecker product of matrices $\mathbf{A}$ and $\mathbf{B}$.

\section{System Model}
\label{sec:model}

The goal of this paper is to determine what computations can be performed in the analog domain by a network of hybrid couplers and phase shifters.
Therefore, we begin by explaining how a generic linear microwave network performs computations by processing signals at its input ports and producing corresponding signals at its output ports.
In this section, we review the fundamentals of \glspl{milac} \cite{ner25-1,ner25-2}.

Consider a multiport microwave network having $2N$ ports, where the first $N$ are designed as input ports and the second $N$ as output ports, as depicted in Fig.~\ref{fig:milac}.
While the number of input and output ports can be different in general, we only consider microwave networks having an equal number of input and output ports.
This is a property of all networks made exclusively of hybrid couplers and phase shifters, as it will be clear in Section~\ref{sec:main}.
The input is applied by $N$ voltage sources with their series impedance $Z_0$, e.g., $Z_0=50~\Omega$, connected to the input ports of the microwave network.
Denoting as $u_{n}\in\mathbb{C}$ the complex-valued baseband equivalent of the voltage of the $n$th source, the input vector is given by $\mathbf{u}=[u_{1},\ldots,u_{N}]^T\in\mathbb{C}^{N\times1}$.
The output is given by the voltages at the other $N$ ports, which are read with probes with input impedance $Z_0$.
For a linear microwave network, the output will always be a linear function of the input $\mathbf{u}$, depending on the properties of the network.
Interestingly, such a linear function can be computed in the analog domain with no digital operations required by applying the desired input voltages and reading the corresponding output voltages.

\begin{figure}[t]
\centering
\includegraphics[width=0.44\textwidth]{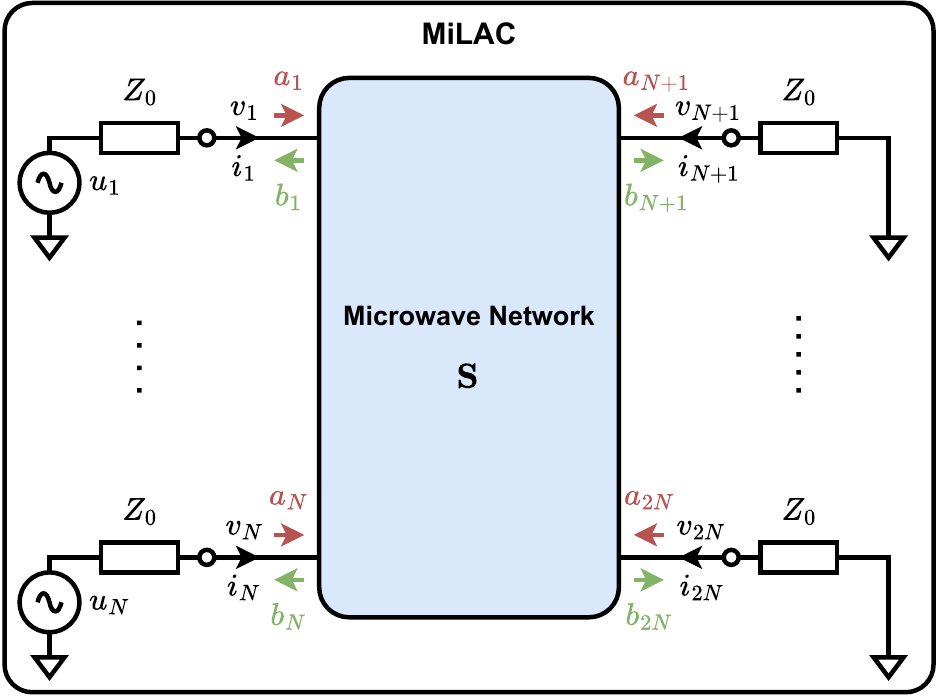}
\caption{Model of a microwave linear analog computer.}
\label{fig:milac}
\end{figure}

In the following, we introduce the quantities required to determine the expression of the linear mapping between the input and output of a \gls{milac}.
At the $n$th \gls{milac} port, we denote the voltage as $v_n\in\mathbb{C}$, the current as $i_n\in\mathbb{C}$ (with direction entering the network), the incident wave as $a_n\in\mathbb{C}$, and the reflected wave as $b_n\in\mathbb{C}$.
All these signals are complex-valued baseband equivalents of the corresponding narrowband \gls{rf} signals.
We collect these quantities at the $2N$ ports into four vectors
$\mathbf{v}=[v_1,\ldots,v_{2N}]^T\in\mathbb{C}^{2N\times 1}$, $\mathbf{i}=[i_1,\ldots,i_{2N}]^T\in\mathbb{C}^{2N\times 1}$, $\mathbf{a}=[a_1,\ldots,a_{2N}]^T\in\mathbb{C}^{2N\times 1}$, $\mathbf{b}=[b_1,\ldots,b_{2N}]^T\in\mathbb{C}^{2N\times 1}$, which are partitioned as
\begin{equation}
\mathbf{v}=\begin{bmatrix}\mathbf{v}_{1}\\\mathbf{v}_{2}\end{bmatrix},\;
\mathbf{i}=\begin{bmatrix}\mathbf{i}_{1}\\\mathbf{i}_{2}\end{bmatrix},\;
\mathbf{a}=\begin{bmatrix}\mathbf{a}_{1}\\\mathbf{a}_{2}\end{bmatrix},\;
\mathbf{b}=\begin{bmatrix}\mathbf{b}_{1}\\\mathbf{b}_{2}\end{bmatrix},
\end{equation}
where $\mathbf{v}_{1},\mathbf{i}_{1},\mathbf{a}_{1},\mathbf{b}_{1}\in\mathbb{C}^{N\times 1}$ are the electrical quantities at the input ports and $\mathbf{v}_{2},\mathbf{i}_{2},\mathbf{a}_{2},\mathbf{b}_{2}\in\mathbb{C}^{N\times 1}$ at the output ports.
The relationships between them are given by
\begin{equation}
\mathbf{v}=\mathbf{a}+\mathbf{b},\;
\mathbf{i}=\frac{\mathbf{a}-\mathbf{b}}{Z_0},
\end{equation}
according to multiport network theory \cite[Chapter~4]{poz12}.
Furthermore, the $2N$-port \gls{milac} network can be characterized by its scattering matrix $\mathbf{S}\in\mathbb{C}^{2N\times 2N}$, partitioned as
\begin{equation}
\mathbf{S}=
\begin{bmatrix}
\mathbf{S}_{11} & \mathbf{S}_{12}\\
\mathbf{S}_{21} & \mathbf{S}_{22}
\end{bmatrix},
\end{equation}
where $\mathbf{S}_{XY}\in\mathbb{C}^{N\times N}$, for $X,Y\in\{1,2\}$, which relates the incident and reflected waves $\mathbf{a}$ and $\mathbf{b}$ through $\mathbf{b}=\mathbf{S}\mathbf{a}$,
by definition of scattering matrix \cite[Chapter~4]{poz12}.

We now have all the tools to derive the expression of the output $\mathbf{v}_{2}$ as a function of the input $\mathbf{u}$ and the scattering matrix $\mathbf{S}$ of the \gls{milac} network.
At the input ports, $\mathbf{v}_1$ and $\mathbf{i}_1$ are related by $\mathbf{v}_1=\mathbf{u}-Z_0\mathbf{i}_1$, according to Ohm's law.
In addition, by substituting $\mathbf{v}_1=\mathbf{a}_1+\mathbf{b}_1$ and $\mathbf{i}_1=(\mathbf{a}_1-\mathbf{b}_1)/Z_0$ into $\mathbf{v}_1=\mathbf{u}-Z_0\mathbf{i}_1$, we obtain
$\mathbf{a}_1=\mathbf{u}/2$.
At the output ports, $\mathbf{v}_2$ and $\mathbf{i}_2$ are related by $\mathbf{v}_2=-Z_0\mathbf{i}_2$, by Ohm's law.
Hence, by substituting $\mathbf{v}_2=\mathbf{a}_2+\mathbf{b}_2$ and $\mathbf{i}_2=(\mathbf{a}_2-\mathbf{b}_2)/Z_0$ into $\mathbf{v}_2=-Z_0\mathbf{i}_2$, we obtain
$\mathbf{a}_2=\mathbf{0}_{N\times1}$, and $\mathbf{b}_2=\mathbf{v}_2$.
The physical meaning of $\mathbf{a}_2=\mathbf{0}_{N\times1}$ is that there are no reflected waves from the probes on the output ports, which is expected since we assume their input impedance to be matched to $Z_0$.
By using $\mathbf{a}_2=\mathbf{0}_{N\times1}$, the relationship $\mathbf{b}=\mathbf{S}\mathbf{a}$ gives
\begin{equation}
\mathbf{b}_2=\mathbf{S}_{21}\mathbf{a}_1.
\end{equation}
By further applying $\mathbf{b}_2=\mathbf{v}_2$ and $\mathbf{a}_1=\mathbf{u}/2$, we finally obtain
\begin{equation}
\mathbf{v}_2=\frac{1}{2}\mathbf{S}_{21}\mathbf{u},
\end{equation}
indicating that the output $\mathbf{v}_2$ is a linear function of the input $\mathbf{u}$ depending on the block $\mathbf{S}_{21}$ of the scattering matrix of the \gls{milac} network.
Note that this has been obtained with no assumptions on the \gls{milac} network, except linearity, which is an assumption required by the scattering matrix representation.

\section{Main Result}
\label{sec:main}

We have shown that a \gls{milac} can process linearly an input $\mathbf{u}$ and return $\mathbf{v}_2=\mathbf{S}_{21}\mathbf{u}/2$, depending of the block $\mathbf{S}_{21}$ of its scattering matrix.
Since our goal is to characterize what functions can be computed with a \gls{milac} made of hybrid couplers and phase shifters, this means that we want to characterize all the possible blocks $\mathbf{S}_{21}$ of such a \gls{milac}.
To characterize all such blocks $\mathbf{S}_{21}$, we adopt the same approach that Claude Shannon used to characterize all systems of differential equations solvable by a differential analyzer, a famous mechanical analog computer conceived by Lord Kelvin and first constructed by Vannevar Bush \cite{sha41}.
We first provide a precise mathematical definition of all available components.
Next, we define how these components can be interconnected.
Finally, we derive a necessary and sufficient condition under which a function can be computed with a \gls{milac} composed of hybrid couplers and phase shifters.
We begin by formally introducing hybrid couplers and phase shifters, along with two additional components required to interconnect them to each other.
These four components are introduced by presenting their scattering matrix, as this is sufficient to fully characterize their behavior.

\begin{figure}[t]
\centering
\includegraphics[width=0.36\textwidth]{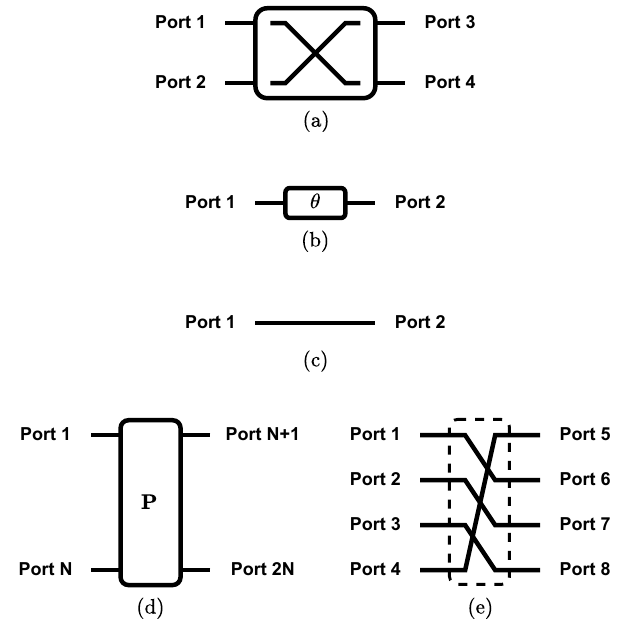}
\caption{Symbols of (a) a hybrid coupler, (b) a phase shifter, (c) an interconnection, (d) a permutation network, and (e) implementation of the permutation network having permutation matrix $\mathbf{P}$ given by \eqref{eq:Pexample}.}
\label{fig:comp}
\end{figure}

Hybrid couplers and phase shifters are defined as follows.
\begin{definition}
(Hybrid coupler)
We define a hybrid coupler as a $4$-port network with scattering matrix $\mathbf{S}\in\mathbb{C}^{4\times 4}$ given by
\begin{equation}
\mathbf{S}=\frac{1}{\sqrt{2}}
\begin{bmatrix}
0 & 0 & j & 1\\
0 & 0 & 1 & j\\
j & 1 & 0 & 0\\
1 & j & 0 & 0
\end{bmatrix}.
\end{equation}
\end{definition}
According to this definition, ports 1 and 2 are the input ports of the hybrid coupler while ports 3 and 4 are the output ports, as represented in Fig.~\ref{fig:comp}(a).
The signal reflected at port 3 $b_3\in\mathbb{C}$ is a combination of the incident signals at port 1 $a_1\in\mathbb{C}$ and port 2 $a_2\in\mathbb{C}$, given by $b_3=(ja_1+a_2)/\sqrt{2}$.
Similarly, the signal reflected at port 4 $b_4\in\mathbb{C}$ is a combination of the incident signals at ports 1 and 2, as $b_4=(a_1+ja_2)/\sqrt{2}$.
Hence, the role of a hybrid coupler is to split the signals at the input ports into two components (the in-phase and quadrature components) and return them on the output ports.
\begin{definition}
(Phase shifter)
We define a phase shifter as a $2$-port network with scattering matrix $\mathbf{S}\in\mathbb{C}^{2\times 2}$ given by
\begin{equation}
\mathbf{S}=
\begin{bmatrix}
0 & e^{j\theta}\\
e^{j\theta} & 0
\end{bmatrix},
\end{equation}
where $\theta\in[0,2\pi)$ is the phase shift of the phase shifter.
\end{definition}
Following this definition, port 1 is the input port of the phase shifter while port 2 is the output port, as represented in Fig.~\ref{fig:comp}(b), and the role of a phase shifter is to shift the input signal by a phase $\theta$ and return it on the output port.\footnote{Recall that negative values of the phase shift $\theta$ are mathematically equivalent to their corresponding positive values modulo $2\pi$, which are more commonly used in transmission line implementations.}
Both hybrid couplers and phase shifters have scattering matrices that are unitary and symmetric, which means that they are lossless and reciprocal networks \cite[Chapter~4]{poz12}.

In addition to hybrid couplers and phase shifters, we also formally introduce the concepts of interconnection and permutation network, needed to interconnect the ports of hybrid couplers and phase shifters to each other. 
\begin{definition}
(Interconnection)
We define an interconnection as a $2$-port network with scattering matrix $\mathbf{S}$ given by
\begin{equation}
\mathbf{S}=
\begin{bmatrix}
0 & 1\\
1 & 0
\end{bmatrix}.
\end{equation}
\end{definition}
An interconnection, represented in Fig.~\ref{fig:comp}(c), can be used to interconnect the output of a component (hybrid coupler or phase shifter) to the input of another component.
In the following analysis, we will regard the interconnection as a third component that can be used to implement a \gls{milac}, since this simplifies the theoretical analysis.
Nevertheless, an interconnection can merely be seen as a special case of a phase shifter, where the imposed phase shift is $\theta=0$.
A set of $N$ interconnections between $N$ ports and other $N$ ports can be regarded as a permutation network, defined as follows.
\begin{definition}
(Permutation network)
We define a permutation network with permutation matrix $\mathbf{P}\in\{0,1\}^{N\times N}$ as a $2N$-port network with scattering matrix $\mathbf{S}\in\mathbb{C}^{2N\times 2N}$ given by
\begin{equation}
\mathbf{S}=
\begin{bmatrix}
\mathbf{0}_N & \mathbf{P}^T\\
\mathbf{P} & \mathbf{0}_N
\end{bmatrix}.
\end{equation}
\end{definition}
In a permutation network, port $i$ and port $N+k$ are interconnected if and only if $[\mathbf{P}]_{i,k}=1$, for $i,k=1,\ldots,N$, otherwise they are disconnected.
Hence, a permutation network only reorders its input signals without performing any operation on them.
The symbol of a generic permutation network is represented in Fig.~\ref{fig:comp}(d), while an example of permutation network is illustrated in Fig.~\ref{fig:comp}(e), with permutation matrix
\begin{equation}
\mathbf{P}=
\begin{bmatrix}
0 & 0 & 0 & 1\\
1 & 0 & 0 & 0\\
0 & 1 & 0 & 0\\
0 & 0 & 1 & 0
\end{bmatrix}.\label{eq:Pexample}
\end{equation}
It follows from the definitions of interconnection and permutation network that they are lossless and reciprocal components since their scattering matrices are unitary and symmetric.

Given these four components, more complicated networks can be constructed by interconnecting them in series or in parallel.
We regard two components as connected in series when the output ports of the first are connected to the input ports of the second.
The series between two networks having scattering matrices $\mathbf{Q}\in\mathbb{C}^{2N\times 2N}$ and $\mathbf{R}\in\mathbb{C}^{2N\times 2N}$ is visually shown in Fig.~\ref{fig:series}.
Besides, two components are connected in parallel when their input and output ports become the input and output ports of a larger network, respectively.
The parallel between two networks having scattering matrices $\mathbf{Q}\in\mathbb{C}^{2N\times 2N}$ and $\mathbf{R}\in\mathbb{C}^{2M\times 2M}$ is visually shown in Fig.~\ref{fig:parallel}.
Our goal is hence to characterize all the possible networks that can be constructed by connecting an arbitrarily large, but finite, number of components (hybrid couplers, phase shifters, interconnections, and permutation networks).
These components can be either connected in series or in parallel with each other.
We refer to these networks as \textit{network implementable with hybrid couplers and phase shifters}, formally defined through the following inductive definition.
\begin{definition}
A network is implementable with hybrid couplers and phase shifters if any of the following statements are true
\begin{enumerate}
\item The network is a hybrid coupler,
\item The network is a phase shifter,
\item The network is an interconnection,
\item The network is a permutation network,
\item The network is a series of two networks, both implementable with hybrid couplers and phase shifters,
\item The network is a parallel of two networks, both implementable with hybrid couplers and phase shifters.
\end{enumerate}
\end{definition}

\begin{figure}[t]
\centering
\includegraphics[width=0.36\textwidth]{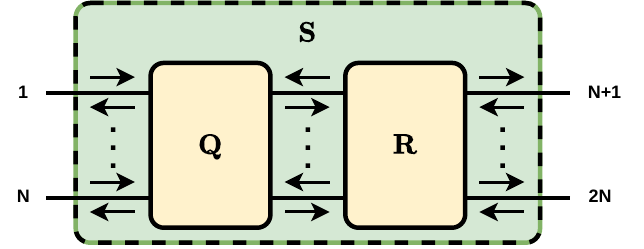}
\caption{Series of two networks having scattering matrices $\mathbf{Q}$ and $\mathbf{R}$.}
\label{fig:series}
\end{figure}

\begin{figure}[t]
\centering
\includegraphics[width=0.30\textwidth]{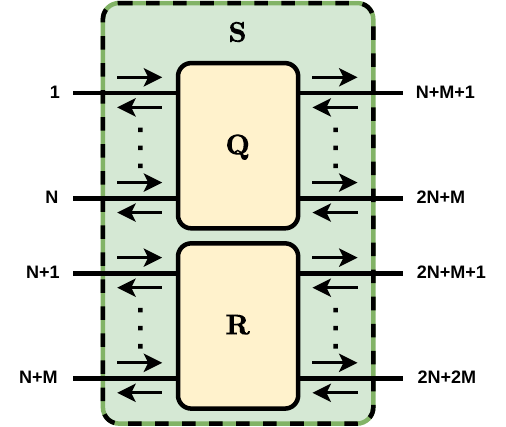}
\caption{Parallel of two networks having scattering matrices $\mathbf{Q}$ and $\mathbf{R}$.}
\label{fig:parallel}
\end{figure}

Characterizing all possible networks implementable with hybrid couplers and phase shifters means deriving a general expression for their scattering matrices.
Hence, we introduce the following two propositions that provide the expressions of the scattering matrices of the series and parallel of two networks with given scattering matrices.
\begin{proposition}
(Series of networks)
Consider a series system consisting of a first $2N$-port network with scattering matrix $\mathbf{Q}\in\mathbb{C}^{2N\times 2N}$ and a second $2N$-port network with scattering matrix $\mathbf{R}\in\mathbb{C}^{2N\times 2N}$, partitioned as
\begin{equation}
\mathbf{Q}=
\begin{bmatrix}
\mathbf{Q}_{11} & \mathbf{Q}_{12}\\
\mathbf{Q}_{21} & \mathbf{Q}_{22}
\end{bmatrix},\;
\mathbf{R}=
\begin{bmatrix}
\mathbf{R}_{11} & \mathbf{R}_{12}\\
\mathbf{R}_{21} & \mathbf{R}_{22}
\end{bmatrix},\label{eq:QR}
\end{equation}
where $\mathbf{Q}_{XY},\mathbf{R}_{XY}\in\mathbb{C}^{N\times N}$, for $X,Y\in\{1,2\}$.
In this series, the last $N$ ports of the first network are individually connected to the first $N$ ports of the second network.
Thus, the whole series network can be regarded as a $2N$-port network, with scattering matrix $\mathbf{S}\in\mathbb{C}^{2N\times 2N}$ partitioned as
\begin{equation}
\mathbf{S}=
\begin{bmatrix}
\mathbf{S}_{11} & \mathbf{S}_{12}\\
\mathbf{S}_{21} & \mathbf{S}_{22}
\end{bmatrix},\label{eq:S}
\end{equation}
where
\begin{gather}
\mathbf{S}_{11}=\mathbf{Q}_{11}+\mathbf{Q}_{12}\left(\mathbf{I}_N-\mathbf{R}_{11}\mathbf{Q}_{22}\right)^{-1}\mathbf{R}_{11}\mathbf{Q}_{21},\\
\mathbf{S}_{12}=\mathbf{Q}_{12}\left(\mathbf{I}_N-\mathbf{R}_{11}\mathbf{Q}_{22}\right)^{-1}\mathbf{R}_{12},\\
\mathbf{S}_{21}=\mathbf{R}_{21}\left(\mathbf{I}_N-\mathbf{Q}_{22}\mathbf{R}_{11}\right)^{-1}\mathbf{Q}_{21},\\
\mathbf{S}_{22}=\mathbf{R}_{22}+\mathbf{R}_{21}\left(\mathbf{I}_N-\mathbf{Q}_{22}\mathbf{R}_{11}\right)^{-1}\mathbf{Q}_{22}\mathbf{R}_{12}.
\end{gather}
\label{pro:ser}
\end{proposition}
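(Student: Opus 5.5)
To prove Proposition~\ref{pro:ser}, I will write the scattering relations of the two networks separately, impose the constraints created by the $N$ internal connections, and eliminate the internal waves. Let $\mathbf{a}_Q=[\mathbf{a}_{Q1}^T,\mathbf{a}_{Q2}^T]^T$ and $\mathbf{b}_Q=[\mathbf{b}_{Q1}^T,\mathbf{b}_{Q2}^T]^T$ be the incident and reflected waves of the first network, so that $\mathbf{b}_Q=\mathbf{Q}\mathbf{a}_Q$. Similarly let $\mathbf{a}_R,\mathbf{b}_R$ be those of the second network, with $\mathbf{b}_R=\mathbf{R}\mathbf{a}_R$. The external ports of the series network are the first $N$ ports of $\mathbf{Q}$ and the last $N$ ports of $\mathbf{R}$. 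Hence the overall waves are $\mathbf{a}=[\mathbf{a}_{Q1}^T,\mathbf{a}_{R2}^T]^T$ and $\mathbf{b}=[\mathbf{b}_{Q1}^T,\mathbf{b}_{R2}^T]^T$.

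The key physical step is the connection constraint. When port $N+n$ of the first network is directly connected to port $n$ of the second, voltage is continuous and the current leaving one network enters the other. Using $\mathbf{v}=\mathbf{a}+\mathbf{b}$ and $\mathbf{i}=(\mathbf{a}-\mathbf{b})/Z_0$ with a common reference impedance $Z_0$, this gives $\mathbf{a}_{Q2}+\mathbf{b}_{Q2}=\mathbf{a}_{R1}+\mathbf{b}_{R1}$ and $\mathbf{a}_{Q2}-\mathbf{b}_{Q2}=-(\mathbf{a}_{R1}-\mathbf{b}_{R1})$. Adding and subtracting these yields $\mathbf{a}_{R1}=\mathbf{b}_{Q2}$ and $\mathbf{a}_{Q2}=\mathbf{b}_{R1}$: the wave leaving one network is the wave entering the other.

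Next I will solve the internal equations
\begin{equation}
\mathbf{b}_{Q2}=\mathbf{Q}_{21}\mathbf{a}_{Q1}+\mathbf{Q}_{22}\mathbf{b}_{R1},\quad
\mathbf{b}_{R1}=\mathbf{R}_{11}\mathbf{b}_{Q2}+\mathbf{R}_{12}\mathbf{a}_{R2}.
\end{equation}
Substituting the second into the first gives $(\mathbf{I}_N-\mathbf{Q}_{22}\mathbf{R}_{11})\mathbf{b}_{Q2}=\mathbf{Q}_{21}\mathbf{a}_{Q1}+\mathbf{Q}_{22}\mathbf{R}_{12}\mathbf{a}_{R2}$. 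Substituting the first into the second gives $(\mathbf{I}_N-\mathbf{R}_{11}\mathbf{Q}_{22})\mathbf{b}_{R1}=\mathbf{R}_{11}\mathbf{Q}_{21}\mathbf{a}_{Q1}+\mathbf{R}_{12}\mathbf{a}_{R2}$. I will then insert these into the external relations $\mathbf{b}_{Q1}=\mathbf{Q}_{11}\mathbf{a}_{Q1}+\mathbf{Q}_{12}\mathbf{b}_{R1}$ and $\mathbf{b}_{R2}=\mathbf{R}_{21}\mathbf{b}_{Q2}+\mathbf{R}_{22}\mathbf{a}_{R2}$. Reading off the coefficients of $\mathbf{a}_{Q1}$ and $\mathbf{a}_{R2}$ directly gives the four blocks $\mathbf{S}_{11},\mathbf{S}_{12},\mathbf{S}_{21},\mathbf{S}_{22}$ exactly as stated.

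The main obstacle is invertibility of $\mathbf{I}_N-\mathbf{R}_{11}\mathbf{Q}_{22}$, which the proposition implicitly assumes. Its invertibility is equivalent to that of $\mathbf{I}_N-\mathbf{Q}_{22}\mathbf{R}_{11}$, since $\det(\mathbf{I}-\mathbf{A}\mathbf{B})=\det(\mathbf{I}-\mathbf{B}\mathbf{A})$. I will state it as a well-posedness assumption: the series interconnection has a unique scattering matrix. For lossless networks it can fail only in degenerate resonant cases, and for the components used later the reflection blocks are zero, so the inverse is trivially the identity. I will also remark that the two forms of the inverse appearing in the statement are consistent by the push-through identity $(\mathbf{I}-\mathbf{A}\mathbf{B})^{-1}\mathbf{A}=\mathbf{A}(\mathbf{I}-\mathbf{B}\mathbf{A})^{-1}$, which reconciles the placement of the factors $\mathbf{Q}_{22}$ and $\mathbf{R}_{11}$.
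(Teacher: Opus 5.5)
Your proposal is correct and follows essentially the same route as the paper's proof in Appendix~A. Both arguments identify the wave leaving one network with the wave entering the other, solve the two internal equations for the interconnection waves using $(\mathbf{I}_N-\mathbf{Q}_{22}\mathbf{R}_{11})^{-1}$ and $(\mathbf{I}_N-\mathbf{R}_{11}\mathbf{Q}_{22})^{-1}$, and substitute into the external relations. You also add two details the paper leaves implicit: you derive $\mathbf{a}_{R1}=\mathbf{b}_{Q2}$ and $\mathbf{a}_{Q2}=\mathbf{b}_{R1}$ from voltage and current continuity, and you state invertibility as an explicit well-posedness assumption.
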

\begin{proof}
Please, refer to Appendix~A.
\end{proof}
\begin{proposition}
(Parallel of networks)
Consider a parallel system consisting of a first $2N$-port network with scattering matrix $\mathbf{Q}\in\mathbb{C}^{2N\times 2N}$ partitioned as in \eqref{eq:QR}, where $\mathbf{Q}_{XY}\in\mathbb{C}^{N\times N}$, for $X,Y\in\{1,2\}$, and a second $2M$-port network with scattering matrix $\mathbf{R}\in\mathbb{C}^{2M\times 2M}$ partitioned as in \eqref{eq:QR}, where $\mathbf{R}_{XY}\in\mathbb{C}^{M\times M}$, for $X,Y\in\{1,2\}$.
Thus, the whole parallel network can be regarded as a $(2N+2M)$-port network, with scattering matrix $\mathbf{S}\in\mathbb{C}^{(2N+2M)\times (2N+2M)}$ partitioned as in \eqref{eq:S}, where
\begin{gather}
\mathbf{S}_{11}=
\begin{bmatrix}
\mathbf{Q}_{11} & \mathbf{0}_{N\times M}\\
\mathbf{0}_{M\times N} & \mathbf{R}_{11}
\end{bmatrix},\;
\mathbf{S}_{12}=
\begin{bmatrix}
\mathbf{Q}_{12} & \mathbf{0}_{N\times M}\\
\mathbf{0}_{M\times N} & \mathbf{R}_{12}
\end{bmatrix},\\
\mathbf{S}_{21}=
\begin{bmatrix}
\mathbf{Q}_{21} & \mathbf{0}_{N\times M}\\
\mathbf{0}_{M\times N} & \mathbf{R}_{21}
\end{bmatrix},\;
\mathbf{S}_{22}=
\begin{bmatrix}
\mathbf{Q}_{22} & \mathbf{0}_{N\times M}\\
\mathbf{0}_{M\times N} & \mathbf{R}_{22}
\end{bmatrix}.
\end{gather}
\label{pro:par}
\end{proposition}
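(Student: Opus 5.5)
The plan is to work directly from the definition of the scattering matrix, $\mathbf{b}=\mathbf{S}\mathbf{a}$, applied to each sub-network separately, and then reassemble the port quantities of the combined $(2N+2M)$-port network. In the parallel connection no port of the first network is connected to any port of the second, so the two networks do not interact. The only work is to track how the ports of the combined network are ordered.

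\textbf{Step 1: label the ports.} Following the definition of parallel connection, the input ports of the combined network are the $N$ input ports of the first network followed by the $M$ input ports of the second. Likewise, the output ports are the $N$ output ports of the first followed by the $M$ output ports of the second. The incident waves of the combined network are then partitioned as $\mathbf{a}=[\mathbf{a}_1^T,\mathbf{a}_2^T]^T$, with
\begin{equation*}
\mathbf{a}_1=[\mathbf{a}_1^{Q\,T},\mathbf{a}_1^{R\,T}]^T,\quad \mathbf{a}_2=[\mathbf{a}_2^{Q\,T},\mathbf{a}_2^{R\,T}]^T,
\end{equation*}
where the superscripts $Q$ and $R$ denote the incident waves at the input and output ports of the first and second networks. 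The reflected waves $\mathbf{b}$ are partitioned in the same way.

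\textbf{Step 2: write the two scattering relations.} Since the networks share no ports, each one satisfies its own scattering relation unchanged:
\begin{equation*}
\mathbf{b}_X^Q=\mathbf{Q}_{X1}\mathbf{a}_1^Q+\mathbf{Q}_{X2}\mathbf{a}_2^Q,\quad \mathbf{b}_X^R=\mathbf{R}_{X1}\mathbf{a}_1^R+\mathbf{R}_{X2}\mathbf{a}_2^R,
\end{equation*}
for $X\in\{1,2\}$.

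\textbf{Step 3: stack the relations and read off the blocks.} Stacking these equations in the port order of Step 1 gives
\begin{equation*}
\mathbf{b}_X=\mathrm{diag}(\mathbf{Q}_{X1},\mathbf{R}_{X1})\,\mathbf{a}_1+\mathrm{diag}(\mathbf{Q}_{X2},\mathbf{R}_{X2})\,\mathbf{a}_2 .
\end{equation*}
This relation holds for arbitrary incident waves $\mathbf{a}$. The scattering matrix is uniquely defined by $\mathbf{b}=\mathbf{S}\mathbf{a}$, so $\mathbf{S}_{XY}=\mathrm{diag}(\mathbf{Q}_{XY},\mathbf{R}_{XY})$ for $X,Y\in\{1,2\}$, which is exactly the claimed expression.

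The only potential obstacle is the bookkeeping of the port order. The scattering matrix of the combined network is a permuted version of $\mathrm{diag}(\mathbf{Q},\mathbf{R})$, not $\mathrm{diag}(\mathbf{Q},\mathbf{R})$ itself. Step 1 fixes this by interleaving the input and output ports of the two networks explicitly, and once that is done the result follows immediately.
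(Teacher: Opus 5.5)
Your proposal is correct and follows essentially the same route as the paper's proof in Appendix~B. Like the paper, you write the two independent scattering relations of the uncoupled networks, order the combined ports as inputs of the first, inputs of the second, outputs of the first, outputs of the second, and read off the interleaved block structure of $\mathbf{S}$. You also note that the stacked relation holds for arbitrary incident waves and therefore determines $\mathbf{S}$ uniquely, which makes explicit the step the paper calls ``straightforward to recognize.''
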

\begin{proof}
Please, refer to Appendix~B.
\end{proof}

Proposition~\ref{pro:ser} and \ref{pro:par} show that the scattering matrix of the series or parallel of two networks can be a quite complicated function of the scattering matrices of the two networks.
This makes our analysis difficult.
Nevertheless, we notice that all four considered components fulfill a special property, which is preserved by their series or parallel combinations.
To formalize such a property, we introduce the concepts of \textit{matched network} and its \textit{transmission scattering matrix}.
\begin{definition}
(Matched network and transmission scattering matrix)
A $2N$-port network is referred to as matched if its scattering matrix $\mathbf{S}\in\mathbb{C}^{2N\times 2N}$ can be partitioned as
\begin{equation}
\mathbf{S}=
\begin{bmatrix}
\mathbf{0}_N & \mathbf{S}_{12}\\
\mathbf{S}_{21} & \mathbf{0}_N
\end{bmatrix},\label{eq:matched}
\end{equation}
where $\mathbf{S}_{XY}\in\mathbb{C}^{N\times N}$, for $X,Y\in\{1,2\}$.
For a matched network with scattering matrix partitioned as in \eqref{eq:matched}, we define its transmission scattering matrix $\bar{\mathbf{S}}\in\mathbb{C}^{N\times N}$ as $\bar{\mathbf{S}}=\mathbf{S}_{21}$.
\end{definition}
The concept of a matched network is particularly relevant to us since all our components, i.e., hybrid couplers, phase shifters, interconnections, and permutation networks, are matched networks by definition.
First, a hybrid coupler is a matched network with transmission scattering matrix $\bar{\mathbf{S}}\in\mathbb{C}^{2\times 2}$ given by $\bar{\mathbf{S}}=[[j,1]^T,[1,j]^T]/\sqrt{2}$.
Second, a phase shifter is a matched network with transmission scattering matrix $\bar{s}\in\mathbb{C}$ given by $\bar{s}=e^{j\theta}$.
Third, an interconnection is a matched network with transmission scattering matrix $\bar{s}=1$.
Fourth, a permutation network is a matched network with transmission scattering matrix $\bar{\mathbf{S}}=\mathbf{P}$, where $\mathbf{P}$ is its permutation matrix.
Recalling that any reciprocal network has a symmetric scattering matrix, a reciprocal matched network is fully characterized by its transmission scattering matrix since $\mathbf{S}_{12}=\mathbf{S}_{21}^T$.
Hence, the rationale of introducing the concept of transmission scattering matrix is to greatly simplify the computations without losing any information.

To clarify how the computations of scattering matrices simplify for matched networks, we introduce the following two corollaries.
\begin{corollary}
(Series of matched networks)
Consider the same series system as in Proposition~\ref{pro:ser}, and assume that the two networks are matched, i.e., $\mathbf{Q}_{11}=\mathbf{Q}_{22}=\mathbf{R}_{11}=\mathbf{R}_{22}=\mathbf{0}_N$.
In this case, we have
\begin{gather}
\mathbf{S}_{11}=\mathbf{0}_N,\;\mathbf{S}_{12}=\mathbf{Q}_{12}\mathbf{R}_{12},\\
\mathbf{S}_{21}=\mathbf{R}_{21}\mathbf{Q}_{21},\;\mathbf{S}_{22}=\mathbf{0}_N.
\end{gather}
Thus, the resulting series network is matched, and has a transmission scattering matrix
\begin{equation}
\bar{\mathbf{S}}=\bar{\mathbf{R}}\bar{\mathbf{Q}},
\end{equation}
where $\bar{\mathbf{Q}}=\mathbf{Q}_{21}$ and $\bar{\mathbf{R}}=\mathbf{R}_{21}$ are the transmission scattering matrices of the two networks in series.
\label{cor:ser}
\end{corollary}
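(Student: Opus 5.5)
The plan is a direct substitution into Proposition~\ref{pro:ser}, which we take as given. Setting $\mathbf{Q}_{11}=\mathbf{Q}_{22}=\mathbf{R}_{11}=\mathbf{R}_{22}=\mathbf{0}_N$ in its four formulas makes each product $\mathbf{R}_{11}\mathbf{Q}_{22}$ and $\mathbf{Q}_{22}\mathbf{R}_{11}$ vanish. Both inverses $\left(\mathbf{I}_N-\mathbf{R}_{11}\mathbf{Q}_{22}\right)^{-1}$ and $\left(\mathbf{I}_N-\mathbf{Q}_{22}\mathbf{R}_{11}\right)^{-1}$ therefore reduce to $\mathbf{I}_N^{-1}=\mathbf{I}_N$. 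In particular they exist, so the formulas of Proposition~\ref{pro:ser} are well defined in the matched case.

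We then read off the blocks one at a time:
\begin{itemize}
\item $\mathbf{S}_{11}=\mathbf{Q}_{11}+\mathbf{Q}_{12}\mathbf{R}_{11}\mathbf{Q}_{21}=\mathbf{0}_N$, since both $\mathbf{Q}_{11}$ and $\mathbf{R}_{11}$ vanish.
\item $\mathbf{S}_{12}=\mathbf{Q}_{12}\mathbf{R}_{12}$, because the middle inverse is the identity.
\item $\mathbf{S}_{21}=\mathbf{R}_{21}\mathbf{Q}_{21}$, for the same reason.
\item $\mathbf{S}_{22}=\mathbf{R}_{22}+\mathbf{R}_{21}\mathbf{Q}_{22}\mathbf{R}_{12}=\mathbf{0}_N$, since $\mathbf{R}_{22}$ and $\mathbf{Q}_{22}$ vanish.
\end{itemize}

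The vanishing diagonal blocks show that $\mathbf{S}$ has exactly the form \eqref{eq:matched}, so the series network is matched. By the definition of transmission scattering matrix, $\bar{\mathbf{S}}=\mathbf{S}_{21}=\mathbf{R}_{21}\mathbf{Q}_{21}=\bar{\mathbf{R}}\bar{\mathbf{Q}}$.

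There is no genuine obstacle here. The only point needing care is that the invertibility assumption implicit in Proposition~\ref{pro:ser} is automatically satisfied for matched networks, and the identity inverses settle it.
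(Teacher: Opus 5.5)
Your proposal is correct and takes the same route as the paper: substitute the zero reflection blocks into Proposition~\ref{pro:ser}, note that both inverses collapse to $\mathbf{I}_N$, and read off the matched form and $\bar{\mathbf{S}}=\bar{\mathbf{R}}\bar{\mathbf{Q}}$. Your remark that the invertibility requirement is automatically met in the matched case is a small, worthwhile addition that the paper leaves implicit.
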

\begin{proof}
This result follows directly by substituting $\mathbf{Q}_{11}=\mathbf{Q}_{22}=\mathbf{R}_{11}=\mathbf{R}_{22}=\mathbf{0}_N$ in Proposition~\ref{pro:ser}, and by applying the definitions of matched network and transmission scattering matrix.
\end{proof}
\begin{corollary}
(Parallel of matched networks)
Consider the same parallel system as in Proposition~\ref{pro:par}, and assume that the two networks are matched, i.e., $\mathbf{Q}_{11}=\mathbf{Q}_{22}=\mathbf{0}_N$ and $\mathbf{R}_{11}=\mathbf{R}_{22}=\mathbf{0}_M$.
In this case, we have
\begin{gather}
\mathbf{S}_{11}=\mathbf{0}_{N+M},\;
\mathbf{S}_{12}=
\begin{bmatrix}
\mathbf{Q}_{12} & \mathbf{0}_{N\times M}\\
\mathbf{0}_{M\times N} & \mathbf{R}_{12}
\end{bmatrix},\\
\mathbf{S}_{21}=
\begin{bmatrix}
\mathbf{Q}_{21} & \mathbf{0}_{N\times M}\\
\mathbf{0}_{M\times N} & \mathbf{R}_{21}
\end{bmatrix},\;
\mathbf{S}_{22}=\mathbf{0}_{N+M}.
\end{gather}
Thus, the resulting parallel network is matched, and has a transmission scattering matrix
\begin{equation}
\bar{\mathbf{S}}=
\begin{bmatrix}
\bar{\mathbf{Q}} & \mathbf{0}_{N\times M}\\
\mathbf{0}_{M\times N} & \bar{\mathbf{R}}
\end{bmatrix},
\end{equation}
where $\bar{\mathbf{Q}}=\mathbf{Q}_{21}$ and $\bar{\mathbf{R}}=\mathbf{R}_{21}$ are the transmission scattering matrices of the two networks in parallel.
\label{cor:par}
\end{corollary}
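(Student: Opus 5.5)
The plan is to substitute the matched-network conditions directly into the block expressions of Proposition~\ref{pro:par}, in the same way that Corollary~\ref{cor:ser} follows from Proposition~\ref{pro:ser}. Unlike the series case, no matrix inversion appears here, so the argument is purely by inspection.

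First, set $\mathbf{Q}_{11}=\mathbf{Q}_{22}=\mathbf{0}_N$ and $\mathbf{R}_{11}=\mathbf{R}_{22}=\mathbf{0}_M$ in the expressions for $\mathbf{S}_{11}$ and $\mathbf{S}_{22}$ given in Proposition~\ref{pro:par}. Each of these is then block diagonal with zero diagonal blocks, and its off-diagonal blocks are $\mathbf{0}_{N\times M}$ and $\mathbf{0}_{M\times N}$. Hence $\mathbf{S}_{11}=\mathbf{S}_{22}=\mathbf{0}_{N+M}$.

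The expressions for $\mathbf{S}_{12}$ and $\mathbf{S}_{21}$ in Proposition~\ref{pro:par} do not involve $\mathbf{Q}_{11},\mathbf{Q}_{22},\mathbf{R}_{11},\mathbf{R}_{22}$ at all. They therefore carry over unchanged, which gives exactly the stated block-diagonal forms.

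Since $\mathbf{S}_{11}$ and $\mathbf{S}_{22}$ vanish, the overall scattering matrix has the form \eqref{eq:matched}, so the parallel network is matched by definition. By the definition of the transmission scattering matrix, $\bar{\mathbf{S}}=\mathbf{S}_{21}$. Substituting $\bar{\mathbf{Q}}=\mathbf{Q}_{21}$ and $\bar{\mathbf{R}}=\mathbf{R}_{21}$ into this block then yields the claimed $\bar{\mathbf{S}}$.

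There is no real obstacle. The only point requiring care is the bookkeeping of block dimensions: the zero blocks must be sized $N\times M$ and $M\times N$ so that $\mathbf{S}_{11}$ and $\mathbf{S}_{22}$ are correctly identified as $\mathbf{0}_{N+M}$.
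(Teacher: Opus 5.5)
Your proposal is correct and matches the paper's proof. The paper likewise substitutes $\mathbf{Q}_{11}=\mathbf{Q}_{22}=\mathbf{0}_N$ and $\mathbf{R}_{11}=\mathbf{R}_{22}=\mathbf{0}_M$ into Proposition~\ref{pro:par} and then applies the definitions of matched network and transmission scattering matrix.
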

\begin{proof}
This result directly follows by substituting $\mathbf{Q}_{11}=\mathbf{Q}_{22}=\mathbf{0}_N$ and $\mathbf{R}_{11}=\mathbf{R}_{22}=\mathbf{0}_M$ in Proposition~\ref{pro:par}, and by applying the definitions of matched network and transmission scattering matrix.
\end{proof}

\begin{figure*}[t]
\centering
\includegraphics[width=0.96\textwidth]{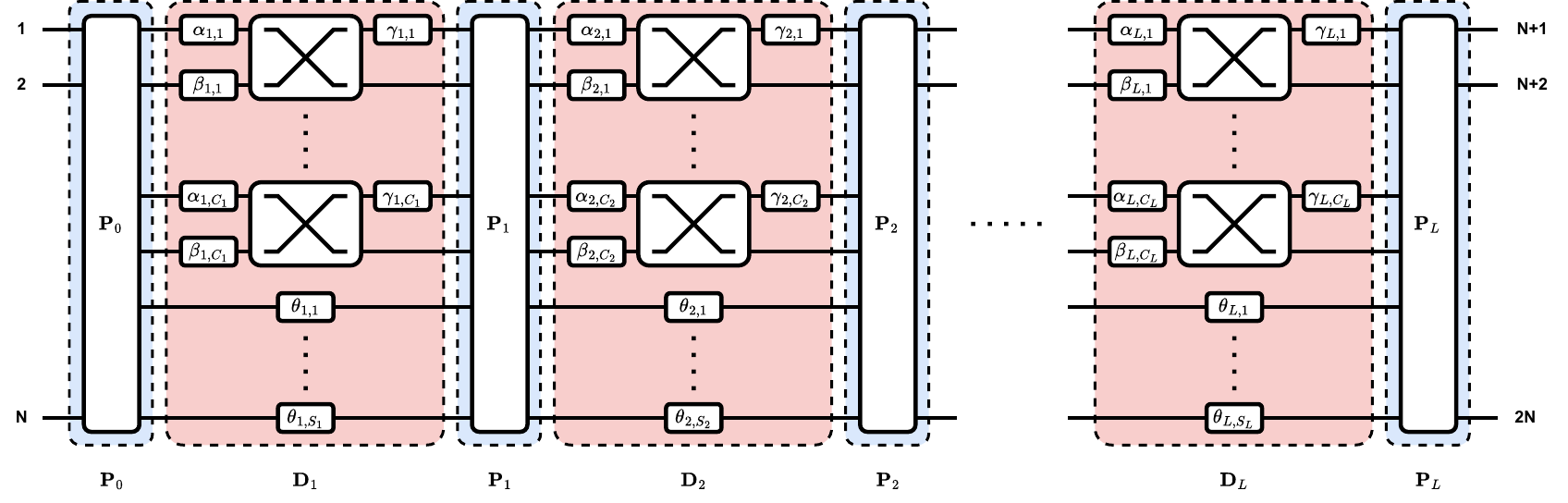}
\caption{Network of hybrid couplers and phase shifters used to prove the sufficient condition of Theorem~1.}
\label{fig:sufficient}
\end{figure*}

By exploiting the introduced definitions and Corollaries~\ref{cor:ser} and \ref{cor:par}, we are now ready to state the main result of this work, characterizing the scattering matrices of all the networks that are implementable with hybrid couplers and phase shifters.
\begin{theorem}
A network is implementable with hybrid couplers and phase shifters if and only if it has a scattering matrix $\mathbf{S}\in\mathbb{C}^{2N\times 2N}$ given by
\begin{equation}
\mathbf{S}=
\begin{bmatrix}
\mathbf{0}_N & \bar{\mathbf{S}}^T\\
\bar{\mathbf{S}} & \mathbf{0}_N
\end{bmatrix},
\end{equation}
where $\bar{\mathbf{S}}\in\mathbb{C}^{N\times N}$ can be decomposed as
\begin{equation}
\bar{\mathbf{S}}=\mathbf{P}_{L}\mathbf{D}_{L}\cdots\mathbf{P}_{2}\mathbf{D}_{2}\mathbf{P}_{1}\mathbf{D}_{1}\mathbf{P}_{0},\label{eq:theor}
\end{equation}
where $\mathbf{P}_{\ell}\in\mathbb{C}^{N\times N}$ is a permutation matrix, for $\ell=0,\ldots,L$, and $\mathbf{D}_{\ell}\in\mathbb{C}^{N\times N}$ is a block diagonal matrix, for $\ell=1,\ldots,L$, partitioned as
\begin{equation}
\mathbf{D}_{\ell}=\text{diag}\left(\mathbf{D}_{\ell,1},\ldots,\mathbf{D}_{\ell,C_\ell},e^{j\theta_{\ell,1}},\ldots,e^{j\theta_{\ell,S_\ell}}\right),\label{eq:Dell}
\end{equation}
where $\theta_{\ell,s}\in[0,2\pi)$, for $s=1,\ldots,S_\ell$, and $\mathbf{D}_{\ell,c}\in\mathbb{C}^{2\times2}$ is given by
\begin{equation}
\mathbf{D}_{\ell,c}=\frac{1}{\sqrt{2}}
\begin{bmatrix}
e^{j\theta_{\ell,c,11}} & e^{j\theta_{\ell,c,12}}\\
e^{j\theta_{\ell,c,21}} & e^{j\theta_{\ell,c,22}}
\end{bmatrix},\label{eq:Dellc}
\end{equation}
with $\theta_{\ell,c,11},\theta_{\ell,c,12},\theta_{\ell,c,21}\in[0,2\pi)$ and $\theta_{\ell,c,22}=\pi-\theta_{\ell,c,11}+\theta_{\ell,c,12}+\theta_{\ell,c,21}$, for $c=1,\ldots,C_\ell$, where $2C_\ell+S_\ell=N$.
\label{the:1}
\end{theorem}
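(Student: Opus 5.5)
We prove the two directions separately. Both rest on Corollaries~\ref{cor:ser} and \ref{cor:par}: every network built from matched networks by series and parallel connections is matched. Its transmission scattering matrix is then the product (in reverse order) or the direct sum of the transmission scattering matrices of the parts. Reciprocity is also preserved, since all four basic components have symmetric unitary scattering matrices. An easy induction on the definition shows that series and parallel combinations of lossless reciprocal networks stay lossless and reciprocal. Hence every implementable network has the form $\mathbf{S}=[[\mathbf{0}_N,\bar{\mathbf{S}}^T]^T,[\bar{\mathbf{S}}^T,\mathbf{0}_N]^T]$, i.e., $\mathbf{S}_{12}=\bar{\mathbf{S}}^T$. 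Alternatively, one checks directly that $\mathbf{S}_{12}=\mathbf{Q}_{12}\mathbf{R}_{12}=\bar{\mathbf{Q}}^T\bar{\mathbf{R}}^T=(\bar{\mathbf{R}}\bar{\mathbf{Q}})^T$, and similarly for parallel. It therefore remains to characterize the set $\mathcal{T}$ of achievable $\bar{\mathbf{S}}$.

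\emph{Necessity.} We show by structural induction that every achievable $\bar{\mathbf{S}}$ admits a decomposition \eqref{eq:theor}. For the base cases:
\begin{itemize}
\item a permutation network gives $\bar{\mathbf{S}}=\mathbf{P}_0$ with $L=0$;
\item a phase shifter or interconnection gives $\mathbf{I}\,\mathrm{diag}(e^{j\theta})\,\mathbf{I}$;
\item a hybrid coupler gives $\frac{1}{\sqrt2}[[j,1]^T,[1,j]^T]$, which has the form \eqref{eq:Dellc} with $\theta_{11}=\theta_{22}=\pi/2$ and $\theta_{12}=\theta_{21}=0$. One checks $\pi-\pi/2+0+0=\pi/2$.
\end{itemize}
For a series connection, concatenating the two decompositions gives $\mathbf{P}'_{L'}\cdots\mathbf{D}'_1(\mathbf{P}'_0\mathbf{P}_L)\mathbf{D}_L\cdots\mathbf{P}_0$, which is again of the required form because the product of two permutations is a permutation. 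For a parallel connection, the two decompositions may have different lengths. We first pad the shorter one with factors $\mathbf{D}=\mathbf{I}$ (all $S_\ell$ phases zero, $C_\ell=0$) and $\mathbf{P}=\mathbf{I}$. Taking direct sums layer by layer then gives block-diagonal permutations and block-diagonal $\mathbf{D}$'s. These $\mathbf{D}$'s are not yet of the ordered form \eqref{eq:Dell}, with all $2\times2$ blocks first and the scalars last. The fix is that any block-diagonal matrix with $2\times2$ and $1\times1$ blocks in arbitrary order can be written as $\mathbf{\Pi}\mathbf{D}\mathbf{\Pi}^T$, with $\mathbf{D}$ ordered and $\mathbf{\Pi}$ a permutation. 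The $\mathbf{\Pi}$'s are then absorbed into the neighbouring permutation factors.

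\emph{Sufficiency.} We show each factor is implementable; the series rule (Corollary~\ref{cor:ser}) then realizes the product, in the arrangement of Fig.~\ref{fig:sufficient}. Permutations are permutation networks, and scalars $e^{j\theta}$ are phase shifters. By the parallel rule, each $\mathbf{D}_\ell$ is realized once every $2\times2$ block \eqref{eq:Dellc} is implementable. The key step, which we expect to be the main obstacle, is to show that every such block equals
\begin{equation}
\mathrm{diag}(e^{j\alpha_1},e^{j\alpha_2})\,\tfrac{1}{\sqrt2}\begin{bmatrix} j&1\\1&j\end{bmatrix}\mathrm{diag}(e^{j\beta_1},e^{j\beta_2}),
\end{equation}
i.e., a coupler sandwiched between phase-shifter pairs. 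Expanding, the entries are $e^{j(\alpha_1+\beta_1+\pi/2)}$, $e^{j(\alpha_1+\beta_2)}$, $e^{j(\alpha_2+\beta_1)}$ and $e^{j(\alpha_2+\beta_2+\pi/2)}$, each over $\sqrt2$. We therefore need to solve
\begin{equation}
\alpha_1+\beta_1=\theta_{11}-\pi/2,\quad \alpha_1+\beta_2=\theta_{12},\quad \alpha_2+\beta_1=\theta_{21}.
\end{equation}
Setting $\alpha_1=0$ gives $\beta_1=\theta_{11}-\pi/2$, $\beta_2=\theta_{12}$ and $\alpha_2=\theta_{21}-\theta_{11}+\pi/2$. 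The fourth entry then has phase $\alpha_2+\beta_2+\pi/2=\pi-\theta_{11}+\theta_{12}+\theta_{21}=\theta_{22}$. This is exactly the constraint in the theorem, which shows the constraint is precisely the one forced by a single coupler. All angles are taken modulo $2\pi$.

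Finally, the series and parallel compositions of these implementable pieces realize $\bar{\mathbf{S}}$, and by the reciprocity argument of the first paragraph their full scattering matrix is the stated $\mathbf{S}$. This completes both directions.
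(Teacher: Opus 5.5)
Your proof is correct. The sufficiency half matches the paper's construction: each $2\times2$ block is a hybrid coupler sandwiched between phase shifters, with three free phases solving three equations and the fourth phase forced to $\pi-\theta_{\ell,c,11}+\theta_{\ell,c,12}+\theta_{\ell,c,21}$. The only difference is the gauge choice: you leave the first output unshifted ($\alpha_1=0$), whereas the paper leaves the second output unshifted and puts $\gamma_{\ell,c}$ on the first.

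The real difference is in necessity. The paper simply asserts that any network of couplers and phase shifters can be sliced into $L$ series layers, each a parallel of couplers and phase shifters between two permutation networks, and then multiplies the layers out. You instead run a structural induction over the inductive definition of implementable networks, which actually proves that layered normal form. Your parallel step is exactly what the paper's slicing argument leaves implicit: you pad the shallower decomposition with identity layers, then conjugate the interleaved block-diagonal factors by a permutation that is absorbed into the neighbouring $\mathbf{P}_\ell$. Your direct check $\mathbf{S}_{12}=\mathbf{Q}_{12}\mathbf{R}_{12}=(\bar{\mathbf{R}}\bar{\mathbf{Q}})^T$ likewise makes the $\bar{\mathbf{S}}^T$ block rigorous, rather than resting on a general reciprocity remark.

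Two small points. First, your base-case phases for the coupler, $\theta_{11}=\theta_{22}=\pi/2$ and $\theta_{12}=\theta_{21}=0$, are the correct ones for $\frac{1}{\sqrt{2}}[[j,1]^T,[1,j]^T]$. The values $\theta_{\ell,c,11}=\theta_{\ell,c,22}=0$, $\theta_{\ell,c,12}=\theta_{\ell,c,21}=\pi/2$ quoted at the end of the paper's necessity proof describe the output-swapped coupler; this is harmless only because the swap can be absorbed into a permutation. Second, your inline expression $[[\mathbf{0}_N,\bar{\mathbf{S}}^T]^T,[\bar{\mathbf{S}}^T,\mathbf{0}_N]^T]$ does not parse to the intended matrix, though the accompanying statement $\mathbf{S}_{12}=\bar{\mathbf{S}}^T$ makes your meaning clear.
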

\begin{proof}
Please, refer to Section~\ref{sec:proof}
\end{proof}

This theorem includes an ``if and only if'' statement that provides two-fold information.
First, it states that any matched network whose transmission scattering matrix can be decomposed as in \eqref{eq:theor} can be implemented with hybrid couplers and phase shifters.
Second, any network of hybrid couplers and phase shifters is a matched network whose transmission scattering matrix can be decomposed as in \eqref{eq:theor}.
Therefore, microwave networks of hybrid couplers and phase shifts allow to compute $\mathbf{v}_2=\mathbf{S}_{21}\mathbf{u}/2$, as derived in Section~\ref{sec:model}, where $\mathbf{u}$ is an arbitrary input signal and $\mathbf{S}_{21}$ is any matrix that can be expressed according to \eqref{eq:theor}.
Theorem~\ref{the:1} holds with both fixed and reconfigurable phase shifters.
If they are reconfigurable, multiple configurations will be possible with the same network, all satisfying the condition of Theorem~\ref{the:1}.

\begin{figure*}[t]
\centering
\includegraphics[width=0.80\textwidth]{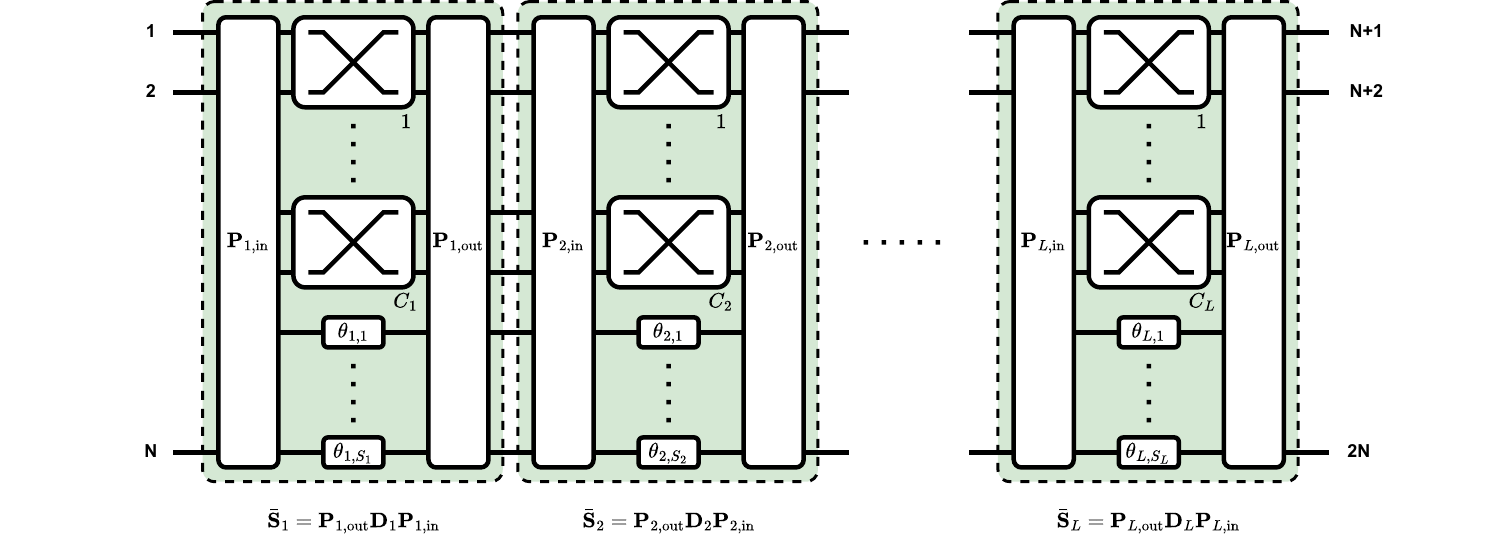}
\caption{Network of hybrid couplers and phase shifters used to prove the necessary condition of Theorem~1.}
\label{fig:necessary}
\end{figure*}

\section{Proof of the Main Result}
\label{sec:proof}

We have stated Theorem~\ref{the:1}, which provides an expression for the scattering matrices of all the possible networks of hybrid couplers and phase shifters.
In this section, we prove Theorem~\ref{the:1} by first proving its sufficient condition and then its necessary condition.

\subsection{Proof of Sufficiency}

The sufficient condition of Theorem~\ref{the:1} states that a matched network with transmission scattering matrix written as in \eqref{eq:theor} can be implemented with hybrid couplers and phase shifters.
To prove this sufficient condition, we provide the design of a network including series and parallel combinations of hybrid couplers, phase shifters, interconnections, and permutation networks, whose scattering matrix is in the form provided by Theorem~\ref{the:1}.
Such a network is the series of $2L+1$ subnetworks, all being reciprocal matched networks with $N$ input ports and $N$ output ports, as represented in Fig.~\ref{fig:sufficient}.
From left to right, these subnetworks have transmission scattering matrices $\mathbf{P}_0,\mathbf{D}_1,\mathbf{P}_1,\mathbf{D}_2,\mathbf{P}_2,\ldots,\mathbf{D}_L,\mathbf{P}_L$.
Since these subnetworks are matched networks, the resulting series network is a matched network with transmission scattering matrix given by $\bar{\mathbf{S}}=\mathbf{P}_{L}\mathbf{D}_{L}\cdots\mathbf{P}_{2}\mathbf{D}_{2}\mathbf{P}_{1}\mathbf{D}_{1}\mathbf{P}_{0}$, following Corollary~\ref{cor:ser}.

In the following, we provide the detailed design for all the $2L+1$ subnetworks.
First, the odd subnetworks (having transmission scattering matrix $\mathbf{P}_\ell$) are permutation networks, where the permutation matrix associated with the $\ell$th odd network is $\mathbf{P}_\ell$, for $\ell=0,\ldots,L$.
Second, the even subnetworks (having transmission scattering matrix $\mathbf{D}_\ell$) are formed by multiple matched networks in parallel.
The $\ell$th even network is the parallel of $C_\ell$ $4$-port networks and $S_\ell$ phase shifters with phase shifts $\theta_{\ell,1},\theta_{\ell,2},\ldots,\theta_{\ell,S_\ell}$, such that its transmission scattering matrix $\mathbf{D}_\ell$ fulfills \eqref{eq:Dell} following Corollary~\ref{cor:par}, for $\ell=1,\ldots,L$.
The $C_\ell$ $4$-port networks must be matched networks with transmission scattering matrix $\mathbf{D}_{\ell,1},\mathbf{D}_{\ell,2},\ldots,\mathbf{D}_{\ell,C_\ell}$ as given by \eqref{eq:Dellc}, hence each of them is the series of three networks with transmission scattering matrices
\begin{equation}
\begin{bmatrix}e^{j\alpha_{\ell,c}} & 0\\0 & e^{j\beta_{\ell,c}}\end{bmatrix},\;
\frac{1}{\sqrt{2}}
\begin{bmatrix}j & 1\\1 & j\end{bmatrix},\;
\begin{bmatrix}e^{j\gamma_{\ell,c}} & 0\\0 & 1\end{bmatrix},
\end{equation}
where the first network is the parallel of two phase shifters with phase shifts $\alpha_{\ell,c}$ and $\beta_{\ell,c}$, the second network is a hybrid coupler, and the third network is the parallel of a phase shifter with phase shift $\gamma_{\ell,c}$ and an interconnection, for $c=1,\ldots,C_\ell$ and $\ell=1,\ldots,L$.
To ensure that $\mathbf{D}_{\ell,c}$ fulfills \eqref{eq:Dellc}, we set
\begin{gather}
\alpha_{\ell,c}=\theta_{\ell,c,21},\label{eq:alpha}\\
\beta_{\ell,c}=\pi/2-\theta_{\ell,c,11}+\theta_{\ell,c,12}+\theta_{\ell,c,21},\label{eq:beta}\\
\gamma_{\ell,c}=\theta_{\ell,c,11}-\theta_{\ell,c,21}-\pi/2,\label{eq:gamma}
\end{gather}
for $c=1,\ldots,C_\ell$ and $\ell=1,\ldots,L$.
In this way, following Corollary~\ref{cor:ser}, we have
\begin{align}
\mathbf{D}_{\ell,c}
&=\frac{1}{\sqrt{2}}
\begin{bmatrix}e^{j\gamma_{\ell,c}} & 0\\0 & 1\end{bmatrix}
\begin{bmatrix}j & 1\\1 & j\end{bmatrix}
\begin{bmatrix}e^{j\alpha_{\ell,c}} & 0\\0 & e^{j\beta_{\ell,c}}\end{bmatrix}\\
&=\frac{1}{\sqrt{2}}
\begin{bmatrix}
e^{j\theta_{\ell,c,11}} & e^{j\theta_{\ell,c,12}}\\
e^{j\theta_{\ell,c,21}} & e^{j(\pi-\theta_{\ell,c,11}+\theta_{\ell,c,12}+\theta_{\ell,c,21})}
\end{bmatrix},
\end{align}
and the constructed network has a scattering matrix in the form given by Theorem~\ref{the:1}.

\subsection{Proof of Necessity}

The necessary condition of Theorem~\ref{the:1} states that any network of hybrid couplers and phase shifters is a matched network with transmission scattering matrix written as in \eqref{eq:theor}.
To prove this necessary condition, we show that any network of hybrid couplers and phase shifters has a scattering matrix that can be written in the form provided by Theorem~\ref{the:1}.

We begin by noting that any such network can be decomposed into $L$ subnetworks connected in series, each allowed to contain hybrid couplers and phase shifters only in parallel to each other.
For the $\ell$th subnetwork, we denote as $C_\ell$ the number of hybrid couplers, and as $S_\ell$ the number of phase shifters, whose phase shifts are $\theta_{\ell,1},\theta_{\ell,2},\ldots,\theta_{\ell,S_\ell}$, as illustrated in Fig.~\ref{fig:necessary}.
This parallel of hybrid couplers and phase shifters can be, in general, connected in series to two permutation networks: one preceding it, with permutation matrix $\mathbf{P}_{\ell,\text{in}}\in\mathbb{R}^{N\times N}$, and one following it, with permutation matrix $\mathbf{P}_{\ell,\text{out}}\in\mathbb{R}^{N\times N}$.
In principle, the parallel of hybrid couplers and phase shifters in the $\ell$th subnetwork could also include interconnections and permutation networks as further components in parallel.
However, this possibility can be excluded without loss of generality for two reasons.
First, recalling that interconnections are phase shifters with zero phase shift, their effect is already captured within the $S_\ell$ phase shifters.
Second, permutation networks would be redundant as their effect is already incorporated in $\mathbf{P}_{\ell,\text{in}}$ and $\mathbf{P}_{\ell,\text{out}}$.
Therefore, the numbers of hybrid couplers and phase shifters in the $\ell$th subnetwork are constrained by $2C_\ell+S_\ell=N$.

Following the description of the $\ell$th subnetwork, we obtain from Corollaries~\ref{cor:ser} and \ref{cor:par} that it is a matched network with transmission scattering matrix $\bar{\mathbf{S}}_{\ell}\in\mathbb{C}^{N\times N}$ given by
\begin{equation}
\bar{\mathbf{S}}_{\ell}=\mathbf{P}_{\ell,\text{out}}\mathbf{D}_{\ell}\mathbf{P}_{\ell,\text{in}},\label{eq:Sell}
\end{equation}
where $\mathbf{D}_{\ell}\in\mathbb{C}^{N\times N}$ is a block diagonal matrix partitioned as
\begin{equation}
\mathbf{D}_{\ell}=\text{diag}\left(\mathbf{D},\ldots,\mathbf{D},e^{j\theta_{\ell,1}},\ldots,e^{j\theta_{\ell,S_\ell}}\right),
\end{equation}
where $\mathbf{D}\in\mathbb{C}^{2\times2}$ is the transmission scattering matrix of a hybrid coupler given by $\mathbf{D}=[[j,1]^T,[1,j]^T]/\sqrt{2}$.

Since a generic network of hybrid couplers and phase shifters is a series of $L$ such subnetworks, its transmission scattering matrix $\bar{\mathbf{S}}\in\mathbb{C}^{N\times N}$ writes as
\begin{equation}
\bar{\mathbf{S}}=\bar{\mathbf{S}}_{L}\cdots\bar{\mathbf{S}}_{2}\bar{\mathbf{S}}_{1},
\end{equation}
following Corollary~\ref{cor:ser}, which can be rewritten by using \eqref{eq:Sell} as
\begin{equation}
\bar{\mathbf{S}}
=\mathbf{P}_{L,\text{out}}\mathbf{D}_{L}\mathbf{P}_{L,\text{in}}
\cdots
\mathbf{P}_{2,\text{out}}\mathbf{D}_{2}\mathbf{P}_{2,\text{in}}
\mathbf{P}_{1,\text{out}}\mathbf{D}_{1}\mathbf{P}_{1,\text{in}}.
\end{equation}
Therefore, we have that any network of hybrid couplers and phase shifters has a transmission scattering matrix as required by Theorem~\ref{the:1}, where $\mathbf{P}_{0}=\mathbf{P}_{1,\text{in}}$, $\mathbf{P}_{\ell}=\mathbf{P}_{\ell+1,\text{in}}\mathbf{P}_{\ell,\text{out}}$, for $\ell=1,\ldots,L-1$, and $\mathbf{P}_{L}=\mathbf{P}_{L,\text{out}}$ are permutation matrices, and $\theta_{\ell,c,11}=\theta_{\ell,c,22}=0$, $\theta_{\ell,c,12}=\theta_{\ell,c,21}=\pi/2$, for $c=1,\ldots,C_\ell$ and $\ell=1,\ldots,L$.

\begin{figure*}[t]
\centering
\includegraphics[width=0.96\textwidth]{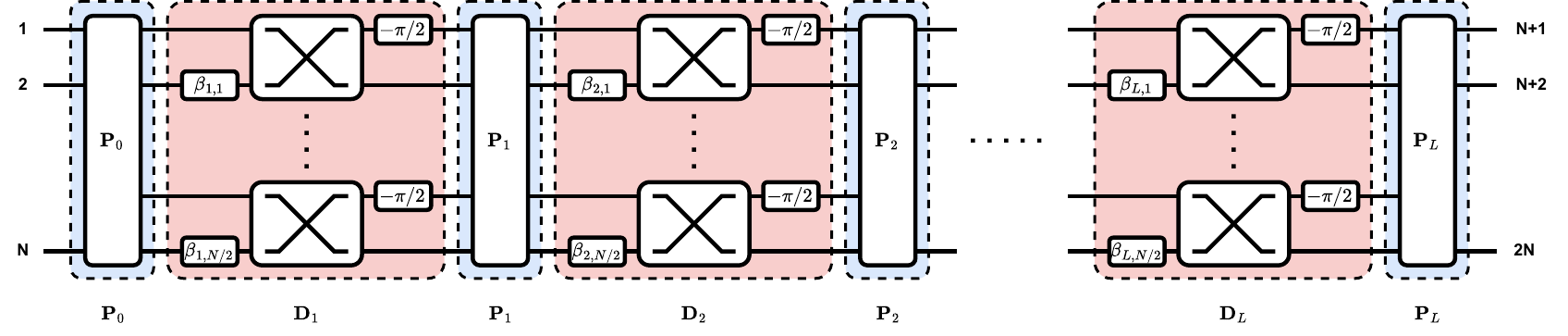}
\caption{Network of hybrid couplers and phase shifters performing the $N\times N$ DFT of the input, with $N=2^L$.
The permutation matrices $\mathbf{P}_\ell$ are given by \eqref{eq:P0-dft}-\eqref{eq:Pell-dft}, and the phase shifts $\beta_{\ell,c}$, for $c=1,\ldots,N/2$, are $\beta_{\ell,c}=\pi/2-2\pi((c-1)\mod 2^{\ell-1})/2^{\ell}$.}
\label{fig:dft}
\end{figure*}

\section{Analog Computation of the DFT}
\label{sec:dft}

We have shown that specific matrix-vector products, also known as linear transformations, can be performed in the analog domain through networks implemented with hybrid couplers and phase shifters.
From our main result in Theorem~\ref{the:1}, we notice that all these linear transformations are unitary, since the transmission scattering matrix of any network of hybrid couplers and phase shifters is unitary.
Therefore, we are interested in understanding whether the most popular unitary transformation, namely the \gls{dft}, can be performed in the analog domain through hybrid couplers and phase shifters.

We recall that the $N\times N$ \gls{dft} matrix $\mathbf{F}_N\in\mathbb{C}^{N\times N}$ is defined as
\begin{equation}
\left[\mathbf{F}_N\right]_{i,k}=\frac{1}{\sqrt{N}}\omega^{(i-1)(k-1)},\label{eq:dft}
\end{equation}
for $i,k=1,\ldots,N$, where $\omega=e^{-j2\pi/N}$.
We show in the following that the \gls{dft} can be computed with networks of hybrid couplers and phase shifters by proving that it can be decomposed as required by Theorem~\ref{the:1}.
To this end, we first introduce a known decomposition of the \gls{dft} that we later exploit and the related concept of odd-even permutation matrix.
\begin{definition}
(Odd-even permutation matrix)
An $N\times N$ permutation matrix, with $N$ even, is an odd-even permutation matrix, denoted as $\tilde{\mathbf{P}}_N\in\{0,1\}^{N\times N}$, if it is given by
\begin{equation}
\left[\tilde{\mathbf{P}}_N\right]_{i,k} = 
\begin{cases}
1 & \text{if } i\leq N/2 \text{ and } k = 2i-1\\
1 & \text{if } i>N/2 \text{ and } k = 2i-N\\
0 & \text{otherwise}
\end{cases},
\end{equation}
for $i,k=1,\ldots,N$.
\end{definition}
Note that this permutation matrix is named as ``odd-even'' since when multiplied by a vector $\mathbf{v}\in\mathbb{C}^{N\times1}$ as in $\tilde{\mathbf{P}}_N\mathbf{v}$, it returns first the odd entries and then the even entries of $\mathbf{v}$.
This matrix can also be referred to as the ``even-odd'' permutation matrix when the entries of a vector are indexed starting from zero, e.g., in \cite[Chapter~10.3]{str09}.
As three illustrative examples of the odd-even permutation matrix $\tilde{\mathbf{P}}_N$, the matrices $\tilde{\mathbf{P}}_2$, $\tilde{\mathbf{P}}_4$, and $\tilde{\mathbf{P}}_8$ are $\tilde{\mathbf{P}}_2=\mathbf{I}_2$,
\begin{equation}
\tilde{\mathbf{P}}_4=
\begin{bmatrix}
1 & 0 & 0 & 0\\
0 & 0 & 1 & 0\\
0 & 1 & 0 & 0\\
0 & 0 & 0 & 1
\end{bmatrix},
\tilde{\mathbf{P}}_8=
\begin{bmatrix}
1 & 0 & 0 & 0 & 0 & 0 & 0 & 0\\
0 & 0 & 1 & 0 & 0 & 0 & 0 & 0\\
0 & 0 & 0 & 0 & 1 & 0 & 0 & 0\\
0 & 0 & 0 & 0 & 0 & 0 & 1 & 0\\
0 & 1 & 0 & 0 & 0 & 0 & 0 & 0\\
0 & 0 & 0 & 1 & 0 & 0 & 0 & 0\\
0 & 0 & 0 & 0 & 0 & 1 & 0 & 0\\
0 & 0 & 0 & 0 & 0 & 0 & 0 & 1
\end{bmatrix}.
\end{equation}

The concept of odd-even permutation matrix appears in a decomposition of the \gls{dft} matrix used in the \gls{fft} \cite{coo65}, arguably ``the most valuable numerical algorithm in the last century'' \cite[Chapter~10.3]{str09}.
Such a decomposition is recalled in the following lemma.
\begin{lemma}
The $N\times N$ \gls{dft} matrix $\mathbf{F}_N$, with $N=2^L$, $L\in\mathbb{N}$, can be decomposed as
\begin{equation}
\mathbf{F}_N=\frac{1}{\sqrt{2}}
\begin{bmatrix}
\mathbf{I}_{N/2} & \mathbf{\Omega}_{N/2}\\
\mathbf{I}_{N/2} & -\mathbf{\Omega}_{N/2}
\end{bmatrix}
\begin{bmatrix}
\mathbf{F}_{N/2} & \\
 & \mathbf{F}_{N/2}
\end{bmatrix}
\tilde{\mathbf{P}}_N,\label{eq:fft}
\end{equation}
where $\mathbf{\Omega}_{N/2}\in\mathbb{C}^{N/2\times N/2}$ is a diagonal matrix defined as
\begin{equation}
\mathbf{\Omega}_{N/2}=\emph{diag}\left(1,\omega,\omega^2,\ldots,\omega^{N/2-2},\omega^{N/2-1}\right),
\end{equation}
with $\omega=e^{-j2\pi/N}$, i.e., $[\mathbf{\Omega}_{N/2}]_{c,c}=e^{-j2\pi(c-1)/N}$, for $c=1,\ldots,N/2$, and $\tilde{\mathbf{P}}_N$ is the $N\times N$ odd-even permutation matrix.
\label{lem:fft}
\end{lemma}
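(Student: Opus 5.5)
We prove \eqref{eq:fft} by a direct entrywise computation: multiply the right-hand side out column by column and compare with the definition \eqref{eq:dft}, using the normalization $1/\sqrt{N}=\frac{1}{\sqrt{2}}\cdot\frac{1}{\sqrt{N/2}}$. Throughout, write $\omega_N=e^{-j2\pi/N}$ and $\omega_{N/2}=e^{-j2\pi/(N/2)}=\omega_N^2$. This identity is the key fact connecting the two DFT sizes.

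The first step is to determine the effect of $\tilde{\mathbf{P}}_N$ acting on the right. By the definition of the odd-even permutation matrix, $\mathbf{F}_N\tilde{\mathbf{P}}_N^T$ is $\mathbf{F}_N$ with its columns reordered so that the odd-indexed columns come first and the even-indexed columns come last. Since $\tilde{\mathbf{P}}_N$ is orthogonal, the claim is equivalent to
\[
\mathbf{F}_N\tilde{\mathbf{P}}_N^T=\frac{1}{\sqrt{2}}\begin{bmatrix}\mathbf{F}_{N/2} & \mathbf{\Omega}_{N/2}\mathbf{F}_{N/2}\\ \mathbf{F}_{N/2} & -\mathbf{\Omega}_{N/2}\mathbf{F}_{N/2}\end{bmatrix}.
\]
The right-hand side here is simply the product of the first two factors in \eqref{eq:fft}. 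We then check the four $N/2\times N/2$ blocks separately. Index rows as $i=r$ or $i=r+N/2$ with $r=1,\ldots,N/2$, and index the reordered columns as $k=2m-1$ (odd) or $k=2m$ (even) with $m=1,\ldots,N/2$.

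For the odd columns, the entry is $\omega_N^{(i-1)(2m-2)}/\sqrt{N}=\omega_{N/2}^{(i-1)(m-1)}/\sqrt{N}$. Because $\omega_{N/2}^{N/2}=1$, this depends only on $i$ modulo $N/2$. It therefore equals $\frac{1}{\sqrt2}[\mathbf{F}_{N/2}]_{r,m}$ in both the top and bottom blocks.

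For the even columns, the entry is $\omega_N^{(i-1)(2m-1)}/\sqrt{N}=\omega_N^{i-1}\,\omega_{N/2}^{(i-1)(m-1)}/\sqrt{N}$. For the top block ($i=r$), this gives $\frac{1}{\sqrt2}\omega_N^{r-1}[\mathbf{F}_{N/2}]_{r,m}=\frac{1}{\sqrt2}[\mathbf{\Omega}_{N/2}\mathbf{F}_{N/2}]_{r,m}$. For the bottom block ($i=r+N/2$), the extra factor is $\omega_N^{N/2}=e^{-j\pi}=-1$, which produces the minus sign.

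There is no real obstacle. The only delicate point is the bookkeeping: the definition of $\tilde{\mathbf{P}}_N$ must be matched correctly to the column (not row) reordering, and one must keep straight whether $\tilde{\mathbf{P}}_N$ or its transpose appears. To settle this, we verify that $(\mathbf{A}\tilde{\mathbf{P}}_N)\mathbf{v}=\mathbf{A}(\tilde{\mathbf{P}}_N\mathbf{v})$, where $\tilde{\mathbf{P}}_N\mathbf{v}$ stacks the odd entries of $\mathbf{v}$ first and the even entries second. This shows that the block structure above correctly multiplies the odd and even input entries. Note that the lemma is a single-step identity, so $N=2^L$ is only needed for $N$ to be even; no induction on $L$ is required here.
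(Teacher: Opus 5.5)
Your proof is correct. The entrywise block check, using $\omega_N^2=\omega_{N/2}$, the $N/2$-periodicity of $\omega_{N/2}$, and $\omega_N^{N/2}=-1$, together with reading $\mathbf{F}_N\tilde{\mathbf{P}}_N^T$ as the odd-then-even column reordering, establishes the identity. The paper gives no proof of its own and only cites Strang's textbook, whose argument is this same even/odd (Cooley--Tukey) splitting, so your proposal is a self-contained version of the approach the paper relies on.
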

\begin{proof}
Please, refer to \cite[Chapter~10.3]{str09}.
\end{proof}

By exploiting the decomposition in Lemma~\ref{lem:fft}, we derive another decomposition of the \gls{dft} matrix in the following proposition, which is in the form required by Theorem~\ref{the:1}.
\begin{proposition}
The $N\times N$ \gls{dft} matrix $\mathbf{F}_N$, with $N=2^L$, $L\in\mathbb{N}$, can be decomposed as
\begin{equation}
\mathbf{F}_{N}=\mathbf{P}_{L}\mathbf{D}_{L}\cdots\mathbf{P}_{2}\mathbf{D}_{2}\mathbf{P}_{1}\mathbf{D}_{1}\mathbf{P}_{0}.\label{eq:theor-dft}
\end{equation}
The matrix $\mathbf{P}_{\ell}\in\{0,1\}^{N\times N}$ is a permutation matrix given by
\begin{equation}
\mathbf{P}_{0}=\prod_{\ell=1}^{L}\left(\mathbf{I}_{2^{L-\ell}}\otimes\tilde{\mathbf{P}}_{2^\ell}\right),\label{eq:P0-dft}
\end{equation}
\begin{equation}
\mathbf{P}_{\ell}=\left(\mathbf{I}_{2^{L-\ell-1}}\otimes\tilde{\mathbf{P}}_{2^{\ell+1}}^T\right)\left(\mathbf{I}_{2^{L-\ell}}\otimes\tilde{\mathbf{P}}_{2^{\ell}}\right),\label{eq:Pell-dft}
\end{equation}
for $\ell=1,\ldots,L-1$, and $\mathbf{P}_{L}=\tilde{\mathbf{P}}_{2^L}$.
Besides, $\mathbf{D}_{\ell}\in\mathbb{C}^{N\times N}$ is a block diagonal matrix, for $\ell=1,\ldots,L$, given by
\begin{align}
\mathbf{D}_{\ell}=\mathbf{I}_{2^{L-\ell}}\otimes\tilde{\mathbf{D}}_{2^\ell},\label{eq:Dell-dft-1}
\end{align}
i.e., having $2^{L-\ell}$ times the block $\tilde{\mathbf{D}}_{2^\ell}\in\mathbb{C}^{2^\ell\times 2^\ell}$ on the diagonal, where $\tilde{\mathbf{D}}_{2^\ell}$ is also a block diagonal matrix
\begin{equation}
\tilde{\mathbf{D}}_{2^\ell}=\text{diag}\left(\tilde{\mathbf{D}}_{2^\ell,1},\tilde{\mathbf{D}}_{2^\ell,2},\ldots,\tilde{\mathbf{D}}_{2^\ell,2^{\ell-1}}\right),
\end{equation}
with $\tilde{\mathbf{D}}_{2^\ell,c}\in\mathbb{C}^{2\times 2}$ given by
\begin{equation}
\tilde{\mathbf{D}}_{2^\ell,c}=\frac{1}{\sqrt{2}}
\begin{bmatrix}
1 &  e^{-j\frac{2\pi(c-1)}{2^{\ell}}}\\
1 & -e^{-j\frac{2\pi(c-1)}{2^{\ell}}}
\end{bmatrix},\label{eq:Dell-dft-3}
\end{equation}
for $c=1,\ldots,2^{\ell-1}$.
\label{pro:dft}
\end{proposition}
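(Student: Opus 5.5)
We prove Proposition~\ref{pro:dft} by induction on $L$, unrolling the recursion of Lemma~\ref{lem:fft}. The whole argument rests on rewriting the butterfly factor of Lemma~\ref{lem:fft} as a permuted block diagonal matrix. Define
\begin{equation*}
\mathbf{B}_{2^\ell}=\frac{1}{\sqrt{2}}\begin{bmatrix}\mathbf{I}_{2^{\ell-1}} & \mathbf{\Omega}_{2^{\ell-1}}\\ \mathbf{I}_{2^{\ell-1}} & -\mathbf{\Omega}_{2^{\ell-1}}\end{bmatrix},
\end{equation*}
whose diagonal entries are $[\mathbf{\Omega}_{2^{\ell-1}}]_{c,c}=e^{-j2\pi(c-1)/2^\ell}$. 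The first step is the identity
\begin{equation*}
\mathbf{B}_{2^\ell}=\tilde{\mathbf{P}}_{2^\ell}^T\tilde{\mathbf{D}}_{2^\ell}\tilde{\mathbf{P}}_{2^\ell}.
\end{equation*}
Equivalently, we check $\tilde{\mathbf{P}}_{2^\ell}\mathbf{B}_{2^\ell}\tilde{\mathbf{P}}_{2^\ell}^T=\tilde{\mathbf{D}}_{2^\ell}$. Conjugating by the odd-even permutation regroups indices so that row/column $c$ and $2^{\ell-1}+c$ become adjacent positions $2c-1,2c$. The only nonzero entries of $\mathbf{B}_{2^\ell}$ in those rows and columns form exactly the $2\times2$ block $\tilde{\mathbf{D}}_{2^\ell,c}$ in \eqref{eq:Dell-dft-3}. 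This is a direct index check using the definition of $\tilde{\mathbf{P}}_N$.

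Next we unroll Lemma~\ref{lem:fft}. With the identity above, Lemma~\ref{lem:fft} reads
\begin{equation*}
\mathbf{F}_{2^\ell}=\tilde{\mathbf{P}}_{2^\ell}^T\tilde{\mathbf{D}}_{2^\ell}\tilde{\mathbf{P}}_{2^\ell}(\mathbf{I}_2\otimes\mathbf{F}_{2^{\ell-1}})\tilde{\mathbf{P}}_{2^\ell}.
\end{equation*}
We tensor this with $\mathbf{I}_{2^{L-\ell}}$, using the mixed-product rule $(\mathbf{I}\otimes\mathbf{A})(\mathbf{I}\otimes\mathbf{B})=\mathbf{I}\otimes\mathbf{A}\mathbf{B}$ and $\mathbf{I}_{2^{L-\ell}}\otimes\mathbf{I}_2\otimes\mathbf{F}_{2^{\ell-1}}=\mathbf{I}_{2^{L-\ell+1}}\otimes\mathbf{F}_{2^{\ell-1}}$. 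Writing $\mathbf{Q}_\ell=\mathbf{I}_{2^{L-\ell}}\otimes\tilde{\mathbf{P}}_{2^\ell}$, this gives
\begin{equation*}
\mathbf{I}_{2^{L-\ell}}\otimes\mathbf{F}_{2^\ell}=\mathbf{Q}_\ell^T\mathbf{D}_\ell\mathbf{Q}_\ell\,(\mathbf{I}_{2^{L-\ell+1}}\otimes\mathbf{F}_{2^{\ell-1}})\,\mathbf{Q}_\ell,
\end{equation*}
where $\mathbf{D}_\ell=\mathbf{I}_{2^{L-\ell}}\otimes\tilde{\mathbf{D}}_{2^\ell}$ as in \eqref{eq:Dell-dft-1}. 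Iterating from $\ell=L$ down to $\ell=1$, with $\mathbf{F}_1=1$ so that the innermost factor is $\mathbf{I}_N$, yields
\begin{equation*}
\mathbf{F}_N=\mathbf{Q}_L^T\mathbf{D}_L\mathbf{Q}_L\,\mathbf{Q}_{L-1}^T\mathbf{D}_{L-1}\mathbf{Q}_{L-1}\cdots\mathbf{Q}_1^T\mathbf{D}_1\mathbf{Q}_1\;\mathbf{Q}_1\mathbf{Q}_2\cdots\mathbf{Q}_L.
\end{equation*}
The trailing product $\mathbf{Q}_1\mathbf{Q}_2\cdots\mathbf{Q}_L$ collects the rightmost $\mathbf{Q}_\ell$ of each level. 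The ordering needs care: the level-$\ell$ factor sits to the left of the level-$(\ell-1)$ expansion, so its right permutation ends up further right.

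Finally we read off the factors and compare them with \eqref{eq:P0-dft}--\eqref{eq:Pell-dft}.
\begin{itemize}
\item $\mathbf{P}_L=\mathbf{Q}_L^T=\tilde{\mathbf{P}}_{2^L}^T$. This should be compared with the stated $\mathbf{P}_L=\tilde{\mathbf{P}}_{2^L}$.
\item The factor between $\mathbf{D}_{\ell+1}$ and $\mathbf{D}_\ell$ is $\mathbf{Q}_{\ell+1}\mathbf{Q}_\ell^T$.
\item The factor after $\mathbf{D}_1$ is $\mathbf{Q}_1\mathbf{Q}_1\mathbf{Q}_2\cdots\mathbf{Q}_L$, which simplifies because $\mathbf{Q}_1=\mathbf{I}_N$ since $\tilde{\mathbf{P}}_2=\mathbf{I}_2$.
\end{itemize}
Reconciling these with the stated transposes is the main obstacle. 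My derivation gives $\mathbf{Q}_{\ell+1}\mathbf{Q}_\ell^T$, while \eqref{eq:Pell-dft} has the transpose in the other slot, $\mathbf{Q}_{\ell+1}^T\mathbf{Q}_\ell$. The two orderings may not agree. So I would first recheck whether the correct identity is $\mathbf{B}_{2^\ell}=\tilde{\mathbf{P}}_{2^\ell}\tilde{\mathbf{D}}_{2^\ell}\tilde{\mathbf{P}}_{2^\ell}^T$, since that sign convention would flip the transposes to match. I would verify this on $N=4$ explicitly. The $N=4$ case also allows checking that $\mathbf{P}_L=\tilde{\mathbf{P}}_{2^L}$ holds as stated. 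Every remaining step is mechanical Kronecker algebra.
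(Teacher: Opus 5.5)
Your strategy is the same as the paper's: write the butterfly factor of Lemma~\ref{lem:fft} as a permuted $\tilde{\mathbf{D}}_{2^\ell}$, then unroll with the mixed-product rule. However, the identity everything rests on has its transposes on the wrong side, and this is a real error, not a convention. Let $\pi(i)=2i-1$ for $i\le N/2$ and $\pi(i)=2i-N$ for $i>N/2$. Row $i$ of $\tilde{\mathbf{P}}_N$ has its single $1$ in column $\pi(i)$, so $[\tilde{\mathbf{P}}_N\mathbf{A}\tilde{\mathbf{P}}_N^T]_{i,k}=[\mathbf{A}]_{\pi(i),\pi(k)}$. Hence it is $\tilde{\mathbf{P}}_N^T\mathbf{B}_N\tilde{\mathbf{P}}_N$, not $\tilde{\mathbf{P}}_N\mathbf{B}_N\tilde{\mathbf{P}}_N^T$, that moves indices $c$ and $N/2+c$ to positions $2c-1,2c$. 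The correct identity is therefore $\mathbf{B}_{2^\ell}=\tilde{\mathbf{P}}_{2^\ell}\tilde{\mathbf{D}}_{2^\ell}\tilde{\mathbf{P}}_{2^\ell}^T$, which is the paper's \eqref{eq:equiv1}. Your version is false for $\ell\ge 3$: $[\tilde{\mathbf{P}}_8^T\tilde{\mathbf{D}}_8\tilde{\mathbf{P}}_8]_{1,5}=[\tilde{\mathbf{D}}_8]_{1,3}=0$, whereas $[\mathbf{B}_8]_{1,5}=1/\sqrt{2}$. So for $N\ge 8$ the product you obtain after unrolling is not a factorization of $\mathbf{F}_N$ at all. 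The mismatch you flag at the end cannot be ``reconciled,'' and the proposal stops without resolving it. Your proposed check at $N=4$ would not catch this either. Since $\tilde{\mathbf{P}}_2=\mathbf{I}_2$ and $\tilde{\mathbf{P}}_4$ is a symmetric involution (it swaps entries $2$ and $3$), both versions agree for $L\le 2$; $N=8$ is the first case that tells them apart.

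With the identity corrected, your unrolling goes through unchanged. Using your $\mathbf{Q}_\ell=\mathbf{I}_{2^{L-\ell}}\otimes\tilde{\mathbf{P}}_{2^\ell}$, one gets $\mathbf{I}_{2^{L-\ell}}\otimes\mathbf{F}_{2^\ell}=\mathbf{Q}_\ell\mathbf{D}_\ell\mathbf{Q}_\ell^T(\mathbf{I}_{2^{L-\ell+1}}\otimes\mathbf{F}_{2^{\ell-1}})\mathbf{Q}_\ell$, hence
\[
\mathbf{F}_N=\mathbf{Q}_L\mathbf{D}_L\mathbf{Q}_L^T\,\mathbf{Q}_{L-1}\mathbf{D}_{L-1}\mathbf{Q}_{L-1}^T\cdots\mathbf{Q}_1\mathbf{D}_1\mathbf{Q}_1^T\,\mathbf{Q}_1\mathbf{Q}_2\cdots\mathbf{Q}_L .
\]
This gives $\mathbf{P}_L=\mathbf{Q}_L=\tilde{\mathbf{P}}_{2^L}$ and $\mathbf{P}_\ell=\mathbf{Q}_{\ell+1}^T\mathbf{Q}_\ell$, which is exactly \eqref{eq:Pell-dft}. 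It also gives $\mathbf{P}_0=\mathbf{Q}_1^T\mathbf{Q}_1\mathbf{Q}_2\cdots\mathbf{Q}_L=\mathbf{Q}_1\mathbf{Q}_2\cdots\mathbf{Q}_L$, which is \eqref{eq:P0-dft} with factors ordered by increasing $\ell$ from left to right. This is the same computation as the paper's induction, which tracks $\mathbf{P}_\ell^{(L)}=\mathbf{I}_2\otimes\mathbf{P}_\ell^{(L-1)}$ and $\mathbf{D}_\ell^{(L)}=\mathbf{I}_2\otimes\mathbf{D}_\ell^{(L-1)}$ with superscripts. Yours is just unrolled in one pass and bottoms out at $\mathbf{F}_1=1$ rather than at $\mathbf{F}_2=\tilde{\mathbf{D}}_2$.
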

\begin{proof}
Please, refer to Appendix~C.
\end{proof}

As a direct consequence of the \gls{dft} matrix decomposition in Proposition~\ref{pro:dft}, we provide the following result.
\begin{corollary}
A microwave network that performs the \gls{dft} of its input signal $\mathbf{u}$, i.e., computes $\mathbf{F}_N\mathbf{u}/2$, with $N=2^L$, $L\in\mathbb{N}$, is implementable with hybrid couplers and phase shifters.
\label{cor:dft}
\end{corollary}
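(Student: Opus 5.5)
The plan is to apply the sufficiency direction of Theorem~\ref{the:1} to the decomposition in Proposition~\ref{pro:dft}. By Section~\ref{sec:model}, a matched network whose scattering matrix has $\mathbf{S}_{21}=\mathbf{F}_N$ computes $\mathbf{v}_2=\mathbf{F}_N\mathbf{u}/2$. So it suffices to show that the matrix $\bar{\mathbf{S}}=\mathbf{F}_N$ admits a factorization of the form \eqref{eq:theor}.

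The first step is to check that each factor in \eqref{eq:theor-dft} meets the requirements of Theorem~\ref{the:1}.
\begin{itemize}
\item The matrices $\mathbf{P}_0,\ldots,\mathbf{P}_L$ in \eqref{eq:P0-dft}--\eqref{eq:Pell-dft}, and $\mathbf{P}_L=\tilde{\mathbf{P}}_{2^L}$, are permutation matrices. This holds because Kronecker products of identity matrices with permutation matrices are permutation matrices, as are transposes and products of permutation matrices.
\item Each $\mathbf{D}_\ell=\mathbf{I}_{2^{L-\ell}}\otimes\tilde{\mathbf{D}}_{2^\ell}$ is block diagonal with $C_\ell=N/2$ blocks of size $2\times2$ and $S_\ell=0$, so $2C_\ell+S_\ell=N$ holds. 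Each block equals some $\tilde{\mathbf{D}}_{2^\ell,c}$.
\end{itemize}

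The key step is to verify that each $\tilde{\mathbf{D}}_{2^\ell,c}$ has the form \eqref{eq:Dellc}. Write $\phi=-2\pi(c-1)/2^\ell$, reduced modulo $2\pi$ into $[0,2\pi)$. Then the phases are
\begin{equation}
\theta_{11}=0,\quad \theta_{12}=\phi,\quad \theta_{21}=0,
\end{equation}
and the entry $-e^{j\phi}=e^{j(\pi+\phi)}$ gives $\theta_{22}=\pi+\phi$. The theorem's constraint requires
\begin{equation}
\theta_{22}=\pi-\theta_{11}+\theta_{12}+\theta_{21}=\pi+\phi,
\end{equation}
which matches exactly. The constraint is understood modulo $2\pi$; otherwise we would have to choose representatives so that it holds, which is harmless since only $e^{j\theta_{22}}$ matters. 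This is the only point that needs any care, and it is routine.

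With every factor verified, Theorem~\ref{the:1} (sufficiency) yields a network of hybrid couplers and phase shifters with scattering matrix having $\bar{\mathbf{S}}=\mathbf{F}_N$. Explicitly, this is the construction of Fig.~\ref{fig:sufficient}, with $\alpha_{\ell,c},\beta_{\ell,c},\gamma_{\ell,c}$ given by \eqref{eq:alpha}--\eqref{eq:gamma}. Combining this with $\mathbf{v}_2=\mathbf{S}_{21}\mathbf{u}/2$ gives the claim.
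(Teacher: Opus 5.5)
Your proposal is correct and follows the paper's own argument. You verify that the factors in Proposition~\ref{pro:dft} are permutation matrices and block-diagonal matrices with $C_\ell=N/2$, $S_\ell=0$, $\theta_{\ell,c,11}=\theta_{\ell,c,21}=0$, and $\theta_{\ell,c,12}=-2\pi(c-1)/2^\ell$ (with $c$ indexing the blocks within $\tilde{\mathbf{D}}_{2^\ell}$), and then invoke Theorem~\ref{the:1}. Your explicit check of the $\theta_{22}$ constraint and your reduction of $\theta_{12}$ into $[0,2\pi)$ are slightly more careful than the paper, which states a negative $\theta_{\ell,c,12}$ and relies implicitly on the modulo-$2\pi$ convention.
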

\begin{proof}
Recalling the model of a \gls{milac} in Section~\ref{sec:model}, a microwave network performs the \gls{dft} of its input signal when it is a matched network with transmission scattering matrix being the $N\times N$ \gls{dft} matrix $\mathbf{F}_N$.
Therefore, such a network can be implemented using hybrid couplers and phase shifters if and only if $\mathbf{F}_N$ can be decomposed as stated in Theorem~\ref{the:1}.
To show this, we prove that the decomposition of $\mathbf{F}_N$ provided by Proposition~\ref{pro:dft} is in the form required by Theorem~\ref{the:1}.
Since the decomposition is already in the form $\mathbf{F}_{N}=\mathbf{P}_{L}\mathbf{D}_{L}\cdots\mathbf{P}_{2}\mathbf{D}_{2}\mathbf{P}_{1}\mathbf{D}_{1}\mathbf{P}_{0}$, we need to verify that the matrices $\mathbf{P}_{\ell}$ and $\mathbf{D}_{\ell}$ are in the correct form.
First, all matrices $\mathbf{P}_{\ell}$ are permutation matrices since they are products of permutation matrices (see \eqref{eq:P0-dft}-\eqref{eq:Pell-dft}), and therefore fulfill the condition of Theorem~\ref{the:1}.
Second, also all the matrices $\mathbf{D}_{\ell}=\mathbf{I}_{2^{L-\ell}}\otimes\tilde{\mathbf{D}}_{2^\ell}$ fulfill Theorem~\ref{the:1}, as they are block diagonal matrices as requested by Theorem~\ref{the:1} where $C_\ell=N/2$ and $S_\ell=0$, and phase shifts $\theta_{\ell,c,11},\theta_{\ell,c,12},\theta_{\ell,c,21}$ are $\theta_{\ell,c,11}=\theta_{\ell,c,21}=0$, $\theta_{\ell,c,12}=-2\pi((c-1)\mod 2^{\ell-1})/2^{\ell}$, for $c=1,\ldots,N/2$ (see \eqref{eq:Dell-dft-1}-\eqref{eq:Dell-dft-3}).
\end{proof}

Corollary~\ref{cor:dft} tells us that we can compute the \gls{dft} of an input vector in the analog domain with a network made exclusively of hybrid couplers and phase shifters.
Interestingly, we also have a systematic design method to construct such networks for any \gls{dft} with size $N$ power of two, which can be obtained from the Proof of sufficiency of Theorem~\ref{the:1}.
In that proof, we have shown how to construct a network of hybrid couplers and phase shifters that has a transmission scattering matrix fulfilling Theorem~\ref{the:1}, depending on the permutation matrices $\mathbf{P}_{\ell}$ and the phase shifts $\theta_{\ell,s}$, for $s=1,\ldots,S_\ell$, and $\theta_{\ell,c,11},\theta_{\ell,c,12},\theta_{\ell,c,21}$, for $c=1,\ldots,C_\ell$.
Such a network is represented in Fig.~\ref{fig:sufficient}.
Since the \gls{dft} matrix has a decomposition that is a special case of the decomposition in Theorem~\ref{the:1}, we also know how to construct a network of hybrid couplers and phase shifters whose transmission scattering matrix is the \gls{dft} matrix.
Specifically, the \gls{dft} matrix decomposition in Proposition~\ref{pro:dft} requires $C_\ell=N/2$ and $S_\ell=0$, indicating that in the subnetworks implementing the matrices $\mathbf{D}_{\ell}$ there are $N/2$ hybrid couplers in parallel.
Furthermore, $\theta_{\ell,c,11}=\theta_{\ell,c,21}=0$ and $\theta_{\ell,c,12}=-2\pi((c-1)\mod 2^{\ell-1})/2^{\ell}$ give that the phase shifts in \eqref{eq:alpha}-\eqref{eq:gamma} boil down to $\alpha_{\ell,c}=0$, $\beta_{\ell,c}=\pi/2-2\pi((c-1)\mod 2^{\ell-1})/2^{\ell}$, and $\gamma_{\ell,c}=-\pi/2$, for $c=1,\ldots,N/2$ and $\ell=1,\ldots,L$.
We have therefore all the information needed to construct a network in the form of the one in Fig.~\ref{fig:sufficient} that computes the \gls{dft} of the input vector, as represented in Fig.~\ref{fig:dft}.

It is worth noting that the \gls{dft} matrix is different from the Butler matrix used for beam steering, as clarified in Appendix~D.
For implementing the $N\times N$ Butler matrix using hybrid couplers and phase shifters, systematic design procedures have been available since the 1960s \cite{moo64}.
Around the same period, shortly after the introduction of the \gls{fft} algorithm \cite{coo65}, researchers identified a striking conceptual similarity between the Butler matrix and the \gls{fft} \cite{nes68,she68,uen81}.
Specifically, the signal flow graph of a Butler matrix resembles the information processing structure of the \gls{fft}, and the operations performed by the hybrid couplers in a Butler matrix correspond to those carried out in the \gls{fft} algorithm.
Very recently, an inspiring work proposed computing the \gls{dft} in the analog domain with \gls{em} signals by stacking multiple transmissive metasurfaces with suitably designed transmissive properties \cite{an24}, thereby differing from our approach.

\section{Examples of Networks Computing the DFT}
\label{sec:examples}

We have shown that the \gls{dft} can be computed in the analog domain by hybrid couplers and phase shifters.
In this section, we clarify how this can be achieved by showing networks of hybrid couplers and phase shifters that compute four examples of $N\times N$ \glspl{dft}, with $N=2^L$ where $L\in\{1,2,3,4\}$.
For each of these four \gls{dft} matrices, we show their specific decomposition $\mathbf{F}_{N}=\mathbf{P}_{L}\mathbf{D}_{L}\cdots\mathbf{P}_{2}\mathbf{D}_{2}\mathbf{P}_{1}\mathbf{D}_{1}\mathbf{P}_{0}$ and provide the schematic of the network of hybrid couplers and phase shifters whose transmission scattering matrix is $\mathbf{F}_{N}$.

For $L=1$, the $2\times2$ \gls{dft} matrix $\mathbf{F}_{2}$ can be trivially decomposed according to Proposition~\ref{pro:dft} as
\begin{equation}
\mathbf{F}_2=
\underbrace{\mathbf{I}_2}_{\mathbf{P}_{1}}
\underbrace{\tilde{\mathbf{D}}_2}_{\mathbf{D}_{1}}
\underbrace{\mathbf{I}_2}_{\mathbf{P}_{0}},\label{eq:F2}
\end{equation}
where we highlight that $\mathbf{P}_0=\mathbf{I}_2$, $\mathbf{D}_1=\tilde{\mathbf{D}}_2$, and $\mathbf{P}_1=\mathbf{I}_2$.
Following the systematic design procedure summarized in Fig.~\ref{fig:dft} for the case $L=1$, the resulting network of hybrid couplers and phase shifters whose transmission scattering matrix is $\mathbf{F}_2$ is represented in Fig.~\ref{fig:dft2}.
Since $\mathbf{P}_0$ and $\mathbf{P}_1$ are the identity matrix, there are no permutation networks.
Note that this network is a series of three networks with transmission scattering matrices
\begin{equation}
\begin{bmatrix}
1 & 0\\
0 & e^{j\frac{\pi}{2}}
\end{bmatrix},\;
\frac{1}{\sqrt{2}}
\begin{bmatrix}
j & 1\\
1 & j
\end{bmatrix},\;
\begin{bmatrix}
e^{-j\frac{\pi}{2}} & 0\\
0 & 1
\end{bmatrix},
\end{equation}
where the first network is the parallel of an interconnection and a phase shifter with phase $\pi/2$, the second network is a hybrid coupler, and the third network is the parallel of a phase shifter with phase $-\pi/2$ and an interconnection.
Hence, following Corollary~\ref{cor:ser}, the network in Fig.~\ref{fig:dft2} has transmission scattering matrix
\begin{align}
\bar{\mathbf{S}}&=
\begin{bmatrix}e^{-j\frac{\pi}{2}} & 0\\0 & 1\end{bmatrix}
\left(\frac{1}{\sqrt{2}}
\begin{bmatrix}j & 1\\1 & j\end{bmatrix}\right)
\begin{bmatrix}1 & 0\\0 & e^{j\frac{\pi}{2}}\end{bmatrix}\\
&=\frac{1}{\sqrt{2}}
\begin{bmatrix}
1 & 1\\
1 & -1
\end{bmatrix},
\end{align}
which is $\bar{\mathbf{S}}=\mathbf{F}_2$, as expected.

\begin{figure}[t]
\centering
\includegraphics[width=0.24\textwidth]{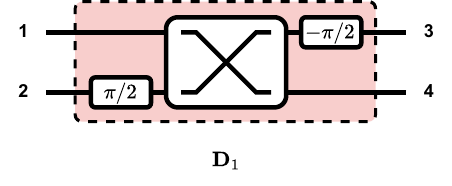}
\caption{A network computing the $2\times2$ DFT.}
\label{fig:dft2}
\end{figure}

For $L=2$, the $4\times4$ \gls{dft} matrix $\mathbf{F}_{4}$ can be decomposed following Proposition~\ref{pro:dft} as
\begin{equation}
\mathbf{F}_4=
\underbrace{\tilde{\mathbf{P}}_4}_{\mathbf{P}_2}
\underbrace{\tilde{\mathbf{D}}_4}_{\mathbf{D}_2}
\underbrace{\tilde{\mathbf{P}}_4^T}_{\mathbf{P}_1}
\underbrace{\left(\mathbf{I}_{2}\otimes\tilde{\mathbf{D}}_2\right)}_{\mathbf{D}_1}
\underbrace{\tilde{\mathbf{P}}_4}_{\mathbf{P}_0}.\label{eq:F4}
\end{equation}
Note that this decomposition can also be obtained by recalling that $\mathbf{F}_4=\tilde{\mathbf{P}}_4\tilde{\mathbf{D}}_4\tilde{\mathbf{P}}_4^T\left(\mathbf{I}_{2}\otimes\mathbf{F}_{2}\right)\tilde{\mathbf{P}}_4$, and expressing $\mathbf{F}_{2}$ as in \eqref{eq:F2}.
Specializing Fig.~\ref{fig:dft} for the case $L=2$, the resulting network of hybrid couplers and phase shifters whose transmission scattering matrix is $\mathbf{F}_{4}$ is represented in Fig.~\ref{fig:dft4}.

For $L=3$, the $8\times8$ \gls{dft} matrix $\mathbf{F}_{8}$ can be decomposed with Proposition~\ref{pro:dft} as
\begin{multline}
\mathbf{F}_8=
\underbrace{\tilde{\mathbf{P}}_8}_{\mathbf{P}_3}
\underbrace{\tilde{\mathbf{D}}_8}_{\mathbf{D}_3}
\underbrace{\tilde{\mathbf{P}}_8^T\left(\mathbf{I}_2\otimes\tilde{\mathbf{P}}_{4}\right)}_{\mathbf{P}_2}
\underbrace{\left(\mathbf{I}_2\otimes\tilde{\mathbf{D}}_{4}\right)}_{\mathbf{D}_2}\\
\underbrace{\left(\mathbf{I}_2\otimes\tilde{\mathbf{P}}_{4}^T\right)}_{\mathbf{P}_1}
\underbrace{\left(\mathbf{I}_4\otimes\tilde{\mathbf{D}}_{2}\right)}_{\mathbf{D}_1}
\underbrace{\left(\mathbf{I}_2\otimes\tilde{\mathbf{P}}_{4}\right)\tilde{\mathbf{P}}_8}_{\mathbf{P}_0}.\label{eq:F8}
\end{multline}
The same decomposition can also be obtained $\mathbf{F}_8=\tilde{\mathbf{P}}_8\tilde{\mathbf{D}}_8\tilde{\mathbf{P}}_8^T\left(\mathbf{I}_2\otimes\mathbf{F}_{4}\right)\tilde{\mathbf{P}}_8$, expressing $\mathbf{F}_{4}$ by \eqref{eq:F4}, and applying the mixed-product property of the Kronecker product.
Based on the design in Fig.~\ref{fig:dft} specialized for $L=3$, a network of hybrid couplers and phase shifters whose transmission scattering matrix is $\mathbf{F}_{8}$ is shown in Fig.~\ref{fig:dft8}.

For $L=4$, the $16\times16$ \gls{dft} matrix $\mathbf{F}_{16}$ can be decomposed with Proposition~\ref{pro:dft} as
\begin{multline}
\mathbf{F}_{16}=
\underbrace{\tilde{\mathbf{P}}_{16}}_{\mathbf{P}_{4}}
\underbrace{\tilde{\mathbf{D}}_{16}}_{\mathbf{D}_{4}}
\underbrace{\tilde{\mathbf{P}}_{16}^T\left(\mathbf{I}_2\otimes\tilde{\mathbf{P}}_{8}\right)}_{\mathbf{P}_{3}}
\underbrace{\left(\mathbf{I}_2\otimes\tilde{\mathbf{D}}_{8}\right)}_{\mathbf{D}_{3}}\\
\underbrace{\left(\mathbf{I}_2\otimes\tilde{\mathbf{P}}_{8}^T\right)\left(\mathbf{I}_4\otimes\tilde{\mathbf{P}}_{4}\right)}_{\mathbf{P}_{2}}
\underbrace{\left(\mathbf{I}_4\otimes\tilde{\mathbf{D}}_{4}\right)}_{\mathbf{D}_{2}}
\underbrace{\left(\mathbf{I}_4\otimes\tilde{\mathbf{P}}_{4}^T\right)}_{\mathbf{P}_{1}}\\
\underbrace{\left(\mathbf{I}_8\otimes\tilde{\mathbf{D}}_{2}\right)}_{\mathbf{D}_{1}}
\underbrace{\left(\mathbf{I}_4\otimes\tilde{\mathbf{P}}_{4}\right)\left(\mathbf{I}_2\otimes\tilde{\mathbf{P}}_{8}\right)\tilde{\mathbf{P}}_{16}}_{\mathbf{P}_{0}}.\label{eq:F16}
\end{multline}
Such a decomposition can also be derived recalling that $\mathbf{F}_{16}=\tilde{\mathbf{P}}_{16}\tilde{\mathbf{D}}_{16}\tilde{\mathbf{P}}_{16}^T\left(\mathbf{I}_2\otimes\mathbf{F}_{8}\right)\tilde{\mathbf{P}}_{16}$, where $\mathbf{F}_{8}$ is given by \eqref{eq:F8}, and applying the mixed-product property of the Kronecker product.
Considering the design in Fig.~\ref{fig:dft} for the case $L=4$, a network of hybrid couplers and phase shifters whose transmission scattering matrix is $\mathbf{F}_{16}$ is given in Fig.~\ref{fig:dft16}.

\begin{figure}[t]
\centering
\includegraphics[width=0.48\textwidth]{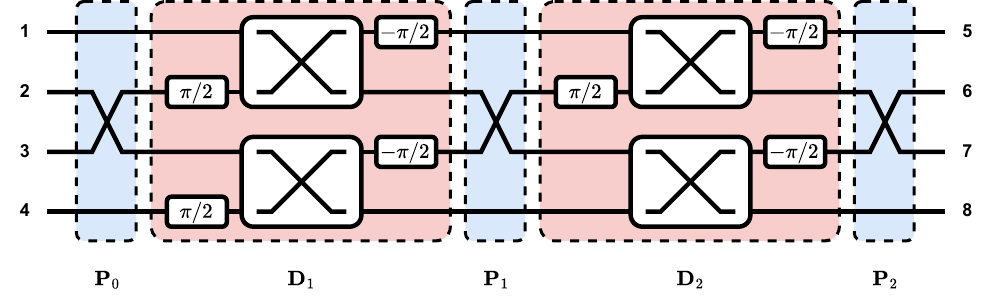}
\caption{A network computing the $4\times4$ DFT.}
\label{fig:dft4}
\end{figure}

\begin{figure*}[t]
\centering
\includegraphics[width=0.72\textwidth]{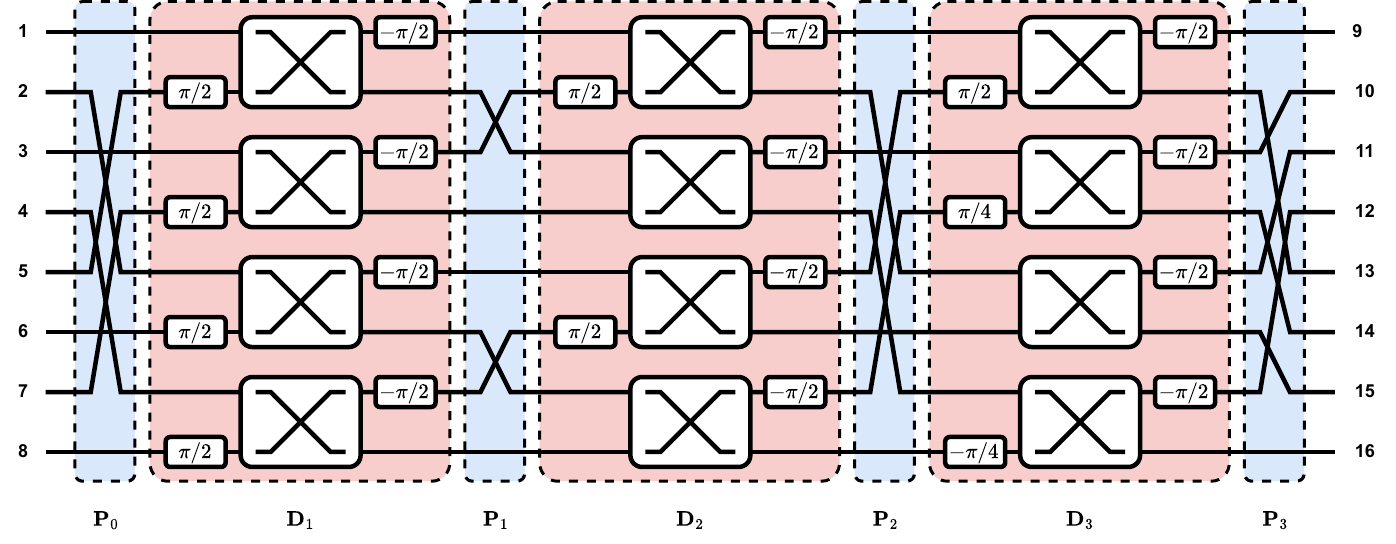}
\caption{A network computing the $8\times8$ DFT.}
\label{fig:dft8}
\end{figure*}

\begin{figure*}[t]
\centering
\includegraphics[width=0.96\textwidth]{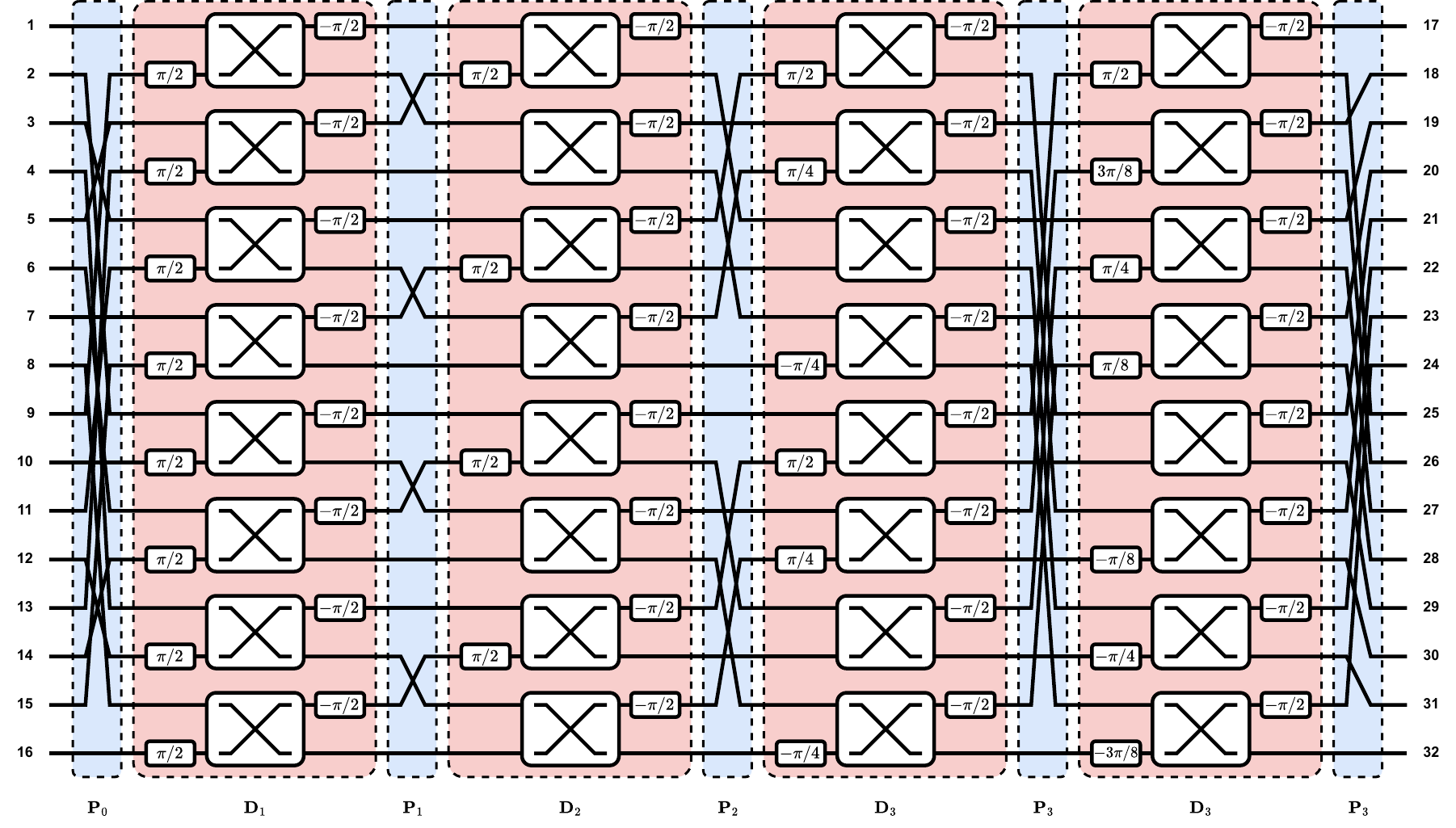}
\caption{A network computing the $16\times16$ DFT.}
\label{fig:dft16}
\end{figure*}

We have explicitly seen how to implement networks of hybrid couplers and phase shifters to compute $N\times N$ \glspl{dft} of the input signal, where $N=2^L$ with $L\in\{1,2,3,4\}$.
Larger \glspl{dft} with $L>4$ can also be computed in the analog domain with hybrid couplers and phase shifters, and the corresponding networks can readily be designed following Fig.~\ref{fig:dft}.
We have observed that for $L\in\{1,2,3,4\}$ the required number of hybrid couplers is 1, 4, 12, and 32, respectively.
In general, $2^{L-1}$ hybrid couplers are needed to implement each matrix $\mathbf{D}_{\ell}$, for $\ell=1,\ldots,L$, resulting in a total of
\begin{equation}
C^{\text{DFT}}=2^{L-1}L=\frac{N}{2}\log_2\left(N\right).
\end{equation}
This value therefore characterizes how the circuit size scales with the \gls{dft} size $N$.

\section{Analog Computation of the\\Hadamard Transform}
\label{sec:hada}

\begin{figure*}[t]
\centering
\includegraphics[width=0.96\textwidth]{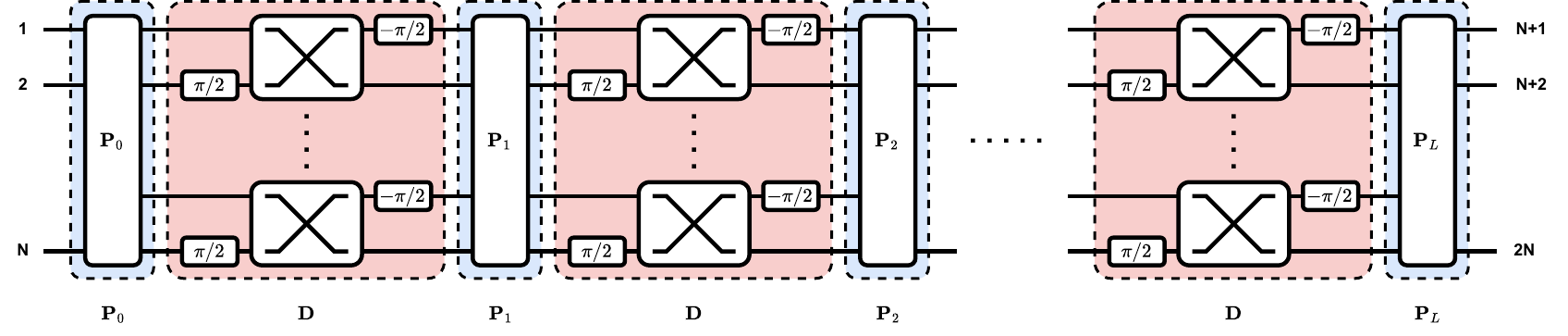}
\caption{Network of hybrid couplers and phase shifters performing the $N\times N$ Hadamard transform of the input, with $N=2^L$.
The permutation matrices $\mathbf{P}_\ell$ are given by \eqref{eq:Pell-hada}.}
\label{fig:hada}
\end{figure*}

In this section, we demonstrate how the Hadamard transform can be computed using hybrid couplers and phase shifters, following an approach similar to that used for the \gls{dft}.
We begin introducing the Hadamard matrix with a recursive definition \cite{fin77}, and then show that it has a decomposition as required by Theorem~\ref{the:1}.
\begin{definition}
(Hadamard matrix)
The $N\times N$ Hadamard matrix $\mathbf{H}_N\in\mathbb{R}^{N\times N}$, with $N=2^L$, $L\in\mathbb{N}$, is recursively defined as
\begin{equation}
\mathbf{H}_N=\mathbf{H}_2\otimes\mathbf{H}_{N/2},
\end{equation}
when $L>1$, and $\mathbf{H}_2=[[1,1]^T,[1,-1]^T]/\sqrt{2}$.
\end{definition}
As three illustrative examples, the Hadamard matrices $\mathbf{H}_2$, $\mathbf{H}_4$, and $\mathbf{H}_8$ are
\begin{equation}
\mathbf{H}_2=\frac{1}{\sqrt{2}}
\begin{bmatrix}
1 & 1\\
1 & -1
\end{bmatrix},\;
\mathbf{H}_4=\frac{1}{\sqrt{4}}
\begin{bmatrix}
1 &  1 &  1 &  1\\
1 & -1 &  1 & -1\\
1 &  1 & -1 & -1\\
1 & -1 & -1 &  1
\end{bmatrix},
\end{equation}
\begin{equation}
\mathbf{H}_8=\frac{1}{\sqrt{8}}
\begin{bmatrix}
1 &  1 &  1 &  1 &  1 &  1 &  1 &  1\\
1 & -1 &  1 & -1 &  1 & -1 &  1 & -1\\
1 &  1 & -1 & -1 &  1 &  1 & -1 & -1\\
1 & -1 & -1 &  1 &  1 & -1 & -1 &  1\\
1 &  1 &  1 &  1 & -1 & -1 & -1 & -1\\
1 & -1 &  1 & -1 & -1 &  1 & -1 &  1\\
1 &  1 & -1 & -1 & -1 & -1 &  1 &  1\\
1 & -1 & -1 &  1 & -1 &  1 &  1 & -1
\end{bmatrix}.
\end{equation}
Note that Hadamard matrices are unitary and have entries being $+1$ or $-1$ (up to the scaling factor $1/\sqrt{N}$).

By exploiting the recursive definition of the Hadamard matrix, we can decompose it in the form required by Theorem~\ref{the:1}, as stated in the following proposition.
\begin{proposition}
The $N\times N$ Hadamard matrix $\mathbf{H}_N$, with $N=2^L$, $L\in\mathbb{N}$, can be decomposed as
\begin{equation}
\mathbf{H}_{N}=\mathbf{P}_{L}\mathbf{D}\cdots\mathbf{P}_{2}\mathbf{D}\mathbf{P}_{1}\mathbf{D}\mathbf{P}_{0}.\label{eq:theor-hada}
\end{equation}
The matrix $\mathbf{P}_{\ell}\in\{0,1\}^{N\times N}$ is a permutation matrix given by
\begin{equation}
\mathbf{P}_{\ell}=\left(\mathbf{I}_{2^{L-\ell-1}}\otimes\tilde{\mathbf{P}}_{2^{\ell+1}}^T\right)\left(\mathbf{I}_{2^{L-\ell}}\otimes\tilde{\mathbf{P}}_{2^{\ell}}\right),\label{eq:Pell-hada}
\end{equation}
for $\ell=0,\ldots,L-1$, and $\mathbf{P}_{L}=\tilde{\mathbf{P}}_{2^L}$.
Besides, $\mathbf{D}\in\mathbb{C}^{N\times N}$ is a block diagonal matrix given by
\begin{equation}
\mathbf{D}=\mathbf{I}_{2^{L-1}}\otimes\mathbf{H}_{2},
\end{equation}
i.e., having $2^{L-1}$ times the $2\times2$ block $\mathbf{H}_{2}$ on the diagonal.
\label{pro:hada}
\end{proposition}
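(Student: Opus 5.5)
The plan is induction on $L$, mirroring the DFT case (Proposition~\ref{pro:dft}) but with the butterfly blocks replaced by plain $\mathbf{H}_2$ blocks. The starting point is a Hadamard analogue of Lemma~\ref{lem:fft}:
\begin{equation*}
\mathbf{H}_N=\tilde{\mathbf{P}}_N\left(\mathbf{I}_{N/2}\otimes\mathbf{H}_2\right)\tilde{\mathbf{P}}_N^T\left(\mathbf{I}_2\otimes\mathbf{H}_{N/2}\right).
\end{equation*}
This should follow from the definition $\mathbf{H}_N=\mathbf{H}_2\otimes\mathbf{H}_{N/2}=(\mathbf{H}_2\otimes\mathbf{I}_{N/2})(\mathbf{I}_2\otimes\mathbf{H}_{N/2})$ by the mixed-product property. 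It remains to use the perfect-shuffle commutation identity $\mathbf{H}_2\otimes\mathbf{I}_{N/2}=\tilde{\mathbf{P}}_N(\mathbf{I}_{N/2}\otimes\mathbf{H}_2)\tilde{\mathbf{P}}_N^T$.

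I would check that commutation identity entrywise using the definition of $\tilde{\mathbf{P}}_N$. Since $\tilde{\mathbf{P}}_N\mathbf{v}$ lists the odd entries of $\mathbf{v}$ first and then the even ones, $\tilde{\mathbf{P}}_N^T$ interleaves the two halves. Conjugating by it therefore turns a $2\times2$ operation acting on positions $(c,N/2+c)$ into the same operation acting on the adjacent pair $(2c-1,2c)$. The identity is the standard commutation relation $\mathbf{A}\otimes\mathbf{B}=\mathbf{K}(\mathbf{B}\otimes\mathbf{A})\mathbf{K}^T$, specialized to the present indexing.

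With this recursion in hand, the induction runs as follows.
\begin{itemize}
\item \textbf{Base case.} For $L=1$, $\mathbf{H}_2=\mathbf{I}_2\mathbf{H}_2\mathbf{I}_2$. This matches \eqref{eq:Pell-hada}, because $\mathbf{P}_0=\tilde{\mathbf{P}}_2^T\tilde{\mathbf{P}}_1$, with $\tilde{\mathbf{P}}_1$ and $\tilde{\mathbf{P}}_2$ read as identities, and $\mathbf{P}_1=\tilde{\mathbf{P}}_2=\mathbf{I}_2$.
\item \textbf{Inductive step.} Assume the decomposition holds for $N/2=2^{L-1}$ with permutations $\mathbf{P}'_\ell$ and $\mathbf{D}'=\mathbf{I}_{2^{L-2}}\otimes\mathbf{H}_2$. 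Substitute it into $\mathbf{I}_2\otimes\mathbf{H}_{N/2}$ and use the mixed-product property to write
\begin{equation*}
\mathbf{I}_2\otimes\mathbf{H}_{N/2}=(\mathbf{I}_2\otimes\mathbf{P}'_{L-1})\,\mathbf{D}\cdots\mathbf{D}\,(\mathbf{I}_2\otimes\mathbf{P}'_0),
\end{equation*}
since $\mathbf{I}_2\otimes\mathbf{D}'=\mathbf{I}_{2^{L-1}}\otimes\mathbf{H}_2=\mathbf{D}$.
\item \textbf{Identifying the new permutations.} The outermost factors give $\mathbf{P}_L=\tilde{\mathbf{P}}_N$ and the top block $\mathbf{D}$. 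The next permutation is $\mathbf{P}_{L-1}=\tilde{\mathbf{P}}_N^T(\mathbf{I}_2\otimes\mathbf{P}'_{L-1})=\tilde{\mathbf{P}}_{2^L}^T(\mathbf{I}_2\otimes\tilde{\mathbf{P}}_{2^{L-1}})$, which is exactly \eqref{eq:Pell-hada} with $\ell=L-1$. For $\ell\le L-2$, $\mathbf{I}_2\otimes\mathbf{P}'_\ell$ expands through $\mathbf{I}_2\otimes\mathbf{I}_{2^{L-1-\ell-1}}=\mathbf{I}_{2^{L-\ell-1}}$ into the claimed formula \eqref{eq:Pell-hada}. This includes $\ell=0$, where $\tilde{\mathbf{P}}_1=1$ gives $\mathbf{P}_0=\mathbf{I}_{2^{L-1}}\otimes\tilde{\mathbf{P}}_2^T=\mathbf{I}_N$.
\end{itemize}
The main obstacle I expect is the shuffle commutation identity, specifically getting the orientation ($\tilde{\mathbf{P}}_N$ versus $\tilde{\mathbf{P}}_N^T$) consistent with the paper's convention. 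I would verify it first on $N=4$ using the explicit $\tilde{\mathbf{P}}_4$ given earlier. After that, the remaining Kronecker bookkeeping is routine.

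Alternatively, one could note that the structure is identical to Proposition~\ref{pro:dft} with all twiddle phases set to zero. In that case $\tilde{\mathbf{D}}_{2^\ell}$ becomes $\mathbf{I}_{2^{\ell-1}}\otimes\mathbf{H}_2$, and the recursion $\mathbf{F}_N\to\mathbf{H}_N$ replaces $\mathbf{\Omega}_{N/2}$ by $\mathbf{I}_{N/2}$. Reusing the proof in Appendix~C this way serves as a consistency check, and also explains why $\mathbf{P}_0$ here differs from the DFT's $\mathbf{P}_0$ in \eqref{eq:P0-dft}. The DFT carries an extra $\tilde{\mathbf{P}}_N$ on the right, whereas the Hadamard recursion from the Kronecker definition has none.
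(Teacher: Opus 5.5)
Your proposal is correct and follows essentially the same route as the paper's proof in Appendix~E. The paper also factors $\mathbf{H}_N=(\mathbf{H}_2\otimes\mathbf{I}_{N/2})(\mathbf{I}_2\otimes\mathbf{H}_{N/2})$, rewrites the first factor as $\tilde{\mathbf{P}}_N(\mathbf{I}_{N/2}\otimes\mathbf{H}_2)\tilde{\mathbf{P}}_N^T$, inducts on $L$ via the mixed-product property, and unrolls $\mathbf{P}_\ell^{(L)}=\mathbf{I}_2\otimes\mathbf{P}_\ell^{(L-1)}$; your explicit convention $\tilde{\mathbf{P}}_1=1$ (giving $\mathbf{P}_0=\mathbf{I}_N$) makes precise a point the paper leaves implicit.
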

\begin{proof}
Please, refer to Appendix~E.
\end{proof}

As a direct consequence of the Hadamard matrix decomposition in Proposition~\ref{pro:hada}, we obtain the following corollary.
\begin{corollary}
A microwave network that performs the Hadamard transform of its input signal $\mathbf{u}$, i.e., computes $\mathbf{H}_N\mathbf{u}/2$, with $N=2^L$, $L\in\mathbb{N}$, is implementable with hybrid couplers and phase shifters.
\label{cor:hada}
\end{corollary}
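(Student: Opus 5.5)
The plan mirrors the proof of Corollary~\ref{cor:dft}. By the \gls{milac} model of Section~\ref{sec:model}, a matched network with transmission scattering matrix $\mathbf{S}_{21}=\mathbf{H}_N$ returns $\mathbf{v}_2=\mathbf{H}_N\mathbf{u}/2$. Hence it suffices to exhibit such a network. By Theorem~\ref{the:1}, this is possible exactly when $\mathbf{H}_N$ admits a decomposition of the form \eqref{eq:theor}, with the scattering matrix then given by $\mathbf{S}=[\mathbf{0}_N,\mathbf{H}_N^T;\mathbf{H}_N,\mathbf{0}_N]$. Since $\mathbf{H}_N$ is real and symmetric, $\mathbf{H}_N^T=\mathbf{H}_N$, though this is not needed. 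So I will take the decomposition of Proposition~\ref{pro:hada}, $\mathbf{H}_N=\mathbf{P}_L\mathbf{D}\cdots\mathbf{P}_1\mathbf{D}\mathbf{P}_0$, and check that each factor meets the requirements of Theorem~\ref{the:1}, setting $\mathbf{D}_\ell=\mathbf{D}$ for all $\ell=1,\ldots,L$.

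First, I verify the permutation factors. Each $\mathbf{P}_\ell$ in \eqref{eq:Pell-hada} is a product of matrices of the form $\mathbf{I}_{m}\otimes\tilde{\mathbf{P}}_{k}$ or $\mathbf{I}_{m}\otimes\tilde{\mathbf{P}}_{k}^T$. A Kronecker product of an identity with a permutation matrix is a permutation matrix, transposes of permutation matrices are permutation matrices, and products of permutation matrices are permutation matrices. Hence every $\mathbf{P}_\ell$, including $\mathbf{P}_L=\tilde{\mathbf{P}}_{2^L}$, is a valid permutation matrix.

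Second, I verify the block-diagonal factor. $\mathbf{D}=\mathbf{I}_{2^{L-1}}\otimes\mathbf{H}_2$ has the form \eqref{eq:Dell} with $C_\ell=N/2$ and $S_\ell=0$, so $2C_\ell+S_\ell=N$ holds. Each block equals $\mathbf{H}_2=\frac{1}{\sqrt{2}}[[1,1]^T,[1,-1]^T]$. Matching it with \eqref{eq:Dellc} requires $\theta_{\ell,c,11}=\theta_{\ell,c,12}=\theta_{\ell,c,21}=0$. The constraint then gives $\theta_{\ell,c,22}=\pi-0+0+0=\pi$, so $e^{j\theta_{\ell,c,22}}=-1$, which matches the $(2,2)$ entry of $\mathbf{H}_2$. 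The only genuinely nontrivial check is this phase constraint on the fourth entry, and it is immediate.

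With both factor types verified, Theorem~\ref{the:1} (the sufficiency part) gives implementability. For completeness, I would note the explicit design from the proof of sufficiency. Substituting into \eqref{eq:alpha}--\eqref{eq:gamma} gives $\alpha_{\ell,c}=0$, $\beta_{\ell,c}=\pi/2$, and $\gamma_{\ell,c}=-\pi/2$. Each $\mathbf{H}_2$ block is therefore realized exactly as in the $2\times2$ \gls{dft} network of Fig.~\ref{fig:dft2}, which is consistent with $\mathbf{H}_2=\mathbf{F}_2$. The resulting network uses $\frac{N}{2}\log_2 N$ hybrid couplers, as sketched in Fig.~\ref{fig:hada}.
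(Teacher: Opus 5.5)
Your proposal is correct and follows the paper's proof essentially step for step. Like the paper, you invoke the MiLAC model and Theorem~\ref{the:1}, take the decomposition of Proposition~\ref{pro:hada}, note that the $\mathbf{P}_\ell$ are products of permutation matrices, and identify each $\mathbf{D}_\ell=\mathbf{I}_{2^{L-1}}\otimes\mathbf{H}_2$ with $C_\ell=N/2$, $S_\ell=0$, and $\theta_{\ell,c,11}=\theta_{\ell,c,12}=\theta_{\ell,c,21}=0$. Your explicit check that the constraint then forces $\theta_{\ell,c,22}=\pi$, matching the $-1$ entry of $\mathbf{H}_2$, is a detail the paper leaves implicit, and it is exactly right.
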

\begin{proof}
A microwave network performs the Hadamard transform of its input when it is a matched network with transmission scattering matrix being the $N\times N$ Hadamard matrix $\mathbf{H}_N$.
Such a network can therefore be implemented with hybrid couplers and phase shifters if and only if $\mathbf{H}_N$ can be decomposed as in Theorem~\ref{the:1}.
Interestingly, this is the case, since the decomposition of $\mathbf{H}_N$ provided by Proposition~\ref{pro:hada} is in the form required by Theorem~\ref{the:1}.
Note that the decomposition is already in the form $\mathbf{H}_{N}=\mathbf{P}_{L}\mathbf{D}_{L}\cdots\mathbf{P}_{2}\mathbf{D}_{2}\mathbf{P}_{1}\mathbf{D}_{1}\mathbf{P}_{0}$, hence we need to verify that the matrices $\mathbf{P}_{\ell}$ and $\mathbf{D}_{\ell}$ are in the correct form.
First, all matrices $\mathbf{P}_{\ell}$ given by \eqref{eq:Pell-hada} are permutation matrices since they are products of permutation matrices, fulfilling Theorem~\ref{the:1}.
Second, all matrices $\mathbf{D}_{\ell}$ are given by $\mathbf{D}_{\ell}=\mathbf{I}_{2^{L-1}}\otimes\mathbf{H}_{2}$, satisfying Theorem~\ref{the:1} since they are block diagonal matrices where $C_\ell=N/2$ and $S_\ell=0$, and $\theta_{\ell,c,11}=\theta_{\ell,c,12}=\theta_{\ell,c,21}=0$, for $c=1,\ldots,N/2$.
\end{proof}

Corollary~\ref{cor:hada} states that the Hadamard transform of a given vector can be computed in the analog domain with a network of hybrid couplers and phase shifters.
Such a network can be systematically constructed for any Hadamard transform with size $N$ power of two, following the Proof of sufficiency of Theorem~\ref{the:1}, similar to what was discussed for the \gls{dft}.
Specifically, the Hadamard matrix decomposition in Proposition~\ref{pro:hada} requires $C_\ell=N/2$ and $S_\ell=0$, indicating that in the subnetworks implementing the matrices $\mathbf{D}_{\ell}$ there are $N/2$ hybrid couplers in parallel.
Furthermore, $\theta_{\ell,c,11}=\theta_{\ell,c,12}=\theta_{\ell,c,21}=0$ give that the phase shifts in \eqref{eq:alpha}-\eqref{eq:gamma} become $\alpha_{\ell,c}=0$, $\beta_{\ell,c}=\pi/2$, and $\gamma_{\ell,c}=-\pi/2$, for $c=1,\ldots,N/2$ and $\ell=1,\ldots,L$.
We can therefore construct a network in the form of the one in Fig.~\ref{fig:sufficient} that computes the Hadamard transform of its input, as represented in Fig.~\ref{fig:hada}.
As observed for the \gls{dft}, for computing the $N\times N$ Hadamard transform we need $2^{L-1}$ hybrid couplers to implement each matrix $\mathbf{D}$, and there are $L$ of them, resulting in a total of
\begin{equation}
C^{\text{Hadamard}}=2^{L-1}L=\frac{N}{2}\log_2\left(N\right).
\end{equation}
This indicates that this circuit exhibits the same scaling behavior as the circuit for the \gls{dft}.

\section{Analog Computation of the Haar Transform}
\label{sec:haar}

\begin{figure*}[t]
\centering
\includegraphics[width=0.96\textwidth]{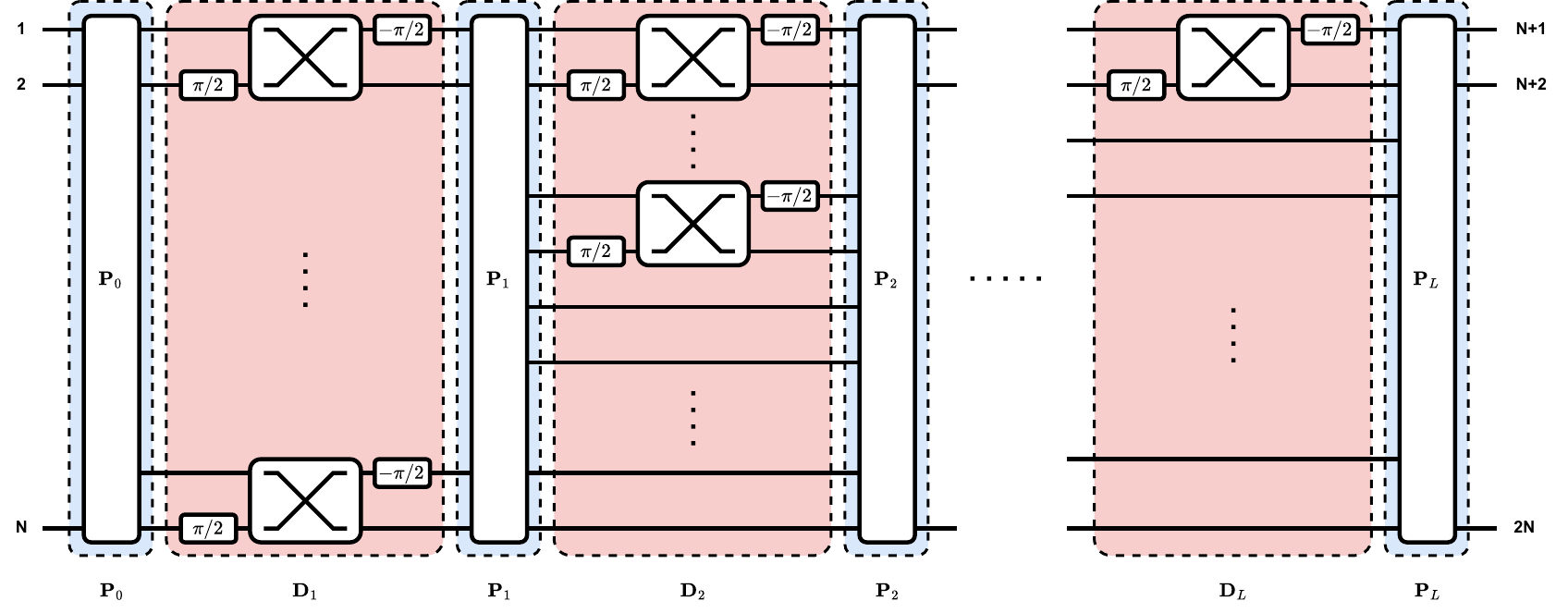}
\caption{Network of hybrid couplers and phase shifters performing the $N\times N$ Haar transform of the input, with $N=2^L$.
The permutation matrices $\mathbf{P}_\ell$ are given by \eqref{eq:Pell-haar}.}
\label{fig:haar}
\end{figure*}

In this section, we demonstrate how the Haar transform, which is a popular Wavelet transform, can be implemented using hybrid couplers and phase shifters, following an approach analogous to that employed for the \gls{dft} and Hadamard transform.
We begin by introducing the Haar matrix through the following recursive definition and then show that it can be decomposed as required by Theorem~\ref{the:1}.
\begin{definition}
(Haar matrix)
The $N\times N$ Haar matrix $\mathbf{W}_N\in\mathbb{R}^{N\times N}$, with $N=2^L$, $L\in\mathbb{N}$, is recursively defined as
\begin{equation}
\mathbf{W}_N=\frac{1}{\sqrt{2}}
\begin{bmatrix}
\mathbf{W}_{N/2} & \mathbf{W}_{N/2}\\
\mathbf{I}_{N/2} & -\mathbf{I}_{N/2}
\end{bmatrix},
\end{equation}
when $L>1$, and $\mathbf{W}_2=[[1,1]^T,[1,-1]^T]/\sqrt{2}$.\footnote{Often, the Haar matrix is also defined as
$\mathbf{W}_N=[[\mathbf{W}_{N/2}\otimes[1,1]]^T,[\mathbf{I}_{N/2}\otimes[1,-1]]^T]^T/\sqrt{2}$,
which is equivalent to our definition up to a permutation of the rows and columns \cite{fin77}.
Our definition is adopted for analytical convenience in the following discussion.}
\end{definition}
Note that Haar matrices are unitary, and the matrices $\mathbf{W}_2$, $\mathbf{W}_4$, and $\mathbf{W}_8$ are reported as three examples:
{\scriptsize
\begin{equation}
\mathbf{W}_2=\frac{1}{\sqrt{2}}
\begin{bmatrix}
1 & 1\\
1 & -1
\end{bmatrix},\;
\mathbf{W}_4=\frac{1}{\sqrt{4}}
\begin{bmatrix}
1 &  1 &  1 &  1\\
1 & -1 &  1 & -1\\
\sqrt{2} & 0 & -\sqrt{2} & 0\\
0 & \sqrt{2} & 0 & -\sqrt{2}
\end{bmatrix},
\end{equation}
\begin{equation}
\mathbf{W}_8=\frac{1}{\sqrt{8}}
\begin{bmatrix}
1 &  1 &  1 &  1 &  1 &  1 &  1 &  1\\
1 & -1 &  1 & -1 &  1 & -1 &  1 & -1\\
\sqrt{2} & 0 & -\sqrt{2} & 0 & \sqrt{2} & 0 & -\sqrt{2} & 0\\
0 & \sqrt{2} & 0 & -\sqrt{2} & 0 & \sqrt{2} & 0 & -\sqrt{2}\\
2 &  0 &  0 &  0 & -2 &  0 &  0 &  0\\
0 &  2 &  0 &  0 &  0 & -2 &  0 &  0\\
0 &  0 &  2 &  0 &  0 &  0 & -2 &  0\\
0 &  0 &  0 &  2 &  0 &  0 &  0 & -2
\end{bmatrix}.
\end{equation}}

By using the recursive definition of Haar matrix, we show in the following proposition that it can be decomposed as required by Theorem~\ref{the:1}.
\begin{proposition}
The $N\times N$ Haar matrix $\mathbf{W}_N$, with $N=2^L$, $L\in\mathbb{N}$, can be decomposed as
\begin{equation}
\mathbf{W}_{N}=\mathbf{P}_{L}\mathbf{D}_{L}\cdots\mathbf{P}_{2}\mathbf{D}_{2}\mathbf{P}_{1}\mathbf{D}_{1}\mathbf{P}_{0}.\label{eq:theor-haar}
\end{equation}
The matrix $\mathbf{P}_{\ell}\in\{0,1\}^{N\times N}$ is a permutation matrix given by $\mathbf{P}_{0}=\tilde{\mathbf{P}}_{2^L}^T$,
\begin{equation}
\mathbf{P}_{\ell}=\text{diag}\left(\text{diag}\left(\tilde{\mathbf{P}}_{2^{L-\ell}}^T,\mathbf{I}_{2^{L-\ell}}\right)\tilde{\mathbf{P}}_{2^{L-\ell+1}},\mathbf{I}_{2^{L}-2^{L-\ell+1}}\right),\label{eq:Pell-haar}
\end{equation}
for $\ell=1,\ldots,L-1$, and $\mathbf{P}_{L}=\mathbf{I}_{2^{L}}$.
Besides, $\mathbf{D}_{\ell}\in\mathbb{C}^{N\times N}$ is a block diagonal matrix given by
\begin{equation}
\mathbf{D}_{\ell}=\text{diag}\left(\mathbf{I}_{2^{L-\ell}}\otimes\mathbf{W}_{2},\mathbf{I}_{2^{L}-2^{L-\ell+1}}\right),\label{eq:Dell-haar}
\end{equation}
for $\ell=1,\ldots,L$, having $2^{L-\ell}$ times the $2\times2$ block $\mathbf{W}_{2}$ on the diagonal, and $2^{L}-2^{L-\ell+1}$ ones.
\label{pro:haar}
\end{proposition}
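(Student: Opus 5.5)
The plan is to prove the decomposition by induction on $L$, using the recursive definition of $\mathbf{W}_N$. Define the recursion in "matrix form" first. With $\tilde{\mathbf{P}}_N$ the odd-even permutation, we have $\tilde{\mathbf{P}}_N^T\mathbf{x}$ interleaving the two halves of $\mathbf{x}$. Hence $(\mathbf{I}_{N/2}\otimes\mathbf{W}_2)\tilde{\mathbf{P}}_N^T$ maps $[\mathbf{a};\mathbf{b}]$ to the interleaved vector $[(a_1+b_1)/\sqrt2,(a_1-b_1)/\sqrt2,\ldots]$. Applying $\tilde{\mathbf{P}}_N$ afterwards collects these as $[(\mathbf{a}+\mathbf{b})/\sqrt2;(\mathbf{a}-\mathbf{b})/\sqrt2]$. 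Therefore
\begin{equation}
\mathbf{W}_N=\text{diag}\left(\mathbf{W}_{N/2},\mathbf{I}_{N/2}\right)\tilde{\mathbf{P}}_N\left(\mathbf{I}_{N/2}\otimes\mathbf{W}_2\right)\tilde{\mathbf{P}}_N^T,
\end{equation}
which follows directly from the definition $\mathbf{W}_N=\frac{1}{\sqrt2}[[\mathbf{W}_{N/2},\mathbf{W}_{N/2}];[\mathbf{I}_{N/2},-\mathbf{I}_{N/2}]]$. This step is a routine entrywise check of what $\tilde{\mathbf{P}}_N$ does.

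Next, I unroll this identity. Apply it again to $\mathbf{W}_{N/2}$ inside the leading block, embedding via $\text{diag}(\cdot,\mathbf{I}_{N/2})$. Iterating down to $\mathbf{W}_2$, the $\ell$th application acts on the leading $2^{L-\ell+1}$ coordinates with the others untouched. It contributes $\text{diag}(\tilde{\mathbf{P}}_{2^{L-\ell+1}}^T,\mathbf{I})$ on the right, then $\text{diag}(\mathbf{I}_{2^{L-\ell}}\otimes\mathbf{W}_2,\mathbf{I}_{2^L-2^{L-\ell+1}})=\mathbf{D}_\ell$, then $\text{diag}(\tilde{\mathbf{P}}_{2^{L-\ell+1}},\mathbf{I})$. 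The last step $\ell=L$ is $\mathbf{W}_2$ itself, i.e. $\mathbf{D}_L=\text{diag}(\mathbf{W}_2,\mathbf{I}_{N-2})$, with no surrounding permutations, since $\tilde{\mathbf{P}}_2=\mathbf{I}_2$. In other words, I would prove by induction on $\ell$ that
\begin{equation}
\mathbf{W}_N=\text{diag}(\mathbf{W}_{2^{L-\ell}},\mathbf{I})\,\mathbf{Q}_\ell\mathbf{D}_\ell\cdots\mathbf{Q}_1\mathbf{D}_1\mathbf{P}_0,
\end{equation}
where $\mathbf{Q}_k$ is the appropriate product of adjacent permutations.

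Finally, I group adjacent permutation factors. The left permutation of step $\ell$ is $\text{diag}(\tilde{\mathbf{P}}_{2^{L-\ell+1}},\mathbf{I})$, and the right permutation of step $\ell+1$ is $\text{diag}(\tilde{\mathbf{P}}_{2^{L-\ell}}^T,\mathbf{I})$. Between them sits $\mathbf{D}_\ell$ on one side and $\mathbf{D}_{\ell+1}$ on the other. Their product is $\text{diag}(\text{diag}(\tilde{\mathbf{P}}_{2^{L-\ell}}^T,\mathbf{I}_{2^{L-\ell}})\tilde{\mathbf{P}}_{2^{L-\ell+1}},\mathbf{I}_{2^L-2^{L-\ell+1}})$, which is exactly \eqref{eq:Pell-haar}. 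The outermost right factor is $\mathbf{P}_0=\tilde{\mathbf{P}}_{2^L}^T$, and the leftmost is $\mathbf{P}_L=\mathbf{I}$.

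The main obstacle is bookkeeping. I expect the ordering of the two embedded permutations in the product to be the place where an index slip or transpose error is most likely. I will resolve it by verifying the case $L=2$ explicitly against $\mathbf{W}_4$, which gives $\mathbf{W}_4=\mathbf{D}_2\mathbf{P}_1\mathbf{D}_1\tilde{\mathbf{P}}_4^T$, and then checking the base identity carefully.
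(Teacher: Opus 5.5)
Your proposal is correct and follows the paper's proof essentially step for step. You use the same factorization $\mathbf{W}_N=\text{diag}(\mathbf{W}_{N/2},\mathbf{I}_{N/2})\tilde{\mathbf{P}}_N(\mathbf{I}_{N/2}\otimes\mathbf{W}_2)\tilde{\mathbf{P}}_N^T$, expand the leading block recursively, and merge adjacent embedded permutations $\text{diag}(\tilde{\mathbf{P}}_{2^{L-\ell}}^T,\mathbf{I})\,\text{diag}(\tilde{\mathbf{P}}_{2^{L-\ell+1}},\mathbf{I})$ into $\mathbf{P}_\ell$. The only difference is cosmetic: you unroll at fixed $L$ and read off $\mathbf{P}_\ell,\mathbf{D}_\ell$ directly, whereas the paper inducts on $L$ and then solves the recursions $\mathbf{P}_\ell^{(L)}=\text{diag}(\mathbf{P}_{\ell-1}^{(L-1)},\mathbf{I}_{2^{L-1}})$ and $\mathbf{D}_\ell^{(L)}=\text{diag}(\mathbf{D}_{\ell-1}^{(L-1)},\mathbf{I}_{2^{L-1}})$.
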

\begin{proof}
Please, refer to Appendix~F.
\end{proof}

Using the Haar matrix decomposition in Proposition~\ref{pro:haar}, we derive the following corollary.
\begin{corollary}
A microwave network that performs the Haar transform of its input signal $\mathbf{u}$, i.e., computes $\mathbf{W}_N\mathbf{u}/2$, with $N=2^L$, $L\in\mathbb{N}$, is implementable with hybrid couplers and phase shifters.
\label{cor:haar}
\end{corollary}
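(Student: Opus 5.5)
The plan mirrors the proofs of Corollaries~\ref{cor:dft} and \ref{cor:hada}. By the \gls{milac} model of Section~\ref{sec:model}, a network computes $\mathbf{W}_N\mathbf{u}/2$ exactly when it is a matched network with transmission scattering matrix $\mathbf{W}_N$. Since $\mathbf{W}_N$ is real, such a network has scattering matrix with off-diagonal blocks $\mathbf{W}_N$ and $\mathbf{W}_N^T$. By Theorem~\ref{the:1}, it is therefore enough to show that the decomposition of $\mathbf{W}_N$ in Proposition~\ref{pro:haar} is in the form \eqref{eq:theor}. That proposition already gives $\mathbf{W}_N=\mathbf{P}_{L}\mathbf{D}_{L}\cdots\mathbf{P}_{1}\mathbf{D}_{1}\mathbf{P}_{0}$, so it only remains to check that each factor has the required structure.

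\textbf{Permutation factors.} $\mathbf{P}_0=\tilde{\mathbf{P}}_{2^L}^T$ is the transpose of a permutation matrix, hence a permutation matrix, and $\mathbf{P}_L=\mathbf{I}_{2^L}$ is trivially one. For $\ell=1,\ldots,L-1$, the matrix $\mathbf{P}_\ell$ in \eqref{eq:Pell-haar} is a block diagonal matrix with two blocks. The first block is $\text{diag}(\tilde{\mathbf{P}}_{2^{L-\ell}}^T,\mathbf{I}_{2^{L-\ell}})\tilde{\mathbf{P}}_{2^{L-\ell+1}}$, a product of permutation matrices, so it is a permutation matrix. The second block is an identity. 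A block diagonal matrix whose blocks are permutation matrices is itself a permutation matrix. One should also check the dimensions: the blocks have sizes $2^{L-\ell+1}$ and $2^L-2^{L-\ell+1}$, which sum to $N$.

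\textbf{Block diagonal factors.} The matrix $\mathbf{D}_\ell$ in \eqref{eq:Dell-haar} is $\text{diag}(\mathbf{W}_2,\ldots,\mathbf{W}_2,1,\ldots,1)$. This matches \eqref{eq:Dell} with:
\begin{itemize}
\item $C_\ell=2^{L-\ell}$ blocks $\mathbf{D}_{\ell,c}=\mathbf{W}_2$, for which $\theta_{\ell,c,11}=\theta_{\ell,c,12}=\theta_{\ell,c,21}=0$;
\item $S_\ell=2^L-2^{L-\ell+1}$ scalar entries with $\theta_{\ell,s}=0$.
\end{itemize}
Two checks are needed. First, the entry $-1/\sqrt{2}$ of $\mathbf{W}_2$ must equal $e^{j\theta_{\ell,c,22}}/\sqrt{2}$ with $\theta_{\ell,c,22}=\pi-0+0+0=\pi$, which holds. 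Second, $2C_\ell+S_\ell=2^{L-\ell+1}+2^L-2^{L-\ell+1}=N$, as required.

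The only step needing care is the first block of $\mathbf{P}_\ell$. It is built from odd-even permutations of different sizes, so I would verify that the dimensions match ($2^{L-\ell}+2^{L-\ell}=2^{L-\ell+1}$). After that, everything follows directly from Theorem~\ref{the:1}. As a by-product, the sufficiency construction then gives $\alpha_{\ell,c}=0$, $\beta_{\ell,c}=\pi/2$, $\gamma_{\ell,c}=-\pi/2$, and the total number of hybrid couplers is $\sum_{\ell=1}^{L}2^{L-\ell}=N-1$.
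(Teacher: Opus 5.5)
Your proposal is correct and follows the paper's proof essentially verbatim: you take the decomposition of Proposition~\ref{pro:haar} and check that every $\mathbf{P}_\ell$ is a permutation matrix and every $\mathbf{D}_\ell$ matches \eqref{eq:Dell} with $C_\ell=2^{L-\ell}$, $S_\ell=2^L-2^{L-\ell+1}$, and all free phases equal to zero. Your explicit checks that $\theta_{\ell,c,22}=\pi$ gives the $-1/\sqrt{2}$ entry of $\mathbf{W}_2$ and that $2C_\ell+S_\ell=N$ are details the paper leaves implicit; the appeal to realness of $\mathbf{W}_N$ is not needed, since Theorem~\ref{the:1} sets $\mathbf{S}_{12}=\bar{\mathbf{S}}^T$ regardless.
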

\begin{proof}
A microwave network performs the Haar transform of its input when it is a matched network with transmission scattering matrix being the $N\times N$ Haar matrix $\mathbf{W}_N$.
A network can therefore be implemented with hybrid couplers and phase shifters if and only if $\mathbf{W}_N$ can be decomposed as in Theorem~\ref{the:1}.
To show this, we prove that the decomposition of $\mathbf{W}_N$ provided by Proposition~\ref{pro:haar} is in the form required by Theorem~\ref{the:1}.
Since the decomposition of Proposition~\ref{pro:haar} is already in the form $\mathbf{W}_{N}=\mathbf{P}_{L}\mathbf{D}_{L}\cdots\mathbf{P}_{2}\mathbf{D}_{2}\mathbf{P}_{1}\mathbf{D}_{1}\mathbf{P}_{0}$, we need to verify that the matrices $\mathbf{P}_{\ell}$ and $\mathbf{D}_{\ell}$ are in the correct form.
First, all matrices $\mathbf{P}_{\ell}$ given by \eqref{eq:Pell-haar} are permutation matrices, satisfying Theorem~\ref{the:1}.
Second, all matrices $\mathbf{D}_{\ell}$ given by \eqref{eq:Dell-haar} fulfill Theorem~\ref{the:1} since they are block diagonal matrices where $C_\ell=2^{L-\ell}$ and $S_\ell=2^L-2^{L-\ell+1}$, and the phase shifts are $\theta_{\ell,c,11}=\theta_{\ell,c,12}=\theta_{\ell,c,21}=0$, for $c=1,\ldots,2^{L-\ell}$, and $\theta_{\ell,s}=0$, for $s=1,\ldots,2^L-2^{L-\ell+1}$.
\end{proof}

The Haar transform of a given vector can therefore be computed in the analog domain with a network of hybrid couplers and phase shifters.
Such a network can be systematically constructed for any Haar transform with size $N$ power of two, following the Proof of sufficiency of Theorem~\ref{the:1}, similar to what was discussed for the \gls{dft} and the Hadamard transform.
Specifically, the Haar matrix decomposition in Proposition~\ref{pro:dft} requires $C_\ell=2^{L-\ell}$ and $S_\ell=2^L-2^{L-\ell+1}$, indicating that the subnetworks implementing the matrices $\mathbf{D}_{\ell}$ contain $2^{L-\ell}$ hybrid couplers in parallel with $2^L-2^{L-\ell+1}$ interconnections.
Furthermore, $\theta_{\ell,c,11}=\theta_{\ell,c,12}=\theta_{\ell,c,21}=0$ give that the phase shifts in \eqref{eq:alpha}-\eqref{eq:gamma} become $\alpha_{\ell,c}=0$, $\beta_{\ell,c}=\pi/2$, and $\gamma_{\ell,c}=-\pi/2$, for $c=1,\ldots,2^{L-\ell}$ and $\ell=1,\ldots,L$.
We can therefore construct a network in the form of the one in Fig.~\ref{fig:sufficient} that computes the Haar transform, as represented in Fig.~\ref{fig:haar}.
To compute the Haar transform, the network of hybrid couplers and phase shifters has $2^{L-\ell}$ in the $\ell$th layer implementing $\mathbf{D}_\ell$, for $\ell=1,\ldots,L$.
Therefore, the required number of hybrid couplers is
\begin{equation}
C^{\text{Haar}}=\sum_{\ell=1}^L2^{L-\ell}=2^L-1=N-1,
\end{equation}
meaning that the Haar transform can be implemented with a simpler circuit than the \gls{dft} and the Hadamard transform, which both required $N\log_2(N)/2$ hybrid couplers.

\section{Practical Applications of Analog DFT, Hadamard Transform, and Haar Transform}
\label{sec:app}

We have shown how to systematically design networks of hybrid couplers and phase shifters to compute the \gls{dft}, Hadamard transform and Haar transform directly at \gls{rf}.
In this section, we highlight practical applications in which these analog-domain transforms are useful.
A key scenario is provided by wireless multi-antenna transceivers, where \gls{rf} signals across an antenna array are linearly combined after reception or linearly precoded before transmission.
The proposed analog computing networks enable these operations to be performed directly in the analog \gls{rf} domain, reducing processing latency and avoiding the need for digital signal processing.

A representative application of the analog \gls{dft} processor is \gls{doa} estimation.
When the signals received from an antenna array are transformed by a \gls{dft}, the received energy is mapped onto a set of spatial beams associated with different angular directions.
The dominant output therefore indicates the most likely \gls{doa}, allowing angular estimation to be performed efficiently without first digitizing every antenna signal independently.
A related \gls{rf} processing approach for this task has also been studied in \cite{an24}.
More generally, the analog \gls{dft} is well suited to applications where we desire single-lobed beams.

The Hadamard transform can be employed similarly whenever an alternative beamspace representation is advantageous, since the resulting beam patterns differ from those of the \gls{dft}.
In particular, as shown in \cite{kim09}, the Hadamard transform can generate two-lobed beams rather than the conventional single-lobed beams.
This makes the analog Hadamard transform attractive in scenarios where such beam shapes are beneficial for multi-target sensing or multi-user beamforming.

A different application emerges for the Haar transform in the receiver front-end of large antenna arrays.
Consider an array with $N$ receiving antennas whose outputs are first processed by the proposed Haar network before reaching the \gls{rf} chains.
Unlike the \gls{dft} and Hadamard transform, the Haar transform produces outputs that are organized hierarchically.
The first outputs of the transform capture coarse spatial information by combining large groups of adjacent antennas, while subsequent outputs progressively capture finer spatial details.
As a result, retaining only the first outputs of the Haar transform provides a meaningful low-resolution approximation of the received wavefront.
This hierarchical representation of the received signals with multiple levels of resolution enables a flexible receiver architecture.
Instead of equipping all $N$ outputs of the Haar transform with dedicated \gls{rf} chains, only a subset of the outputs may be considered.
For example, connecting only $N/2$, $N/4$, or $N/8$ \gls{rf} chains yields progressively coarser spatial resolution while significantly reducing the number of mixers, \glspl{adc}, and therefore power consumption.
Although discarding the higher-resolution outputs results in a loss of spatial information, the receiver retains a compressed representation of the received wavefront at a resolution determined by the number of selected outputs.

\section{Experimental Verification}
\label{sec:experiment}

Based on the above theoretical analysis, the \gls{dft}, the Hadamard transform, and the Haar transform can all be implemented using microwave networks composed of hybrid couplers and fixed phase-shifting transmission lines.
To provide experimental validation of the proposed approach, we selected the $4\times4$ \gls{dft} as a representative case and fabricated and measured a corresponding multiport prototype, as shown in Fig.~\ref{fig:photo}.
Implementing the Hadamard or Haar transform would not introduce any additional practical difficulties compared with the \gls{dft}.
The \gls{dft} microwave circuit was implemented using microstrip transmission lines on a single-layer dielectric substrate with a relative permittivity of 3.0, a loss tangent of 0.0017, and a thickness of 0.508~mm.
Eight standard SMA connectors were mounted at the input and output ports to facilitate the S-parameter measurements.
The measurement setup was limited to a four-port \gls{vna}, while the circuit has eight ports.
Therefore, all unused ports were terminated with matched loads during the measurements.
This ensured accurate extraction of the complete multiport scattering matrix.
All the measured data from the \gls{vna} were combined to form multiple $8\times8$ scattering matrices, each at a different frequency, by using MATLAB.

\begin{figure}[t]
\centering
\includegraphics[width=0.48\textwidth]{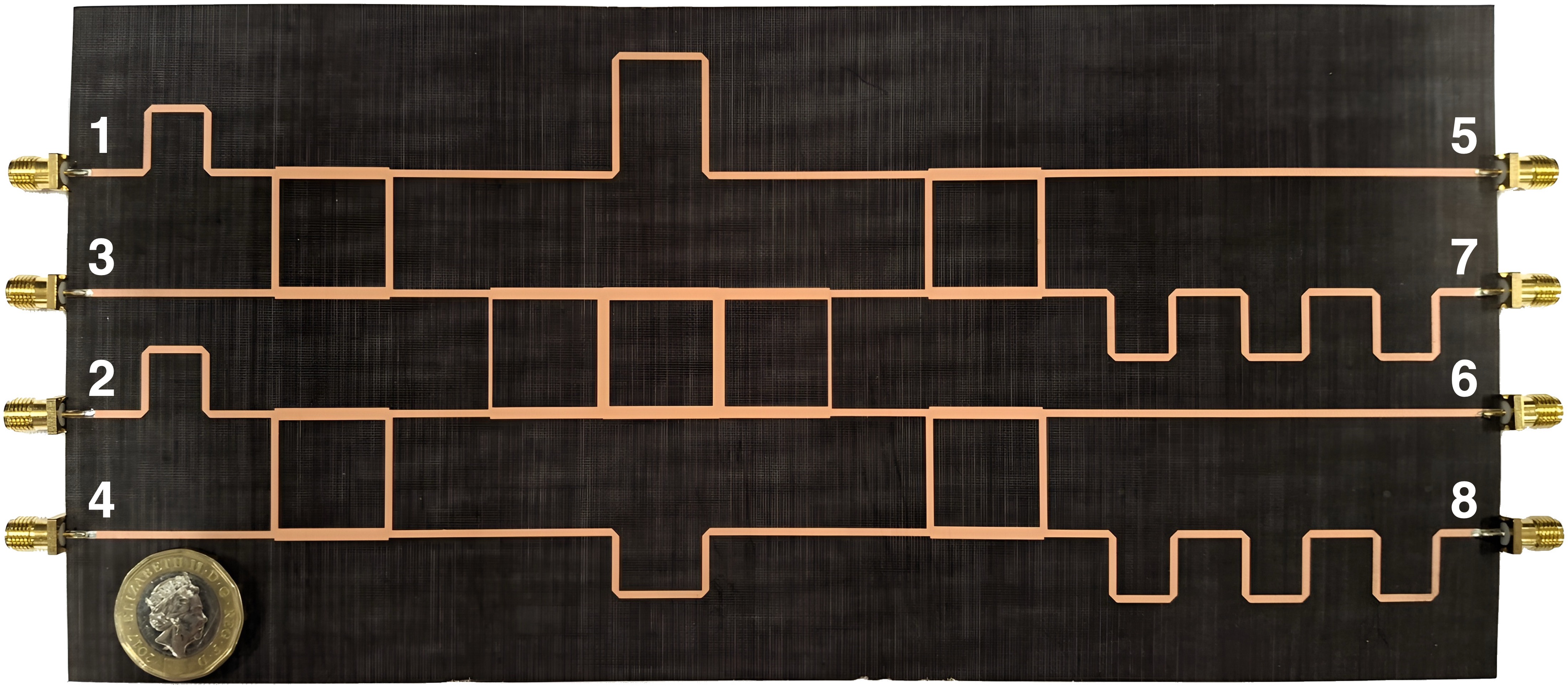}
\caption{Photo of the fabricated microwave circuit for DFT calculation (the board is $260\times120$~mm).}
\label{fig:photo}
\end{figure}

\begin{figure*}[t]
\centering
\includegraphics[width=0.245\textwidth]{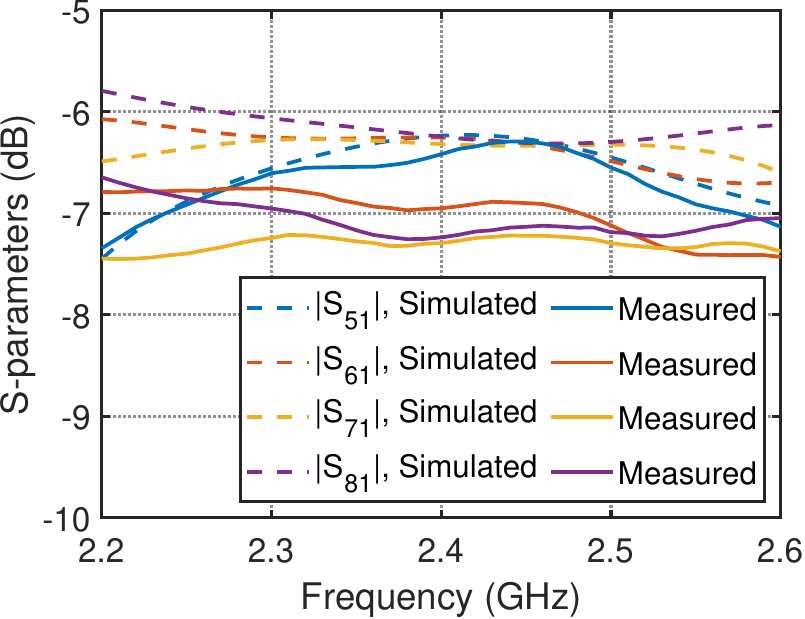}
\includegraphics[width=0.245\textwidth]{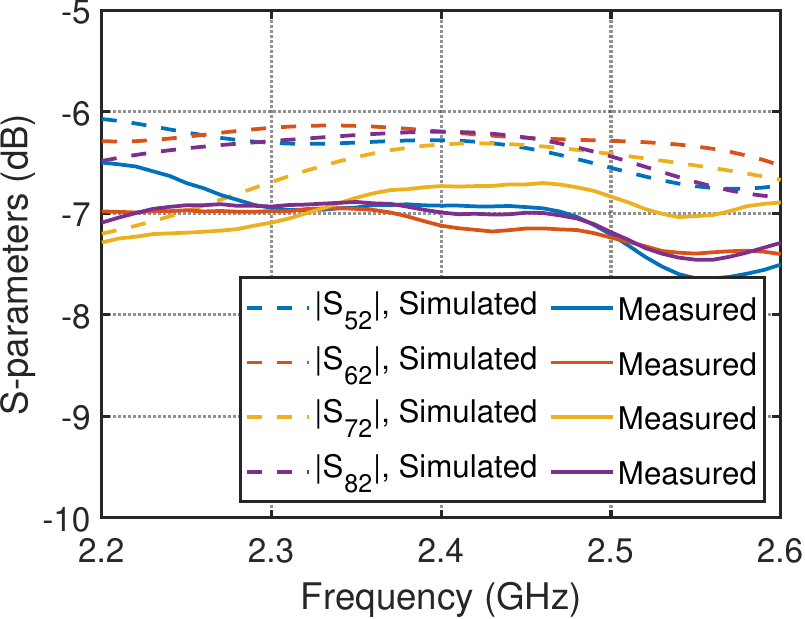}
\includegraphics[width=0.245\textwidth]{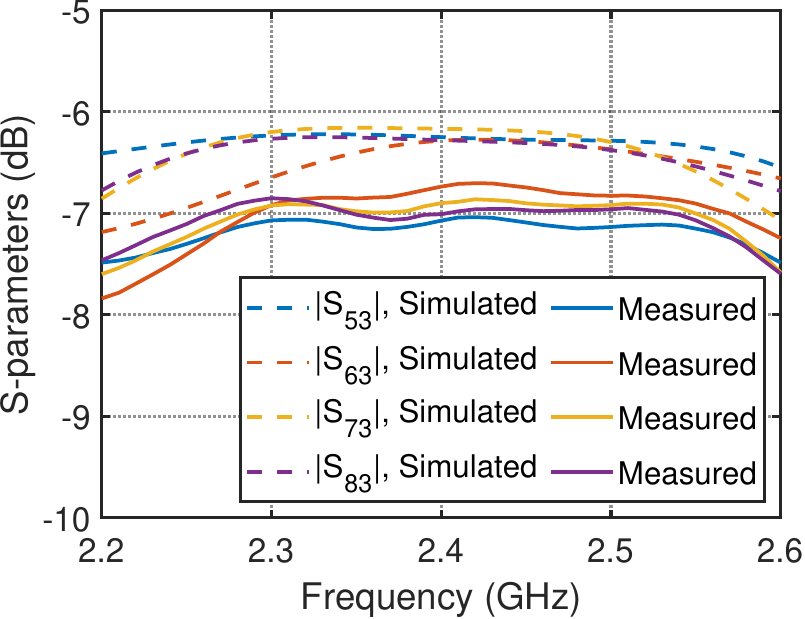}
\includegraphics[width=0.245\textwidth]{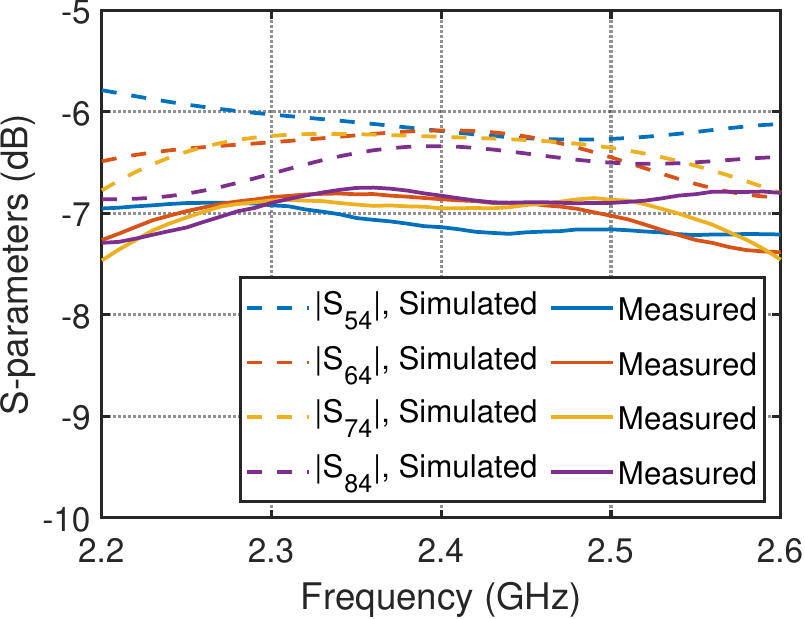}
\caption{Simulated and measured transmission coefficients (the ideal magnitudes of the entries of $\mathbf{F}_4$ are $1/2$, i.e., $-6$~dB).}
\label{fig:S}
\end{figure*}

The comparison between the simulated and measured transmission coefficients is given in Fig.~\ref{fig:S}.
Specifically, we report the magnitudes of the entries of the transmission scattering matrix $\mathbf{S}_{21}\in\mathbb{C}^{4\times4}$, defined as
\begin{equation}
\mathbf{S}_{21}=
\begin{bmatrix}
S_{51}&S_{52}&S_{53}&S_{54}\\
S_{61}&S_{62}&S_{63}&S_{64}\\
S_{71}&S_{72}&S_{73}&S_{74}\\
S_{81}&S_{82}&S_{83}&S_{84}
\end{bmatrix}.
\end{equation}
As can be seen from Fig.~\ref{fig:S}, the measured results show additional loss compared to the simulated results obtained using the full-wave electromagnetic solver Ansys High Frequency Structure Simulator (HFSS), in which the complete three-dimensional structure was modelled.
This is caused by the loss of the substrate and the fabrication error.
Compared to the ideal magnitude of the entries of the $4\times4$ \gls{dft} matrix $\mathbf{F}_4$, which is $1/2$, i.e., $-6$~dB, the insertion loss is approximately 1~dB and therefore very modest.
This insertion loss may affect the \gls{snr} of the output signals.
In the following, we evaluate the accuracy of the circuit in performing the desired \gls{dft} operation.

\subsection{Performance Evaluation Without Calibration}

To evaluate the performance of the fabricated microwave circuit, a comparison is conducted between the theoretical responses and the calculated results derived from the measured S-parameters.
Given an excitation vector $\mathbf{a}\in\mathbb{C}^{4\times1}$, the theoretical (ideal) response $\mathbf{b}\in\mathbb{C}^{4\times1}$ is obtained by multiplying the excitation vector by the unitary \gls{dft} matrix\footnote{With a slight abuse of notation, we denote as $\mathbf{a}$ and $\mathbf{b}$ in this section what was denoted as $\mathbf{a}_1$ and $\mathbf{b}_2$ in Section~\ref{sec:model}.}, namely
\begin{equation}
\mathbf{b}=\mathbf{F}_4\mathbf{a},\label{eq:ideal}
\end{equation}
where $\mathbf{F}_4$ is the $4\times4$ \gls{dft} matrix with elements defined as in \eqref{eq:dft}.
Besides, the calculated output based on the raw measured S-parameters $\mathbf{S}_{21}$ is
\begin{equation}
\mathbf{b}_{\text{raw}}=\mathbf{S}_{21}\mathbf{a}.\label{eq:raw}
\end{equation}

\begin{table}[t]
\centering
\caption{Comparison between theoretical and calculated output when the input is $\mathbf{a}=[e^{j\pi/8},e^{j2\pi/8},e^{j3\pi/8},e^{j4\pi/8}]$.}
\label{tab}
\begin{tabular}{@{}ccccc@{}}
\toprule
\multirow{2}{*}{} & \multicolumn{2}{c}{Theoretical output} & \multicolumn{2}{c}{Calculated output} \\
\cmidrule(l){2-5}
& Magnitude & Phase ($^\circ$) & \begin{tabular}[c]{@{}c@{}}Magnitude\\(Error)\end{tabular} & \begin{tabular}[c]{@{}c@{}}Phase ($^\circ$)\\(Error)\end{tabular}\\
\midrule
Entry 1 & $1.8123$ & $+56.25$  & \begin{tabular}[c]{@{}c@{}}$1.6256$\\($0.187$)\end{tabular} & \begin{tabular}[c]{@{}c@{}}$+100.03$\\($-43.27-0.51$)\end{tabular}\\
\\
Entry 2 & $0.6364$ & $-78.75$ & \begin{tabular}[c]{@{}c@{}}$0.5959$\\($0.041$)\end{tabular} & \begin{tabular}[c]{@{}c@{}}$-34.030$\\($-43.27-1.45$)\end{tabular}\\
\\
Entry 3 & $0.3605$ & $-33.75$ & \begin{tabular}[c]{@{}c@{}}$0.3377$\\($0.023$)\end{tabular} & \begin{tabular}[c]{@{}c@{}}$+8.9531$\\($-43.27+0.57$)\end{tabular}\\
\\
Entry 4 & $0.4252$ & $+11.25$ & \begin{tabular}[c]{@{}c@{}}$0.3478$\\($0.077$)\end{tabular} & \begin{tabular}[c]{@{}c@{}}$+53.132$\\($-43.27+1.39$)\end{tabular}\\
\bottomrule
\end{tabular}
\end{table}

One representative excitation vector is first selected to enable a direct comparison between the theoretical predictions and the results derived from the measured S-parameters.
Then, a Monte Carlo analysis is conducted to statistically characterize the performance of the network under various input vectors.
This approach provides a comprehensive assessment of amplitude and phase deviation due to insertion loss and fabrication errors.

The direct comparison results are summarized in Table~\ref{tab}, obtained with the excitation vector $\mathbf{a}=[e^{j\pi/8},e^{j2\pi/8},e^{j3\pi/8},e^{j4\pi/8}]$, assuming we are operating at 2.4~GHz.
As observed, the magnitude differences between the theoretical and calculated results are $0.187$, $0.041$, $0.023$, and $0.077$ at the four ports.
The corresponding phase differences are $-43.78^\circ$, $-44.72^\circ$, $-42.70^\circ$, and $-41.88^\circ$.
At first glance, the phase discrepancies appear relatively large.
However, these phase differences contain a common offset of approximately $-43.27^\circ$, which is obtained as their average value.
After removing this common phase term, the residual phase errors are $-0.51^\circ$, $-1.45^\circ$, $+0.57^\circ$, and $+1.39^\circ$, which are significantly smaller and due to fabrication tolerances.
This common phase offset is mainly introduced by the SMA connectors.
More importantly, it can be effectively eliminated through the calibration procedure introduced in the following section.
Note that the theoretical outputs in Table~\ref{tab} can also be exactly achieved by a lossless network of ideal hybrid couplers and phase shifters, i.e., characterized by their ideal scattering matrices as discussed in Section~\ref{sec:model}.

The Monte Carlo analysis is also conducted to check the overall performance under random inputs.
During the calculation, the ideal \gls{dft} output is obtained by using \eqref{eq:ideal} while the raw output is obtained through \eqref{eq:raw}, depending on $\mathbf{S}_{21}$, which is the measured S-parameter matrix of the microwave circuit.
The Monte Carlo set up is $\mathbf{a}=[A_1e^{j\phi_1},A_2e^{j\phi_2},A_3e^{j\phi_3},A_4e^{j\phi_4}]$, where $A_n\sim U[0,1]$ and $\phi_n\sim U[0,2\pi)$, for $n=1,2,3,4$.
The error between the ideal and the calculated output is evaluated using two metrics, namely the \gls{nmse}, defined as
\begin{equation}
E_{\text{nmse}}\left(\mathbf{a}\right)=\frac{\left\Vert\mathbf{b}_{\text{raw}}-\mathbf{b}\right\Vert^2}{\left\Vert\mathbf{b}\right\Vert^2},
\end{equation}
and the cosine distance, defined as
\begin{equation}
E_{\text{cos}}\left(\mathbf{a}\right)=1-\rho\left(\mathbf{b}_{\text{raw}},\mathbf{b}\right)=1-\frac{\left\vert\mathbf{b}_{\text{raw}}^H\mathbf{b}\right\vert}{\left\Vert\mathbf{b}_{\text{raw}}\right\Vert\left\Vert\mathbf{b}\right\Vert},
\end{equation}
where $\rho(\mathbf{b}_{\text{raw}},\mathbf{b})$ is the cosine similarity between the vectors $\mathbf{b}_{\text{raw}}$ and $\mathbf{b}$.
The average error is then evaluated over $K$ randomly generated excitation vectors using a Monte Carlo procedure.
It is computed as
\begin{equation}
\bar{E}=\frac{1}{K}\sum_{k=1}^KE\left(\mathbf{a}_k\right),
\end{equation}
where $\mathbf{a}_k$ denotes the k-th randomly generated input vector and $E$ is either $E_{\text{nmse}}$ or $E_{\text{cos}}$.
Similarly, the standard deviation of the error distribution is computed as
\begin{equation}
\sigma_{E}=\sqrt{\frac{1}{K}\sum_{k=1}^K\left(E\left(\mathbf{a}_k\right)-\bar{E}\right)^2},
\end{equation}
where the error $E$ is $E_{\text{nmse}}$ or $E_{\text{cos}}$.

Using $K=10^4$ randomly generated excitation vectors, the \gls{nmse} and cosine distance obtained by using the raw measured scattering matrix are $57.018\%\pm2.318\%$ and $0.066\%\pm0.028\%$, respectively.
The relatively large \gls{nmse} observed here is primarily due to the inclusion of the global phase shift introduced by the SMA connectors.
This effect can be readily mitigated by removing the phase contribution of the SMA connectors through calibration.
In contrast, the cosine distance is inherently insensitive to such global phase rotations and therefore provides a more accurate measure of the system’s functional fidelity.
The low cosine distance confirms that the proposed design preserves the intended transformation structure with high accuracy, highlighting its robustness and effectiveness.

\subsection{Performance Evaluation With Phase-Based Calibration}

The results above indicate that the observed discrepancy is largely dominated by a systematic global phase offset introduced by the SMA connectors.
This systematic phase deviation can be compensated by multiplying the measured output vector with a constant phase factor $e^{-j\phi_0}$, where $\phi_0$ corresponds to the extracted global phase offset (approximately $-43.27^\circ$ at 2.4~GHz).
After removing this common phase shift, the corrected output becomes
\begin{equation}
\mathbf{b}_{\text{phase}}=e^{-j\phi_0}\mathbf{b}_{\text{raw}},
\end{equation}
where $\phi_0=-43.27^\circ$.
Using the same Monte Carlo procedure, the \gls{nmse} is reduced to $1.456\%\pm0.179\%$, reflecting the very low intrinsic error of the fabricated microwave circuit after compensating for the systematic phase offset.
Meanwhile, the cosine distance remains unchanged, as expected, since it is inherently insensitive to a global phase rotation.
While this phase-based calibration significantly improves the agreement with the ideal \gls{dft} transformation, it corrects only a uniform phase rotation shared by all output ports.
The remaining error originates from port-dependent amplitude and phase imbalances introduced by fabrication tolerances, transmission line mismatches, and connector imperfections.

\subsection{Performance Evaluation With Vector-Based Calibration}

\begin{algorithm}[t]
\begin{algorithmic}[1]
\REQUIRE Measured transmission matrix $\mathbf{M}_{\text{meas}}$, DFT matrix $\mathbf{F}_4$, regularization weight $\lambda$, number of starts $S$, and phase perturbation range $\Delta$.
\ENSURE Calibrated diagonal matrices $\mathbf{D}_{\text{in}}$, $\mathbf{D}_{\text{out}}$, and calibrated transmission matrix $\mathbf{M}_{\text{calib}}$.\\
\textbf{Step 1: Initial guess in closed-form}
\STATE Compute initial vectors $\mathbf{d}_{\text{in}}^{(0)}$ and $\mathbf{d}_{\text{out}}^{(0)}$ using column-wise and row-wise least-squares estimation.\\
\textbf{Step 2: Multi-start Quasi-Newton optimization}
\FOR{$s=1$ \TO $S$}
    \STATE Generate phase-perturbed starting vectors:\\
    $\mathbf{d}_{\text{in}}^{(s)}\leftarrow e^{j\delta}\mathbf{d}_{\text{in}}^{(0)}$, $\mathbf{d}_{\text{out}}^{(s)}\leftarrow e^{j\delta}\mathbf{d}_{\text{out}}^{(0)}$,\\
    where $\delta\sim U[-\Delta/2,+\Delta/2]$.
    \STATE Solve $\min J(\mathbf{d}_{\text{in}},\mathbf{d}_{\text{out}})$ using the Quasi-Newton method starting from $(\mathbf{d}_{\text{in}}^{(s)},\mathbf{d}_{\text{out}}^{(s)})$.
\ENDFOR
\STATE Select the pair $(\mathbf{d}_{\text{in}},\mathbf{d}_{\text{out}})$ that yields the minimum objective value among all $S$ starts.
\STATE Return $\mathbf{D}_{\text{in}}=\text{diag}(\mathbf{d}_{\text{in}})$,
$\mathbf{D}_{\text{out}}=\text{diag}(\mathbf{d}_{\text{out}})$,\\
$\mathbf{M}_{\text{calib}}=\mathbf{D}_{\text{out}}\mathbf{M}_{\text{meas}}\mathbf{D}_{\text{in}}$.
\end{algorithmic}
\caption{Vector-based calibration.}
\label{alg}
\end{algorithm}

To further reduce the residual discrepancies after global phase compensation, a diagonal input-output calibration is applied to the measured transmission scattering matrix $\mathbf{S}_{21}$.
At the target frequency $f_0$, the transmission matrix used for calibration is directly taken as the measured submatrix $\mathbf{M}_{\text{meas}}=\mathbf{S}_{21}(f_0)$.
The calibrated matrix is modeled as
\begin{equation}
\mathbf{M}_{\text{calib}}=\mathbf{D}_{\text{out}}\mathbf{M}_{\text{meas}}\mathbf{D}_{\text{in}},
\end{equation}
with diagonal calibration matrices
\begin{gather}
\mathbf{D}_{\text{in}}=\text{diag}\left(d_{\text{in},1},d_{\text{in},2},d_{\text{in},3},d_{\text{in},4}\right),\\
\mathbf{D}_{\text{out}}=\text{diag}\left(d_{\text{out},1},d_{\text{out},2},d_{\text{out},3},d_{\text{out},4}\right),
\end{gather}
where each diagonal element is a complex coefficient providing per-port amplitude and phase calibration.

The calibration coefficients are obtained by minimizing the regularized non-linear least-squares objective
\begin{multline}
J=\left\Vert\mathbf{M}_{\text{calib}}-\mathbf{F}_4\right\Vert_F^2\\
+\lambda\left(\sum_{n=1}^4\left(\left\vert d_{\text{in},n}\right\vert-1\right)^2+\sum_{n=1}^4\left(\left\vert d_{\text{out},n}\right\vert-1\right)^2\right),\label{eq:J}
\end{multline}
where $\lambda$ controls the regularization strength used to keep $\vert d_{\text{in},n}\vert$ and $\vert d_{\text{out},n}\vert$ close to unity.
The optimization procedure is summarized in Alg.~\ref{alg} and consists of two stages.

First, a closed-form least-squares initialization is constructed.
The input-side coefficients are estimated column-wise by projecting each column of $\mathbf{M}_{\text{meas}}$ onto the corresponding column of the \gls{dft} matrix $\mathbf{F}_4$.
Using this preliminary input correction, the output-side coefficients are then estimated row-wise.
Specifically, the initial value of $d_{\mathrm{in},n}$ is obtained by solving
\begin{equation}
d_{\mathrm{in},n}^{(0)}=\arg\min_d\left\Vert\mathbf{m}_n d-\mathbf{f}_n\right\Vert_2^2,
\end{equation}
for $n=1,2,3,4$, where $\mathbf{m}_n$ and $\mathbf{f}_n$ are the $n$th columns of $\mathbf{M}_{\mathrm{meas}}$ and the target DFT matrix $\mathbf{F}_4$, respectively.
This yields the closed-form solution
\begin{equation}
d_{\mathrm{in},n}^{(0)}=\frac{\mathbf{m}_n^{H}\mathbf{f}_n}{\mathbf{m}_n^{H}\mathbf{m}_n}.
\end{equation}
After constructing $\mathbf{D}_{\text{in}}^{(0)}=\text{diag}(d_{\text{in},1}^{(0)},d_{\text{in},2}^{(0)},d_{\text{in},3}^{(0)},d_{\text{in},4}^{(0)})$, we calculate the intermediate matrix $\mathbf{X}_0=\mathbf{M}_{\mathrm{meas}}\mathbf{D}_{\mathrm{in}}^{(0)}$.
Then, the initial value of $d_{\mathrm{out},n}$ is set by solving
\begin{equation}
d_{\mathrm{out},n}^{(0)}=\arg\min_d\left\Vert\mathbf{x}_n d-\mathbf{f}_n\right\Vert_2^2,
\end{equation}
for $n=1,2,3,4$, where $\mathbf{x}_n$ is the $n$th column of $\mathbf{X}_0$, giving
\begin{equation}
d_{\mathrm{out},n}^{(0)}=\frac{\mathbf{x}_n^{H}\mathbf{f}_n}{\mathbf{x}_n^{H}\mathbf{x}_n}.
\end{equation}

\begin{figure}[t]
\centering
\includegraphics[width=0.38\textwidth]{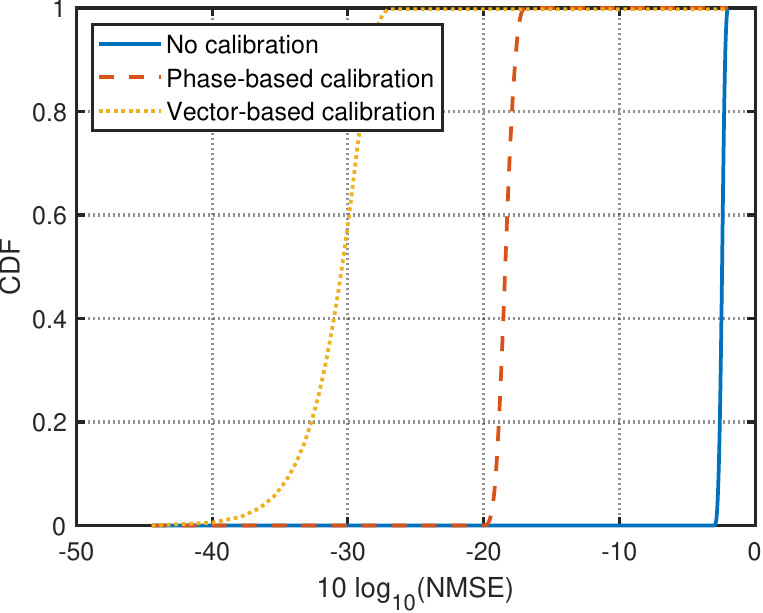}
\caption{Distribution of the NMSE.}
\label{fig:nmse}
\end{figure}

Second, the coefficients are refined through a multi-start optimization.
For each start, a phase-perturbed initialization is generated around the closed-form solution, and the objective function in \eqref{eq:J} is minimized using the Quasi-Newton method (the MATLAB function fminunc).
Among all starting points, the solution yielding the minimum objective value is retained.
This approach effectively performs a vector-based multiport calibration, simultaneously compensating amplitude and phase mismatches across all ports, rather than applying a uniform correction factor.
As a result, both systematic and port-dependent distortions can be significantly reduced, bringing the measured microwave network response closer to the ideal \gls{dft} transformation.
In our setting, we consider a regularization weight $\lambda=10^{-3}$, $S=8$ different starts for the Quasi-Newton method, and a maximum phase perturbation of $\Delta=0.2\pi$.

Figs.~\ref{fig:nmse} and \ref{fig:cos} present the Monte Carlo error distributions corresponding to the raw S-parameter measurements, the SMA phase-compensated results, and the fully calibrated transmission matrix.
For the raw measurements (``No calibration''), the \gls{nmse} and cosine distance are centered at $57.018\%$ and $0.066\%$, respectively.
After compensating for the common phase offset introduced by the SMA connectors, the \gls{nmse} distribution shifts significantly toward lower values and becomes tightly concentrated around $1.456\%$.
This substantial reduction indicates that the majority of the observed deviation originates from a uniform phase bias rather than structural inaccuracies in the circuit itself.
In contrast, the cosine distance remains unchanged, since a global phase rotation does not affect $\vert\mathbf{b}_{\text{phase}}^H\mathbf{b}\vert$ and therefore leaves the cosine distance unchanged.
This behavior further confirms that the underlying transformation structure is accurately preserved and that the proposed microwave computing circuit achieves high functional fidelity.
The proposed diagonal input-output calibration further pushes the error distribution toward zero, achieving an \gls{nmse} and cosine distance of $0.091\%\pm0.039\%$ and $0.037\%\pm0.017\%$, respectively.
Unlike simple global phase compensation, this multiport calibration simultaneously mitigates residual phase discrepancies and corrects amplitude imbalances across all ports.
As a result, both the mean error and its variance are significantly reduced, demonstrating the effectiveness of the proposed calibration in enhancing overall computational accuracy.

\begin{figure}[t]
\centering
\includegraphics[width=0.38\textwidth]{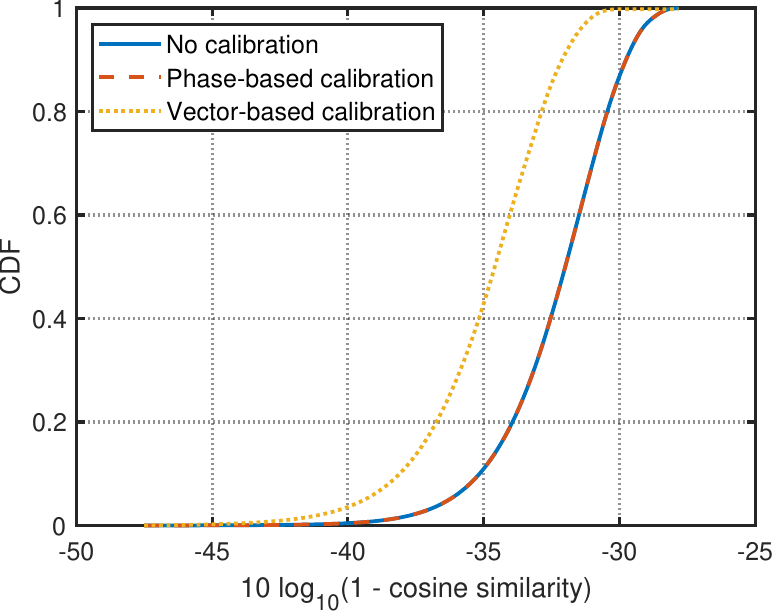}
\caption{Distribution of the cosine distance (1 $-$ cosine similarity).}
\label{fig:cos}
\end{figure}

Importantly, the proposed vector-based diagonal calibration is not merely a numerical post-processing technique but can be physically implemented in analog circuitry.
Since each diagonal element corresponds to an independent complex scaling coefficient, the calibration can be realized using per-port amplitude and phase control components.
Specifically, amplitude corrections can be implemented through microwave amplifiers or attenuators, while phase corrections can be achieved using phase shifters.
In practical implementations, tunable amplifiers, variable attenuators, and electronically controlled phase shifters can be employed to provide flexible and precise adjustment of the complex correction coefficients.
Such an analog realization enables real-time compensation of fabrication tolerances and measurement-induced distortions without requiring digital re-computation.
The proposed diagonal multiport calibration framework is therefore fully compatible with conventional microwave architecture and can be seamlessly integrated into practical systems to achieve adaptive and physically realizable \gls{dft} functionality.

\section{Related Work}
\label{sec:work}

In recent years, there has been a growing interest in performing signal processing operations, both linear and non-linear, directly with microwave signals.
In particular, several research directions have explored the computational capabilities of the linear analog computer illustrated in Fig.~\ref{fig:milac-intro}.
One representative example arises when the microwave network of the \gls{milac} in Fig.~\ref{fig:milac-intro} corresponds to the wireless channel between $N$ transmitting antennas and $M$ receiving antennas.
In this case, the input $\mathbf{u}$ is the signal at the transmitting antennas, the output $\mathbf{v}$ is the signal at the receiving antennas, and the matrix $\mathbf{W}$ is the \gls{mimo} wireless channel.
The operation of computing the product $\mathbf{W}\mathbf{u}$ over a wireless channel by directly observing the received signal $\mathbf{v}$ is known as over-the-air computing \cite{gol13}, and has interesting applications to federated learning \cite{yan20}.
Over-the-air computing allows performing the product $\mathbf{W}\mathbf{u}$ only for a given matrix $\mathbf{W}$ depending on the wireless channel.
To control what transformation $\mathbf{W}$ is computed, tunable metasurfaces, also known as \glspl{ris}, can be included in the propagation environment \cite{del18}.
\Glspl{ris} can also be used to engineer the frequency response of the environment such that linear operations can be computed on the envelope of modulated signals \cite{sol22}.

The microwave network of a \gls{milac} can also be implemented as a \gls{ris} having $N=M$ reflective (or transmissive) elements \cite{wu19,joy25}.
Considering the input signal $\mathbf{u}$ to be the incident signal on the metasurface and the output signal $\mathbf{v}$ the reflected (or refracted) signal, $\mathbf{W}$ depends in this case on the reflective (or transmissive) properties of the metasurface.
Therefore, the metasurface can be designed, or dynamically reconfigured, to compute the product $\mathbf{W}\mathbf{u}$ for some desired matrix $\mathbf{W}$.
Multiple transmissive metasurfaces can also be stacked to realize a more flexible microwave network, under the name of \gls{sim} \cite{an23}.
In a \gls{sim}, $\mathbf{W}$ can be chosen with greater flexibility since it is given by $\mathbf{W}=\mathbf{W}_L\mathbf{H}_L\cdots\mathbf{W}_2\mathbf{H}_2\mathbf{W}_1$, where $\mathbf{W}_\ell$ is the matrix of the $\ell$th metasurface and $\mathbf{H}_\ell$ is the channel between the $(\ell-1)$th and the $\ell$th metasurface, assuming $L$ layers.
Thanks to this flexibility, the refracting properties of a \gls{sim} can be configured to perform the desired linear transform, such as the \gls{dft} \cite{an24}.

Linear microwave networks can also be realized by physically interconnecting the input and output ports through \gls{rf} components.
The ports can be interconnected through power dividers/combiners/couplers and phase shifters, which is the common implementation to realize analog and hybrid beamforming in wireless communications \cite{soh16}.
These components have also been considered in \cite{zhu24,gu24,gao24} to realize reconfigurable matrix-vector products within \glspl{nn}, and in \cite{kes25} to realize universal unitary matrix transformations.
Note that \cite{kes25} proposed a network of hybrid couplers and phase shifters that approximately implements any unitary transformation.
This is different from the goal of this work to characterize all unitary transformations that are exactly implementable with hybrid couplers and phase shifters.
In addition, in our previous work \cite{ner25-1,ner25-2,ner25-3,ner25-4}, we have proposed to implement a \gls{milac} by interconnecting its ports through tunable impedance components.
With this implementation, the \gls{milac} can be used to compute not only the product $\mathbf{W}\mathbf{u}$, but also the \gls{lmmse} estimator (involving matrix inversions) in the analog domain.
Such a \gls{milac} can compute non-linear functions (like matrix inversions), since its matrix $\mathbf{W}$ is a non-linear function of the impedance values in the microwave network.
Furthermore, direct-complex-matrix (DCM) architectures with active phase and gain control have also enabled matrix inversion and iterative algorithms via feedback-configured \gls{rf} networks \cite{tza25}.

An interesting application of analog computing with \gls{em} signals is to implement wave-domain \glspl{nn}.
Early work on diffractive \glspl{nn} demonstrated that cascaded layers of diffractive optical components can perform machine-learning tasks, such as image classification \cite{lin18}.
Building on this idea, a programmable implementation of diffractive \gls{nn} based on multi-layer digital-coding metasurface arrays has been proposed to enable reconfigurable wave-domain neural computation \cite{liu22}.
However, these architectures still operate as linear systems because they lack non-linear activation functions.
To address this limitation, a programmable wave-domain \gls{nn} incorporating non-linearities has been introduced in \cite{gao23}.
Practical training strategies have also been explored, such as the backpropagation-free training approach for wave-domain \glspl{nn} experimentally demonstrated in \cite{mom23}.
More recently, wave-domain neural computing has been further advanced by exploiting intrinsic non-linearities in coupled microwave oscillations \cite{gov25}.

\section{Conclusion and Future Work}
\label{sec:conclusion}

Analog computing with \gls{em} signals is an interesting paradigm that can perform some computations, such as matrix-vector products (also known as linear transformations), ultra-fast and without any digital operation.
In this paper, we investigate what class of linear transformations can be computed in the analog domain with \gls{em} signals propagating through a network of hybrid couplers and phase shifters.
We focus on networks built with these two components because of their fundamental importance and practicality, since they both can be implemented in microstrip.
Among the possible transformations computable by networks of hybrid couplers and phase shifters, we identify three of practical relevance: the \gls{dft}, the Hadamard transform, and the Haar transform.
For each of them, we show how to systematically construct a network that computes the desired transformation of the input signals for any size power of two.
We fabricate a microwave hardware prototype to experimentally verify the proposed theory, implementing the $4\times4$ \gls{dft} using hybrid couplers and phase shifters.
A calibration methodology is proposed to compensate for non-idealities arising from connectors and fabrication tolerances.
Measurements obtained from the prototype confirm correct analog-domain \gls{dft} operation and show strong consistency with theoretical predictions.
Our results establish a framework for implementing low-power signal processing at light speed, without resorting to digital computations.

The three linear transforms considered in this paper (\gls{dft}, Hadamard, and Haar) share a remarkable property: they are unitary transforms that can be recursively defined invoking Kronecker products.
Interestingly, this property has been extensively exploited to develop low-complexity algorithms for the digital computation of such transforms, such as the \gls{fft} algorithm for the \gls{dft} \cite{fin77,reg89,van00}.
Thanks to these algorithms, the \gls{dft}, Hadamard, and Haar transforms can be digitally computed with a number of operations (time complexity) scaling with $\mathcal{O}(N\log N)$, $\mathcal{O}(N\log N)$, and $\mathcal{O}(N)$, respectively.
We have demonstrated in this paper how this Kronecker-based structure enables the computation of those transformations at light speed, i.e., in $\mathcal{O}(1)$, by replacing digital operations with combining operations executed by hybrid couplers.
Since digital operations are replaced by hybrid couplers, the number of hybrid couplers (space complexity) needed to compute the \gls{dft}, Hadamard, and Haar transforms in the analog domain grows with $\mathcal{O}(N\log N)$, $\mathcal{O}(N\log N)$, and $\mathcal{O}(N)$, respectively.
A promising direction for future research is thereby to explore whether the operations of hybrid couplers, used with other non-linear \gls{rf} components, can perform analog-domain computations beyond linear transformations.
Another interesting direction is to investigate how the impact of hardware imperfections scales with the network size.

\section*{Appendix}

\subsection{Proof of Proposition~\ref{pro:ser}}

Denote as $\mathbf{a}_1\in\mathbb{C}^{N\times 1}$ and $\mathbf{b}_1\in\mathbb{C}^{N\times 1}$ the incident and reflected waves on the first $N$ ports of the first network, respectively;
       as $\mathbf{a}_2\in\mathbb{C}^{N\times 1}$ and $\mathbf{b}_2\in\mathbb{C}^{N\times 1}$ the incident and reflected waves on the last $N$ ports of the first network, respectively;
   and as $\mathbf{a}_3\in\mathbb{C}^{N\times 1}$ and $\mathbf{b}_3\in\mathbb{C}^{N\times 1}$ the reflected and incident waves on the last $N$ ports of the second network, respectively.
Given the interconnections between the two networks, $\mathbf{a}_2$ and $\mathbf{b}_2$ are also the reflected and incident waves on the first $N$ ports of the second network, respectively.
Thus, by the definition of scattering matrix \cite[Chapter 4]{poz12}, we have
\begin{equation}
\begin{bmatrix}
\mathbf{b}_{1}\\
\mathbf{b}_{2}
\end{bmatrix}=
\begin{bmatrix}
\mathbf{Q}_{11} & \mathbf{Q}_{12}\\
\mathbf{Q}_{21} & \mathbf{Q}_{22}
\end{bmatrix}
\begin{bmatrix}
\mathbf{a}_{1}\\
\mathbf{a}_{2}
\end{bmatrix},\;
\begin{bmatrix}
\mathbf{a}_{2}\\
\mathbf{a}_{3}
\end{bmatrix}=
\begin{bmatrix}
\mathbf{R}_{11} & \mathbf{R}_{12}\\
\mathbf{R}_{21} & \mathbf{R}_{22}
\end{bmatrix}
\begin{bmatrix}
\mathbf{b}_{2}\\
\mathbf{b}_{3}
\end{bmatrix},\label{eq:QR-ser}
\end{equation}
and our goal is to characterize $\mathbf{S}$ such that
\begin{equation}
\begin{bmatrix}
\mathbf{b}_{1}\\
\mathbf{a}_{3}
\end{bmatrix}=
\mathbf{S}
\begin{bmatrix}
\mathbf{a}_{1}\\
\mathbf{b}_{3}
\end{bmatrix}.
\end{equation}
To this end, we first use \eqref{eq:QR-ser} to express $\mathbf{a}_{2}$ and $\mathbf{b}_{2}$ as functions of $\mathbf{a}_{1}$ and $\mathbf{b}_{3}$.
In detail, from \eqref{eq:QR-ser} we obtain
\begin{align}
\mathbf{a}_{2}
&=\mathbf{R}_{11}\mathbf{b}_{2}+\mathbf{R}_{12}\mathbf{b}_{3}\\
&=\mathbf{R}_{11}\mathbf{Q}_{21}\mathbf{a}_{1}+\mathbf{R}_{11}\mathbf{Q}_{22}\mathbf{a}_{2}+\mathbf{R}_{12}\mathbf{b}_{3},
\end{align}
giving
\begin{equation}
\mathbf{a}_{2}
=\left(\mathbf{I}_N-\mathbf{R}_{11}\mathbf{Q}_{22}\right)^{-1}\left(\mathbf{R}_{11}\mathbf{Q}_{21}\mathbf{a}_{1}+\mathbf{R}_{12}\mathbf{b}_{3}\right),\label{eq:a2}
\end{equation}
and we obtain
\begin{align}
\mathbf{b}_{2}
&=\mathbf{Q}_{21}\mathbf{a}_{1}+\mathbf{Q}_{22}\mathbf{a}_{2}\\
&=\mathbf{Q}_{21}\mathbf{a}_{1}+\mathbf{Q}_{22}\mathbf{R}_{11}\mathbf{b}_{2}+\mathbf{Q}_{22}\mathbf{R}_{12}\mathbf{b}_{3},
\end{align}
giving
\begin{equation}
\mathbf{b}_{2}
=\left(\mathbf{I}_N-\mathbf{Q}_{22}\mathbf{R}_{11}\right)^{-1}\left(\mathbf{Q}_{21}\mathbf{a}_{1}+\mathbf{Q}_{22}\mathbf{R}_{12}\mathbf{b}_{3}\right).\label{eq:b2}
\end{equation}
Then, by substituting \eqref{eq:a2} into $\mathbf{b}_{1}=\mathbf{Q}_{11}\mathbf{a}_{1}+\mathbf{Q}_{12}\mathbf{a}_{2}$, which is derived from \eqref{eq:QR-ser}, we obtain
\begin{multline}
\mathbf{b}_{1}=\mathbf{Q}_{11}\mathbf{a}_{1}
+\mathbf{Q}_{12}\left(\mathbf{I}_N-\mathbf{R}_{11}\mathbf{Q}_{22}\right)^{-1}\mathbf{R}_{11}\mathbf{Q}_{21}\mathbf{a}_{1}\\
+\mathbf{Q}_{12}\left(\mathbf{I}_N-\mathbf{R}_{11}\mathbf{Q}_{22}\right)^{-1}\mathbf{R}_{12}\mathbf{b}_{3},
\end{multline}
and by substituting \eqref{eq:b2} into $\mathbf{a}_{3}=\mathbf{R}_{21}\mathbf{b}_{2}+\mathbf{R}_{22}\mathbf{b}_{3}$, we obtain
\begin{multline}
\mathbf{a}_{3}=
 \mathbf{R}_{21}\left(\mathbf{I}_N-\mathbf{Q}_{22}\mathbf{R}_{11}\right)^{-1}\mathbf{Q}_{21}\mathbf{a}_{1}\\
+\mathbf{R}_{21}\left(\mathbf{I}_N-\mathbf{Q}_{22}\mathbf{R}_{11}\right)^{-1}\mathbf{Q}_{22}\mathbf{R}_{12}\mathbf{b}_{3}
+\mathbf{R}_{22}\mathbf{b}_{3},
\end{multline}
proving the proposition.

\subsection{Proof of Proposition~\ref{pro:par}}

Denote as $\mathbf{a}_1\in\mathbb{C}^{N\times 1}$ and $\mathbf{b}_1\in\mathbb{C}^{N\times 1}$ the incident and reflected waves on the first $N$ ports of the first network, respectively;
       as $\mathbf{a}_2\in\mathbb{C}^{N\times 1}$ and $\mathbf{b}_2\in\mathbb{C}^{N\times 1}$ the incident and reflected waves on the last $N$ ports of the first network, respectively;
       as $\mathbf{a}_3\in\mathbb{C}^{M\times 1}$ and $\mathbf{b}_3\in\mathbb{C}^{M\times 1}$ the incident and reflected waves on the first $M$ ports of the second network, respectively;
   and as $\mathbf{a}_4\in\mathbb{C}^{M\times 1}$ and $\mathbf{b}_4\in\mathbb{C}^{M\times 1}$ the incident and reflected waves on the last $M$ ports of the second network, respectively.
Thus, by the definition of scattering matrix \cite[Chapter 4]{poz12}, we have
\begin{equation}
\begin{bmatrix}
\mathbf{b}_{1}\\
\mathbf{b}_{2}
\end{bmatrix}=
\begin{bmatrix}
\mathbf{Q}_{11} & \mathbf{Q}_{12}\\
\mathbf{Q}_{21} & \mathbf{Q}_{22}
\end{bmatrix}
\begin{bmatrix}
\mathbf{a}_{1}\\
\mathbf{a}_{2}
\end{bmatrix},\;
\begin{bmatrix}
\mathbf{b}_{3}\\
\mathbf{b}_{4}
\end{bmatrix}=
\begin{bmatrix}
\mathbf{R}_{11} & \mathbf{R}_{12}\\
\mathbf{R}_{21} & \mathbf{R}_{22}
\end{bmatrix}
\begin{bmatrix}
\mathbf{a}_{3}\\
\mathbf{a}_{4}
\end{bmatrix},\label{eq:QR-par}
\end{equation}
and our goal is to characterize $\mathbf{S}$ such that
\begin{equation}
\begin{bmatrix}
\mathbf{b}_{1}\\
\mathbf{b}_{3}\\
\mathbf{b}_{2}\\
\mathbf{b}_{4}
\end{bmatrix}=
\mathbf{S}
\begin{bmatrix}
\mathbf{a}_{1}\\
\mathbf{a}_{3}\\
\mathbf{a}_{2}\\
\mathbf{a}_{4}
\end{bmatrix}.
\end{equation}
It is straightforward to recognize that \eqref{eq:QR-par} is fulfilled when
\begin{equation}
\mathbf{S}=
\begin{bmatrix}
\mathbf{Q}_{11} & \mathbf{0}_{N\times M} & \mathbf{Q}_{12} & \mathbf{0}_{N\times M}\\
\mathbf{0}_{M\times N} & \mathbf{R}_{11} & \mathbf{0}_{M\times N} & \mathbf{R}_{12}\\
\mathbf{Q}_{21} & \mathbf{0}_{N\times M} & \mathbf{Q}_{22} & \mathbf{0}_{N\times M}\\
\mathbf{0}_{M\times N} & \mathbf{R}_{21} & \mathbf{0}_{M\times N} & \mathbf{R}_{22}
\end{bmatrix},
\end{equation}
which proves the proposition.

\subsection{Proof of Proposition~\ref{pro:dft}}

This proof is divided into two parts.
First, we show that $\mathbf{F}_{N}$ can be decomposed as $\mathbf{F}_{N}=\mathbf{P}_{L}\mathbf{D}_{L}\cdots\mathbf{P}_{2}\mathbf{D}_{2}\mathbf{P}_{1}\mathbf{D}_{1}\mathbf{P}_{0}$, where $\mathbf{P}_{\ell}$ are permutation matrices and $\mathbf{D}_{\ell}$ are block diagonal matrices with blocks having size $2\times2$.
Second, we show that the matrices $\mathbf{P}_{\ell}$ and $\mathbf{D}_{\ell}$ are expressed as claimed by Proposition~\ref{pro:dft}.

For the first part, we apply the decomposition of $\mathbf{F}_{N}$ in Lemma~\ref{lem:fft}.
We begin by noticing the equivalence
\begin{equation}
\frac{1}{\sqrt{2}}
\begin{bmatrix}
\mathbf{I}_{N/2} & \mathbf{\Omega}_{N/2}\\
\mathbf{I}_{N/2} & -\mathbf{\Omega}_{N/2}
\end{bmatrix}
=\tilde{\mathbf{P}}_N\tilde{\mathbf{D}}_N\tilde{\mathbf{P}}_N^T,\label{eq:equiv1}
\end{equation}
where $\tilde{\mathbf{D}}_N\in\mathbb{C}^{N\times N}$ is a block diagonal matrix
\begin{equation}
\tilde{\mathbf{D}}_{N}=\text{diag}\left(\tilde{\mathbf{D}}_{N,1},\tilde{\mathbf{D}}_{N,2},\ldots,\tilde{\mathbf{D}}_{N,N/2}\right),
\end{equation}
with $\tilde{\mathbf{D}}_{N,c}\in\mathbb{C}^{2\times 2}$ given by
\begin{equation}
\tilde{\mathbf{D}}_{N,c}=\frac{1}{\sqrt{2}}
\begin{bmatrix}
1 & e^{-j\frac{2\pi(c-1)}{N}}\\
1 & -e^{-j\frac{2\pi(c-1)}{N}}
\end{bmatrix},\label{eq:equiv2}
\end{equation}
for $c=1,\ldots,N/2$.
As three illustrative examples of the block diagonal matrix $\tilde{\mathbf{D}}_{N}$, the matrices $\tilde{\mathbf{D}}_2$, $\tilde{\mathbf{D}}_4$, and $\tilde{\mathbf{D}}_8$ are
\begin{equation}
\tilde{\mathbf{D}}_2=\frac{1}{\sqrt{2}}
\begin{bmatrix}
1 &  1\\
1 & -1
\end{bmatrix},\;
\tilde{\mathbf{D}}_4=\frac{1}{\sqrt{2}}
\begin{bmatrix}
1 &  1 &   &   \\
1 & -1 &   &   \\
  &    & 1 & -j\\
  &    & 1 &  j
\end{bmatrix},
\end{equation}
\begin{equation}
\tilde{\mathbf{D}}_8=\frac{1}{\sqrt{2}}
\begin{bmatrix}
1 &  1 &   &                     &   &    &   &                     \\
1 & -1 &   &                     &   &    &   &                     \\
  &    & 1 & e^{-j\frac{\pi}{4}} &   &    &   &                     \\
  &    & 1 & e^{j\frac{3\pi}{4}} &   &    &   &                     \\
  &    &   &                     & 1 & -j &   &                     \\
  &    &   &                     & 1 &  j &   &                     \\
  &    &   &                     &   &    & 1 & e^{-j\frac{3\pi}{4}}\\
  &    &   &                     &   &    & 1 & e^{j\frac{\pi}{4}}
\end{bmatrix}.
\end{equation}
The equivalence in \eqref{eq:equiv1} can be easily verified by recalling the definitions of the odd-even permutation matrix $\tilde{\mathbf{P}}_N$ and the diagonal matrix $\mathbf{\Omega}_{N/2}$, used in Lemma~\ref{lem:fft}.
By substituting \eqref{eq:equiv1} into the decomposition of $\mathbf{F}_N$ in \eqref{eq:fft}, we obtain $\mathbf{F}_N$ as
\begin{equation}
\mathbf{F}_N=
\tilde{\mathbf{P}}_N\tilde{\mathbf{D}}_N\tilde{\mathbf{P}}_N^T
\begin{bmatrix}
\mathbf{F}_{N/2} & \\
 & \mathbf{F}_{N/2}
\end{bmatrix}
\tilde{\mathbf{P}}_N,
\end{equation}
which can be more conveniently rewritten as
\begin{equation}
\mathbf{F}_N=
\tilde{\mathbf{P}}_N\tilde{\mathbf{D}}_N\tilde{\mathbf{P}}_N^T
\left(\mathbf{I}_{2}\otimes\mathbf{F}_{N/2}\right)
\tilde{\mathbf{P}}_N,\label{eq:FN-proof0}
\end{equation}
by introducing a Kronecker product.

By using \eqref{eq:FN-proof0}, we can prove the first part of this proof by induction on $L=\log_2(N)$.
For the base case $L=1$, i.e., for the $2\times2$ \gls{dft} matrix, it is trivially proved since
\begin{equation}
\mathbf{F}_2=\underbrace{\mathbf{I}_2}_{\mathbf{P}_{1}^{(1)}}\underbrace{\mathbf{F}_2}_{\mathbf{D}_{1}^{(1)}}\underbrace{\mathbf{I}_2}_{\mathbf{P}_{0}^{(1)}},\label{eq:F2-proof}
\end{equation}
where $\mathbf{P}_{0}^{(1)}$ and $\mathbf{P}_{1}^{(1)}$ are trivially identity matrices (and hence permutation matrices) and $\mathbf{D}_{1}^{(1)}$ is block diagonal with blocks having size $2\times2$.
The superscript $^{(1)}$ indicates that these matrices are valid for $L=1$.
As the induction step, we prove that if the first part of this proof holds for the case $L-1$, i.e., for the $2^{L-1}\times2^{L-1}$ \gls{dft} matrix, then it also holds for the case $L$, i.e., for the $2^L\times 2^L$ \gls{dft} matrix.
By the induction hypothesis, we have
\begin{multline}
\mathbf{F}_{2^{L-1}}=\mathbf{P}_{L-1}^{(L-1)}\mathbf{D}_{L-1}^{(L-1)}\times\cdots\\
\times\mathbf{P}_{2}^{(L-1)}\mathbf{D}_{2}^{(L-1)}\mathbf{P}_{1}^{(L-1)}\mathbf{D}_{1}^{(L-1)}\mathbf{P}_{0}^{(L-1)},\label{eq:FL-1-proof}
\end{multline}
where $\mathbf{P}_{\ell}^{(L-1)}$ are permutation matrices and $\mathbf{D}_{\ell}^{(L-1)}$ are block diagonal with $2\times2$ blocks.
The superscript $^{(L-1)}$ indicates that these matrices are valid for $L-1$, and is needed to avoid ambiguity between, for instance, $\mathbf{P}_{0}^{(1)}$ and $\mathbf{P}_{0}^{(L-1)}$, which are different.
The induction step can be proved by using \eqref{eq:FN-proof0} as $\mathbf{F}_{2^L}=
\tilde{\mathbf{P}}_{2^L}\tilde{\mathbf{D}}_{2^L}\tilde{\mathbf{P}}_{2^L}^T(\mathbf{I}_{2}\otimes\mathbf{F}_{2^{L-1}})\tilde{\mathbf{P}}_{2^L}$ and expressing $\mathbf{F}_{2^{L-1}}$ therein with \eqref{eq:FL-1-proof}, which gives
\begin{multline}
\mathbf{F}_{2^L}=
\underbrace{\tilde{\mathbf{P}}_{2^L}}_{\mathbf{P}_{L}^{(L)}}\underbrace{\tilde{\mathbf{D}}_{2^L}}_{\mathbf{D}_{L}^{(L)}}\underbrace{\tilde{\mathbf{P}}_{2^L}^T
\left(\mathbf{I}_{2}\otimes\mathbf{P}_{L-1}^{(L-1)}\right)}_{\mathbf{P}_{L-1}^{(L)}}\underbrace{\left(\mathbf{I}_{2}\otimes\mathbf{D}_{L-1}^{(L-1)}\right)}_{\mathbf{D}_{L-1}^{(L)}}
\times\cdots\\
\times
\underbrace{\left(\mathbf{I}_{2}\otimes\mathbf{P}_{2}^{(L-1)}\right)}_{\mathbf{P}_{2}^{(L)}}
\underbrace{\left(\mathbf{I}_{2}\otimes\mathbf{D}_{2}^{(L-1)}\right)}_{\mathbf{D}_{2}^{(L)}}
\underbrace{\left(\mathbf{I}_{2}\otimes\mathbf{P}_{1}^{(L-1)}\right)}_{\mathbf{P}_{1}^{(L)}}\\
\times\underbrace{\left(\mathbf{I}_{2}\otimes\mathbf{D}_{1}^{(L-1)}\right)}_{\mathbf{D}_{1}^{(L)}}
\underbrace{\left(\mathbf{I}_{2}\otimes\mathbf{P}_{0}^{(L-1)}\right)
\tilde{\mathbf{P}}_{2^L}}_{\mathbf{P}_{0}^{(L)}},\label{eq:FL-proof}
\end{multline}
where we have exploited the mixed-product property of the Kronecker product.
In \eqref{eq:FL-proof}, we have highlighted the decomposition into blocks $\mathbf{P}_{\ell}^{(L)}$ and $\mathbf{D}_{\ell}^{(L)}$, which are permutation matrices and block diagonal matrices with $2\times2$ blocks, respectively, because of the induction hypothesis, concluding the first part of this proof.

For the second part of this proof, we need to show that the matrices $\mathbf{P}_{\ell}^{(L)}$, for $\ell=0,\ldots,L$, and $\mathbf{D}_{\ell}^{(L)}$, for $\ell=1,\ldots,L$, are expressed as stated in Proposition~\ref{pro:dft}.
We first derive the expressions of the matrices $\mathbf{P}_\ell^{(L)}$, for $\ell=0,\ldots,L$, starting from $\mathbf{P}_{0}^{(L)}$.
From \eqref{eq:FL-proof}, we notice that $\mathbf{P}_{0}^{(L)}$ is given by $\mathbf{P}_{0}^{(L)}=(\mathbf{I}_{2}\otimes\mathbf{P}_{0}^{(L-1)})
\tilde{\mathbf{P}}_{2^{L}}$, depending on $\mathbf{P}_{0}^{(L-1)}$, which depends in turn on $\mathbf{P}_{0}^{(L-2)}$, and so forth.
To obtain a closed-form of $\mathbf{P}_{0}^{(L)}$, we can rewrite it as
\begin{align}
\mathbf{P}_{0}^{(L)}
&=\left(\mathbf{I}_{2}\otimes\mathbf{P}_{0}^{(L-1)}\right)
\tilde{\mathbf{P}}_{2^{L}}\\
&=\left(\mathbf{I}_{2}\otimes\left(\left(\mathbf{I}_{2}\otimes\mathbf{P}_{0}^{(L-2)}\right)
\tilde{\mathbf{P}}_{2^{L-1}}\right)\right)
\tilde{\mathbf{P}}_{2^{L}}\label{eq:P0-1}\\
&=\left(\mathbf{I}_{2}\otimes\left(\mathbf{I}_{2}\otimes\mathbf{P}_{0}^{(L-2)}\right)\right)\left(\mathbf{I}_{2}\otimes\tilde{\mathbf{P}}_{2^{L-1}}\right)\tilde{\mathbf{P}}_{2^{L}}\label{eq:P0-2}\\
&=\left(\mathbf{I}_{2^2}\otimes\mathbf{P}_{0}^{(L-2)}\right)\left(\mathbf{I}_{2}\otimes\tilde{\mathbf{P}}_{2^{L-1}}\right)\tilde{\mathbf{P}}_{2^{L}},\label{eq:P0-3}
\end{align}
where in \eqref{eq:P0-1} we used that $\mathbf{P}_{0}^{(L-1)}=(\mathbf{I}_{2}\otimes\mathbf{P}_{0}^{(L-2)})
\tilde{\mathbf{P}}_{2^{L-1}}$, in \eqref{eq:P0-2} we have exploited the mixed-product property of the Kronecker product, and in \eqref{eq:P0-3} its associative property and noticed that $\mathbf{I}_{2}\otimes\mathbf{I}_{2}=\mathbf{I}_{2^{2}}$.
After these steps, the term $\mathbf{I}_{2}\otimes\mathbf{P}_{0}^{(L-1)}$ has been expressed as a function of $\mathbf{P}_{0}^{(L-2)}$, and the term $\mathbf{I}_{2^2}\otimes\mathbf{P}_{0}^{(L-2)}$ appeared.
Therefore, the steps in \eqref{eq:P0-1}-\eqref{eq:P0-3} can be repeated to express $\mathbf{I}_{2^2}\otimes\mathbf{P}_{0}^{(L-2)}$ as a function of $\mathbf{P}_{0}^{(L-3)}$, and so forth.
This recursion repeats until the term $\mathbf{I}_{2^{L-1}}\otimes\mathbf{P}_{0}^{(1)}$ appears, in which $\mathbf{P}_{0}^{(1)}=\mathbf{I}_2$ or equivalently $\mathbf{P}_{0}^{(1)}=\tilde{\mathbf{P}}_{2}$.
At the end of the recursion, we obtain
\begin{multline}
\mathbf{P}_{0}^{(L)}=
\left(\mathbf{I}_{2^{L-1}}\otimes\tilde{\mathbf{P}}_{2}\right)
\left(\mathbf{I}_{2^{L-2}}\otimes\tilde{\mathbf{P}}_{2^2}\right)\times\cdots\\
\times
\left(\mathbf{I}_{2^2}\otimes\tilde{\mathbf{P}}_{2^{L-2}}\right)
\left(\mathbf{I}_{2}\otimes\tilde{\mathbf{P}}_{2^{L-1}}\right)\tilde{\mathbf{P}}_{2^{L}},
\end{multline}
which can be more compactly written as \eqref{eq:P0-dft} in Proposition~\ref{pro:dft}.

For the matrices $\mathbf{P}_\ell^{(L)}$, for $\ell=1,\ldots,L$, we notice from \eqref{eq:FL-proof} that $\mathbf{P}_{L}^{(L)}=\tilde{\mathbf{P}}_{2^L}$, $\mathbf{P}_{L-1}^{(L)}=\tilde{\mathbf{P}}_{2^L}^T(\mathbf{I}_{2}\otimes\tilde{\mathbf{P}}_{2^{L-1}})$, and that $\mathbf{P}_{\ell}^{(L)}=\mathbf{I}_{2}\otimes\mathbf{P}_{\ell}^{(L-1)}$, for $\ell=1,\ldots,L-2$.
Hence, $\mathbf{P}_{\ell}^{(L)}$ can be obtained recursively and it is given by $L-\ell-1$ nested Kronecker products, until the matrix $\mathbf{P}_{\ell}^{(\ell+1)}=\tilde{\mathbf{P}}_{2^{\ell+1}}^T(\mathbf{I}_{2}\otimes\tilde{\mathbf{P}}_{2^{\ell}})$ is reached.
Specifically, we have
\begin{align}
\mathbf{P}_{\ell}^{(L)}
&=\left(\mathbf{I}_{2}\otimes\left(\mathbf{I}_{2}\otimes\cdots\left(\mathbf{I}_{2}\otimes\mathbf{P}_{\ell}^{(\ell+1)}\right)\cdots\right)\right)\\
&=\left(\mathbf{I}_{2}\otimes\cdots\otimes\mathbf{I}_{2}\right)\otimes\mathbf{P}_{\ell}^{(\ell+1)}\label{eq:Pl-1}\\
&=\mathbf{I}_{2^{L-\ell-1}}\otimes\mathbf{P}_{\ell}^{(\ell+1)}\label{eq:Pl-2}\\
&=\mathbf{I}_{2^{L-\ell-1}}\otimes\left(\tilde{\mathbf{P}}_{2^{\ell+1}}^T\left(\mathbf{I}_{2}\otimes\tilde{\mathbf{P}}_{2^{\ell}}\right)\right)\label{eq:Pl-3}\\
&=\left(\mathbf{I}_{2^{L-\ell-1}}\otimes\tilde{\mathbf{P}}_{2^{\ell+1}}^T\right)\left(\mathbf{I}_{2^{L-\ell-1}}\otimes\left(\mathbf{I}_{2}\otimes\tilde{\mathbf{P}}_{2^{\ell}}\right)\right)\label{eq:Pl-4}\\
&=\left(\mathbf{I}_{2^{L-\ell-1}}\otimes\tilde{\mathbf{P}}_{2^{\ell+1}}^T\right)\left(\mathbf{I}_{2^{L-\ell}}\otimes\tilde{\mathbf{P}}_{2^{\ell}}\right),\label{eq:Pl-5}
\end{align}
for $\ell=1,\ldots,L-2$, where in \eqref{eq:Pl-1} we have used the associative property of the Kronecker product to reorganize the parentheses, in \eqref{eq:Pl-2} we have noticed that the Kronecker product of $L-\ell-1$ identity matrices of size $2\times2$ is an identity matrix of size $2^{L-\ell-1}\times2^{L-\ell-1}$, in \eqref{eq:Pl-3} we have used $\mathbf{P}_{\ell}^{(\ell+1)}=\tilde{\mathbf{P}}_{2^{\ell+1}}^T(\mathbf{I}_{2}\otimes\tilde{\mathbf{P}}_{2^{\ell}})$, in \eqref{eq:Pl-4} we have applied the mixed-product property of the Kronecker product, and \eqref{eq:Pl-5} its associative property and noticed that $\mathbf{I}_{2^{L-\ell-1}}\otimes\mathbf{I}_{2}=\mathbf{I}_{2^{L-\ell}}$.
We have therefore proved that the matrices $\mathbf{P}_{\ell}^{(L)}$, for $\ell=0,\ldots,L$, are expressed as claimed in Proposition~\ref{pro:dft}.

We now derive the expressions of the matrices $\mathbf{D}_\ell^{(L)}$, for $\ell=1,\ldots,L$.
From \eqref{eq:FL-proof}, we notice that $\mathbf{D}_{L}^{(L)}=\tilde{\mathbf{D}}_{2^L}$, and that $\mathbf{D}_{\ell}^{(L)}=\mathbf{I}_{2}\otimes\mathbf{D}_{\ell}^{(L-1)}$, for $\ell=1,\ldots,L-1$.
Hence, $\mathbf{D}_{\ell}^{(L)}$ can be obtained recursively and it is given by $L-\ell$ nested Kronecker products, until the matrix $\mathbf{D}_{\ell}^{(\ell)}=\tilde{\mathbf{D}}_{2^\ell}$ is reached.
Specifically, we have
\begin{align}
\mathbf{D}_{\ell}^{(L)}
&=\left(\mathbf{I}_{2}\otimes\left(\mathbf{I}_{2}\otimes\cdots\left(\mathbf{I}_{2}\otimes\tilde{\mathbf{D}}_{2^\ell}\right)\cdots\right)\right)\\
&=\left(\mathbf{I}_{2}\otimes\cdots\otimes\mathbf{I}_{2}\right)\otimes\tilde{\mathbf{D}}_{2^\ell}\label{eq:Dl-1}\\
&=\mathbf{I}_{2^{L-\ell}}\otimes\tilde{\mathbf{D}}_{2^\ell},\label{eq:Dl-2}
\end{align}
for $\ell=1,\ldots,L$, where in \eqref{eq:Dl-1} we have used the associative property of the Kronecker product, and in \eqref{eq:Dl-2} we have noticed that the Kronecker product of $L-\ell$ identity matrices of size $2\times2$ is an identity matrix of size $2^{L-\ell}\times2^{L-\ell}$.
We have therefore proved that also the matrices $\mathbf{D}_{\ell}^{(L)}$, for $\ell=1,\ldots,L$, are expressed as claimed in Proposition~\ref{pro:dft}.

\subsection{Comparison Between DFT and Butler Matrix}

\begin{figure}[t]
\centering
\includegraphics[width=0.241\textwidth]{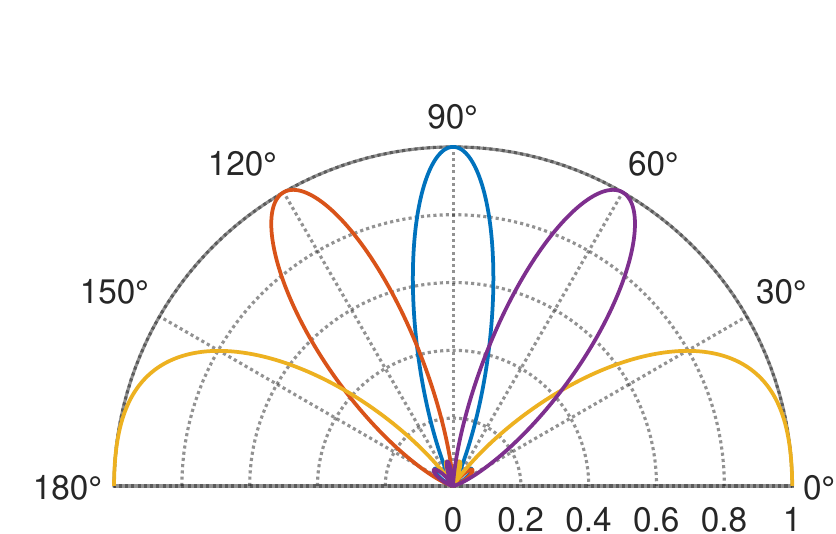}
\includegraphics[width=0.241\textwidth]{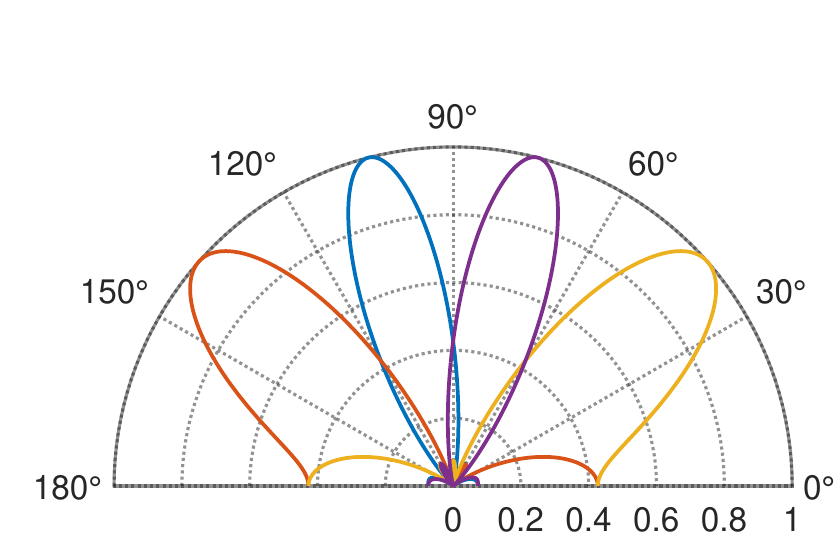}
\caption{Radiation pattern of the columns of (a) the DFT matrix $\mathbf{F}_4$ and (b) the odd-time odd-frequency DFT matrix $\mathbf{B}_4$.}
\label{fig:F-vs-B}
\end{figure}

Following \cite{moo64} and references therein, the Butler matrix is unitary and its entries have magnitude one and phase fulfilling a specific property.
The phases of the entries in each column of the $N\times N$ Butler matrix are spaced by $\psi_n=(2n-1)\pi/N$, for $n=1,\ldots,N$.
In this way, the $n$th input signal is steered toward the angle $\alpha_n$ such that $\psi_n=2\pi d\cos(\alpha_n)/\lambda$, where $d$ is the antenna spacing and $\lambda$ is the wavelength.
It is clear that the \gls{dft} matrix $\mathbf{F}_N$ in \eqref{eq:dft} does not fulfill this property.
Instead, a matrix that fulfills this property (up to a column permutation) is $\mathbf{B}_N\in\mathbb{C}^{N\times N}$ defined as
\begin{equation}
\left[\mathbf{B}_N\right]_{i,k}=\frac{1}{\sqrt{N}}\omega^{(i-1/2)(k-1/2)},\label{eq:Butler}
\end{equation}
for $i,k=1,\ldots,N$, where $\omega=e^{-j2\pi/N}$, which is the so-called odd-time odd-frequency \gls{dft} \cite{bon76a}.
Therefore, the columns of $\mathbf{B}$ have the same radiation pattern as those of the Butler matrix.
For comparison, Fig.~\ref{fig:F-vs-B} shows the radiation patterns of the columns of $\mathbf{F}_4$ and $\mathbf{B}_4$, when the four antennas are spaced $\lambda/2$.
Notably, while the transforms associated with $\mathbf{F}_N$ and $\mathbf{B}_N$ are different, they are both part of a family referred to as generalized \gls{dft} \cite{bon76b}.

\subsection{Proof of Proposition~\ref{pro:hada}}

As done to prove Proposition~\ref{pro:dft}, this proof is also divided into two parts.
First, we show that $\mathbf{H}_{N}$ can be decomposed as $\mathbf{H}_{N}=\mathbf{P}_{L}\mathbf{D}_{L}\cdots\mathbf{P}_{2}\mathbf{D}_{2}\mathbf{P}_{1}\mathbf{D}_{1}\mathbf{P}_{0}$, where $\mathbf{P}_{\ell}$ are permutation matrices and $\mathbf{D}_{\ell}$ are block diagonal matrices with $2\times2$ blocks.
Second, we show that the matrices $\mathbf{P}_{\ell}$ and $\mathbf{D}_{\ell}$ are expressed as claimed by Proposition~\ref{pro:hada}.

For the first part, we apply the definition of Hadamard matrix and some properties of the Kronecker product to rewrite it as
\begin{align}
\mathbf{H}_{N}
&=\left(\mathbf{I}_{2}\mathbf{H}_{2}\right)\otimes\left(\mathbf{H}_{N/2}\mathbf{I}_{N/2}\right)\\
&=\left(\mathbf{H}_{2}\otimes\mathbf{I}_{N/2}\right)\left(\mathbf{I}_{2}\otimes\mathbf{H}_{N/2}\right)\\
&=\tilde{\mathbf{P}}_{N}\left(\mathbf{I}_{N/2}\otimes\mathbf{H}_{2}\right)\tilde{\mathbf{P}}_{N}^T\left(\mathbf{I}_{2}\otimes\mathbf{H}_{N/2}\right),\label{eq:HN-proof0}
\end{align}
and we prove the first part by induction on $L=\log_2(N)$.
The base case $L=1$ is trivially proved since
\begin{equation}
\mathbf{H}_2=\underbrace{\mathbf{I}_2}_{\mathbf{P}_{1}^{(1)}}\underbrace{\mathbf{H}_2}_{\mathbf{D}_{1}^{(1)}}\underbrace{\mathbf{I}_2}_{\mathbf{P}_{0}^{(1)}},\label{eq:H2-proof}
\end{equation}
where $\mathbf{P}_{0}^{(1)}$ and $\mathbf{P}_{1}^{(1)}$ are permutation matrices and $\mathbf{D}_{1}^{(1)}$ is block diagonal with blocks having size $2\times2$.
As the induction step, we prove that if the first part of this proof holds for the case $L-1$, then it also holds for the case $L$.
By the induction hypothesis, we have
\begin{multline}
\mathbf{H}_{2^{L-1}}=\mathbf{P}_{L-1}^{(L-1)}\mathbf{D}_{L-1}^{(L-1)}\times\cdots\\
\times\mathbf{P}_{2}^{(L-1)}\mathbf{D}_{2}^{(L-1)}\mathbf{P}_{1}^{(L-1)}\mathbf{D}_{1}^{(L-1)}\mathbf{P}_{0}^{(L-1)},\label{eq:HL-1-proof}
\end{multline}
where $\mathbf{P}_{\ell}^{(L-1)}$ are permutation matrices and $\mathbf{D}_{\ell}^{(L-1)}$ are block diagonal with $2\times2$ blocks.
The induction step can be proved by using \eqref{eq:HN-proof0} as $\mathbf{H}_{2^L}=\tilde{\mathbf{P}}_{2^L}(\mathbf{I}_{2^{L-1}}\otimes\mathbf{H}_{2})\tilde{\mathbf{P}}_{2^L}^T(\mathbf{I}_{2}\otimes\mathbf{H}_{2^{L-1}})$ and expressing $\mathbf{H}_{2^{L-1}}$ therein with \eqref{eq:HL-1-proof}, which gives
\begin{multline}
\mathbf{H}_{2^L}=
\underbrace{\tilde{\mathbf{P}}_{2^L}}_{\mathbf{P}_{L}^{(L)}}
\underbrace{\left(\mathbf{I}_{2^{L-1}}\otimes\mathbf{H}_{2}\right)}_{\mathbf{D}_{L}^{(L)}}
\underbrace{\tilde{\mathbf{P}}_{2^L}^T
\left(\mathbf{I}_{2}\otimes\mathbf{P}_{L-1}^{(L-1)}\right)}_{\mathbf{P}_{L-1}^{(L)}}\times\cdots\\
\times
\underbrace{\left(\mathbf{I}_{2}\otimes\mathbf{D}_{L-1}^{(L-1)}\right)}_{\mathbf{D}_{L-1}^{(L)}}
\underbrace{\left(\mathbf{I}_{2}\otimes\mathbf{P}_{2}^{(L-1)}\right)}_{\mathbf{P}_{2}^{(L)}}
\underbrace{\left(\mathbf{I}_{2}\otimes\mathbf{D}_{2}^{(L-1)}\right)}_{\mathbf{D}_{2}^{(L)}}\\
\times
\underbrace{\left(\mathbf{I}_{2}\otimes\mathbf{P}_{1}^{(L-1)}\right)}_{\mathbf{P}_{1}^{(L)}}
\underbrace{\left(\mathbf{I}_{2}\otimes\mathbf{D}_{1}^{(L-1)}\right)}_{\mathbf{D}_{1}^{(L)}}
\underbrace{\left(\mathbf{I}_{2}\otimes\mathbf{P}_{0}^{(L-1)}\right)}_{\mathbf{P}_{0}^{(L)}},\label{eq:HL-proof}
\end{multline}
where we have exploited the mixed-product property of the Kronecker product.
The blocks $\mathbf{P}_{\ell}^{(L)}$ and $\mathbf{D}_{\ell}^{(L)}$ in \eqref{eq:HL-proof} are permutation matrices and block diagonal matrices with $2\times2$ blocks, respectively, because of the induction hypothesis, concluding the first part of this proof.

For the second part of this proof, we need to show that the matrices $\mathbf{P}_{\ell}^{(L)}$, for $\ell=0,\ldots,L$, and $\mathbf{D}_{\ell}^{(L)}$, for $\ell=1,\ldots,L$, are expressed as stated in Proposition~\ref{pro:hada}.
For the matrices $\mathbf{P}_\ell^{(L)}$, for $\ell=0,\ldots,L$, we notice from \eqref{eq:HL-proof} that $\mathbf{P}_{L}^{(L)}=\tilde{\mathbf{P}}_{2^L}$, $\mathbf{P}_{L-1}^{(L)}=\tilde{\mathbf{P}}_{2^L}^T(\mathbf{I}_{2}\otimes\tilde{\mathbf{P}}_{2^{L-1}})$, and that $\mathbf{P}_{\ell}^{(L)}=\mathbf{I}_{2}\otimes\mathbf{P}_{\ell}^{(L-1)}$, for $\ell=0,\ldots,L-2$.
Hence, $\mathbf{P}_{\ell}^{(L)}$ can be obtained recursively and it is given by $L-\ell-1$ nested Kronecker products, until the matrix $\mathbf{P}_{\ell}^{(\ell+1)}=\tilde{\mathbf{P}}_{2^{\ell+1}}^T(\mathbf{I}_{2}\otimes\tilde{\mathbf{P}}_{2^{\ell}})$ is reached.
Specifically, we have
\begin{align}
\mathbf{P}_{\ell}^{(L)}
&=\left(\mathbf{I}_{2}\otimes\left(\mathbf{I}_{2}\otimes\cdots\left(\mathbf{I}_{2}\otimes\mathbf{P}_{\ell}^{(\ell+1)}\right)\cdots\right)\right)\\
&=\left(\mathbf{I}_{2}\otimes\cdots\otimes\mathbf{I}_{2}\right)\otimes\mathbf{P}_{\ell}^{(\ell+1)}\label{eq:Pl-1-H}\\
&=\mathbf{I}_{2^{L-\ell-1}}\otimes\mathbf{P}_{\ell}^{(\ell+1)}\label{eq:Pl-2-H}\\
&=\mathbf{I}_{2^{L-\ell-1}}\otimes\left(\tilde{\mathbf{P}}_{2^{\ell+1}}^T\left(\mathbf{I}_{2}\otimes\tilde{\mathbf{P}}_{2^{\ell}}\right)\right)\label{eq:Pl-3-H}\\
&=\left(\mathbf{I}_{2^{L-\ell-1}}\otimes\tilde{\mathbf{P}}_{2^{\ell+1}}^T\right)\left(\mathbf{I}_{2^{L-\ell-1}}\otimes\left(\mathbf{I}_{2}\otimes\tilde{\mathbf{P}}_{2^{\ell}}\right)\right)\label{eq:Pl-4-H}\\
&=\left(\mathbf{I}_{2^{L-\ell-1}}\otimes\tilde{\mathbf{P}}_{2^{\ell+1}}^T\right)\left(\mathbf{I}_{2^{L-\ell}}\otimes\tilde{\mathbf{P}}_{2^{\ell}}\right),\label{eq:Pl-5-H}
\end{align}
for $\ell=0,\ldots,L-2$, which is as claimed in Proposition~\ref{pro:hada}.

We now derive the expressions of the matrices $\mathbf{D}_\ell^{(L)}$, for $\ell=1,\ldots,L$.
From \eqref{eq:HL-proof}, we notice that $\mathbf{D}_{L}^{(L)}=\mathbf{I}_{2^{L-1}}\otimes\mathbf{H}_{2}$, and that $\mathbf{D}_{\ell}^{(L)}=\mathbf{I}_{2}\otimes\mathbf{D}_{\ell}^{(L-1)}$, for $\ell=1,\ldots,L-1$.
Hence, $\mathbf{D}_{\ell}^{(L)}$ can be obtained recursively and it is given by $L-\ell$ nested Kronecker products, until the matrix $\mathbf{D}_{\ell}^{(\ell)}=\mathbf{I}_{2^{\ell-1}}\otimes\mathbf{H}_{2}$ is reached.
Specifically, we have
\begin{align}
\mathbf{D}_{\ell}^{(L)}
&=\left(\mathbf{I}_{2}\otimes\left(\mathbf{I}_{2}\otimes\cdots\left(\mathbf{I}_{2}\otimes\left(\mathbf{I}_{2^{\ell-1}}\otimes\mathbf{H}_{2}\right)\right)\cdots\right)\right)\\
&=\left(\mathbf{I}_{2}\otimes\cdots\otimes\mathbf{I}_{2}\otimes\mathbf{I}_{2^{\ell-1}}\right)\otimes\mathbf{H}_{2}\label{eq:Dl-1-H}\\
&=\mathbf{I}_{2^{L-1}}\otimes\mathbf{H}_{2},\label{eq:Dl-2-H}
\end{align}
for $\ell=1,\ldots,L$, completing the proof of Proposition~\ref{pro:hada}.

\subsection{Proof of Proposition~\ref{pro:haar}}

As done for Propositions~\ref{pro:dft} and \ref{pro:hada}, the proof of this proposition is also divided into two parts.
First, we show that $\mathbf{W}_{N}$ can be decomposed as $\mathbf{W}_{N}=\mathbf{P}_{L}\mathbf{D}_{L}\cdots\mathbf{P}_{2}\mathbf{D}_{2}\mathbf{P}_{1}\mathbf{D}_{1}\mathbf{P}_{0}$, where $\mathbf{P}_{\ell}$ are permutation matrices and $\mathbf{D}_{\ell}$ are block diagonal matrices with $2\times2$ blocks.
Second, we show that the matrices $\mathbf{P}_{\ell}$ and $\mathbf{D}_{\ell}$ are expressed as claimed by Proposition~\ref{pro:haar}.

For the first part, we depart from the recursive definition of Haar matrix, and equivalently rewrite it as
\begin{align}
\mathbf{W}_N
&=\frac{1}{\sqrt{2}}
\begin{bmatrix}\mathbf{W}_{N/2} & \mathbf{0}\\\mathbf{0} & \mathbf{I}_{N/2}\end{bmatrix}
\begin{bmatrix}\mathbf{I}_{N/2} & \mathbf{I}_{N/2}\\\mathbf{I}_{N/2} & -\mathbf{I}_{N/2}\end{bmatrix}\\
&=\text{diag}\left(\mathbf{W}_{N/2},\mathbf{I}_{N/2}\right)\left(\mathbf{W}_{2}\otimes\mathbf{I}_{N/2}\right)\\
&=\text{diag}\left(\mathbf{W}_{N/2},\mathbf{I}_{N/2}\right)\tilde{\mathbf{P}}_{N}\left(\mathbf{I}_{N/2}\otimes\mathbf{W}_{2}\right)\tilde{\mathbf{P}}_{N}^T,\label{eq:WN-proof0}
\end{align}
and we prove the first part by induction on $L=\log_2(N)$.
The base case $L=1$ is trivially proved since
\begin{equation}
\mathbf{W}_2=
\underbrace{\mathbf{I}_2}_{\mathbf{P}_{1}^{(1)}}
\underbrace{\mathbf{W}_2}_{\mathbf{D}_{1}^{(1)}}
\underbrace{\mathbf{I}_2}_{\mathbf{P}_{0}^{(1)}},\label{eq:W2-proof}
\end{equation}
where $\mathbf{P}_{0}^{(1)}$ and $\mathbf{P}_{1}^{(1)}$ are permutation matrices and $\mathbf{D}_{1}^{(1)}$ is block diagonal with blocks having size $2\times2$.
As the induction step, we prove that if the first part of this proof holds for the case $L-1$, then it also holds for the case $L$.
By the induction hypothesis, we have
\begin{multline}
\mathbf{W}_{2^{L-1}}=\mathbf{P}_{L-1}^{(L-1)}\mathbf{D}_{L-1}^{(L-1)}\times\cdots\\
\times\mathbf{P}_{2}^{(L-1)}\mathbf{D}_{2}^{(L-1)}\mathbf{P}_{1}^{(L-1)}\mathbf{D}_{1}^{(L-1)}\mathbf{P}_{0}^{(L-1)},\label{eq:WL-1-proof}
\end{multline}
where $\mathbf{P}_{\ell}^{(L-1)}$ are permutation matrices and $\mathbf{D}_{\ell}^{(L-1)}$ are block diagonal with $2\times2$ blocks.
The induction step can be proved by using \eqref{eq:WN-proof0} as $\mathbf{W}_{2^L}=\text{diag}(\mathbf{W}_{2^{L-1}},\mathbf{I}_{2^{L-1}})\tilde{\mathbf{P}}_{2^L}(\mathbf{I}_{2^{L-1}}\otimes\mathbf{W}_{2})\tilde{\mathbf{P}}_{2^L}^T$ and expressing $\mathbf{H}_{2^{L-1}}$ therein with \eqref{eq:WL-1-proof}, which gives
\begin{multline}
\mathbf{W}_{2^L}=
\underbrace{\text{diag}\left(\mathbf{P}_{L-1}^{(L-1)},\mathbf{I}_{2^{L-1}}\right)}_{\mathbf{P}_{L}^{(L)}}
\underbrace{\text{diag}\left(\mathbf{D}_{L-1}^{(L-1)},\mathbf{I}_{2^{L-1}}\right)}_{\mathbf{D}_{L}^{(L)}}\\
\times\underbrace{\text{diag}\left(\mathbf{P}_{L-2}^{(L-1)},\mathbf{I}_{2^{L-1}}\right)}_{\mathbf{P}_{L-1}^{(L)}}
\underbrace{\text{diag}\left(\mathbf{D}_{L-2}^{(L-1)},\mathbf{I}_{2^{L-1}}\right)}_{\mathbf{D}_{L-1}^{(L)}}\times\cdots\\
\times
\underbrace{\text{diag}\left(\mathbf{P}_{1}^{(L-1)},\mathbf{I}_{2^{L-1}}\right)}_{\mathbf{P}_{2}^{(L)}}
\underbrace{\text{diag}\left(\mathbf{D}_{1}^{(L-1)},\mathbf{I}_{2^{L-1}}\right)}_{\mathbf{D}_{2}^{(L)}}\\
\times
\underbrace{\text{diag}\left(\mathbf{P}_{0}^{(L-1)},\mathbf{I}_{2^{L-1}}\right)\tilde{\mathbf{P}}_{2^L}}_{\mathbf{P}_{1}^{(L)}}
\underbrace{\left(\mathbf{I}_{2^{L-1}}\otimes\mathbf{W}_{2}\right)}_{\mathbf{D}_{1}^{(L)}}
\underbrace{\tilde{\mathbf{P}}_{2^L}^T}_{\mathbf{P}_{0}^{(L)}},\label{eq:WL-proof}
\end{multline}
where we have exploited the mixed-product property of the Kronecker product.
The blocks $\mathbf{P}_{\ell}^{(L)}$ and $\mathbf{D}_{\ell}^{(L)}$ in \eqref{eq:WL-proof} are permutation matrices and block diagonal matrices with blocks being $2\times2$, respectively, because of the induction hypothesis, concluding the first part of this proof.

For the second part of this proof, we need to show that the matrices $\mathbf{P}_{\ell}^{(L)}$, for $\ell=0,\ldots,L$, and $\mathbf{D}_{\ell}^{(L)}$, for $\ell=1,\ldots,L$, are expressed as stated in Proposition~\ref{pro:haar}.
For the matrices $\mathbf{P}_\ell^{(L)}$, for $\ell=0,\ldots,L$, we notice from \eqref{eq:WL-proof} that $\mathbf{P}_{0}^{(L)}=\tilde{\mathbf{P}}_{2^L}^T$,
\begin{equation}
\mathbf{P}_{1}^{(L)}=
\begin{cases}
\text{diag}\left(\tilde{\mathbf{P}}_{2^{L-1}}^T,\mathbf{I}_{2^{L-1}}\right)\tilde{\mathbf{P}}_{2^L}&\text{ if }L>1\\
\mathbf{I}_{2} &\text{ if }L=1
\end{cases},\label{eq:P1-W}
\end{equation}
and that $\mathbf{P}_{\ell}^{(L)}=\text{diag}(\mathbf{P}_{\ell-1}^{(L-1)},\mathbf{I}_{2^{L-1}})$, for $\ell=2,\ldots,L$.
Hence, $\mathbf{P}_{\ell}^{(L)}$ can be obtained recursively and it is given by $\ell-1$ nested $\text{diag}(\cdot)$ operations, until the matrix $\mathbf{P}_{1}^{(L-\ell+1)}$ is reached.
Specifically, we have
\begin{align}
\mathbf{P}_{\ell}^{(L)}
&=\text{diag}\left(\mathbf{P}_{\ell-1}^{(L-1)},\mathbf{I}_{2^{L-1}}\right)\label{eq:Pl-1-W}\\
&=\text{diag}\left(\text{diag}\left(\mathbf{P}_{\ell-2}^{(L-2)},\mathbf{I}_{2^{L-2}}\right),\mathbf{I}_{2^{L-1}}\right)\label{eq:Pl-2-W}\\
&=\text{diag}\left(\text{diag}\left(\cdots\mathbf{P}_{1}^{(L-\ell+1)}\cdots,\mathbf{I}_{2^{L-2}}\right),\mathbf{I}_{2^{L-1}}\right)\label{eq:Pl-3-W}\\
&=\text{diag}\left(\mathbf{P}_{1}^{(L-\ell+1)},\mathbf{I}_{2^{L}-2^{L-\ell+1}}\right),\label{eq:Pl-4-W}
\end{align}
for $\ell=2,\ldots,L$.
By expressing $\mathbf{P}_{1}^{(L-\ell+1)}$ by \eqref{eq:P1-W}, we obtain
\begin{equation}
\mathbf{P}_{\ell}^{(L)}=\text{diag}\left(\text{diag}\left(\tilde{\mathbf{P}}_{2^{L-\ell}}^T,\mathbf{I}_{2^{L-\ell}}\right)\tilde{\mathbf{P}}_{2^{L-\ell+1}},\mathbf{I}_{2^{L}-2^{L-\ell+1}}\right),
\end{equation}
for $\ell=2,\ldots,L-1$, and $\mathbf{P}_{L}^{(L)}=\mathbf{I}_{2^{L}}$, as given by Proposition~\ref{pro:haar}.

We now derive the expressions of the matrices $\mathbf{D}_\ell^{(L)}$, for $\ell=1,\ldots,L$.
From \eqref{eq:WL-proof}, we notice that $\mathbf{D}_{1}^{(L)}=\mathbf{I}_{2^{L-1}}\otimes\mathbf{W}_{2}$, and $\mathbf{D}_{\ell}^{(L)}=\text{diag}(\mathbf{D}_{\ell-1}^{(L-1)},\mathbf{I}_{2^{L-1}})$, for $\ell=2,\ldots,L$.
Hence, $\mathbf{D}_{\ell}^{(L)}$ can be obtained recursively by $\ell-1$ nested $\text{diag}(\cdot)$ operations, until the matrix $\mathbf{D}_{1}^{(L-\ell+1)}=\mathbf{I}_{2^{L-\ell}}\otimes\mathbf{W}_{2}$ is reached.
In detail, we have
\begin{align}
\mathbf{D}_{\ell}^{(L)}
&=\text{diag}\left(\mathbf{D}_{\ell-1}^{(L-1)},\mathbf{I}_{2^{L-1}}\right)\\
&=\text{diag}\left(\text{diag}\left(\mathbf{D}_{\ell-2}^{(L-2)},\mathbf{I}_{2^{L-2}}\right),\mathbf{I}_{2^{L-1}}\right)\\
&=\text{diag}\left(\text{diag}\left(\cdots\mathbf{D}_{1}^{(L-\ell+1)}\cdots,\mathbf{I}_{2^{L-2}}\right),\mathbf{I}_{2^{L-1}}\right)\\
&=\text{diag}\left(\mathbf{D}_{1}^{(L-\ell+1)},\mathbf{I}_{2^{L}-2^{L-\ell+1}}\right)\\
&=\text{diag}\left(\mathbf{I}_{2^{L-\ell}}\otimes\mathbf{W}_{2},\mathbf{I}_{2^{L}-2^{L-\ell+1}}\right),
\end{align}
for $\ell=2,\ldots,L$, precisely as given by Proposition~\ref{pro:haar}.

\bibliographystyle{IEEEtran}
\bibliography{IEEEabrv,main}

\begin{thebibliography}{10}
\providecommand{\url}[1]{#1}
\csname url@samestyle\endcsname
\providecommand{\newblock}{\relax}
\providecommand{\bibinfo}[2]{#2}
\providecommand{\BIBentrySTDinterwordspacing}{\spaceskip=0pt\relax}
\providecommand{\BIBentryALTinterwordstretchfactor}{4}
\providecommand{\BIBentryALTinterwordspacing}{\spaceskip=\fontdimen2\font plus
\BIBentryALTinterwordstretchfactor\fontdimen3\font minus \fontdimen4\font\relax}
\providecommand{\BIBforeignlanguage}[2]{{%
\expandafter\ifx\csname l@#1\endcsname\relax
\typeout{** WARNING: IEEEtran.bst: No hyphenation pattern has been}%
\typeout{** loaded for the language `#1'. Using the pattern for}%
\typeout{** the default language instead.}%
\else
\language=\csname l@#1\endcsname
\fi
#2}}
\providecommand{\BIBdecl}{\relax}
\BIBdecl

\bibitem{rab02}
J.~M. Rabaey, A.~Chandrakasan, and B.~Nikolic, \emph{Digital integrated circuits}.\hskip 1em plus 0.5em minus 0.4em\relax Prentice hall Englewood Cliffs, 2002, vol.~2.

\bibitem{cal13}
C.~Caloz, S.~Gupta, Q.~Zhang, and B.~Nikfal, ``Analog signal processing: A possible alternative or complement to dominantly digital radio schemes,'' \emph{IEEE Microw. Mag.}, vol.~14, no.~6, pp. 87--103, 2013.

\bibitem{sil14}
A.~Silva, F.~Monticone, G.~Castaldi, V.~Galdi, A.~Al{\`u}, and N.~Engheta, ``Performing mathematical operations with metamaterials,'' \emph{Science}, vol. 343, no. 6167, pp. 160--163, 2014.

\bibitem{zan21}
F.~Zangeneh-Nejad, D.~L. Sounas, A.~Al{\`u}, and R.~Fleury, ``Analogue computing with metamaterials,'' \emph{Nature Reviews Materials}, vol.~6, no.~3, pp. 207--225, 2021.

\bibitem{iel18}
D.~Ielmini and H.-S.~P. Wong, ``In-memory computing with resistive switching devices,'' \emph{Nature electronics}, vol.~1, no.~6, pp. 333--343, 2018.

\bibitem{ner25-1}
M.~Nerini and B.~Clerckx, ``Analog computing for signal processing and communications – {Part I}: Computing with microwave networks,'' \emph{IEEE Trans. Signal Process.}, vol.~73, pp. 5183--5197, 2025.

\bibitem{ner25-2}
M.~Nerini and B.~Clerckx, ``Analog computing for signal processing and communications – {Part II}: Toward gigantic {MIMO} beamforming,'' \emph{IEEE Trans. Signal Process.}, vol.~73, pp. 5198--5212, 2025.

\bibitem{moo64}
H.~Moody, ``The systematic design of the {Butler} matrix,'' \emph{IEEE Trans. Antennas Propag.}, vol.~12, no.~6, pp. 786--788, 1964.

\bibitem{yan25}
X.~Yang, O.~C. Vicente, and C.~Caloz, ``Analog {OFDM} based on real-time fourier transformation,'' \emph{arXiv preprint arXiv:2506.20287}, 2025.

\bibitem{poz12}
D.~M. Pozar, \emph{Microwave engineering}, 4th~ed.\hskip 1em plus 0.5em minus 0.4em\relax John wiley \& sons, 2012.

\bibitem{sha41}
C.~E. Shannon, ``Mathematical theory of the differential analyzer,'' \emph{Journal of Mathematics and Physics}, vol.~20, no. 1-4, pp. 337--354, 1941.

\bibitem{str09}
G.~Strang, \emph{Introduction to linear algebra}, 4th~ed.\hskip 1em plus 0.5em minus 0.4em\relax Wellesley-Cambridge Press, 2009.

\bibitem{coo65}
J.~W. Cooley and J.~W. Tukey, ``An algorithm for the machine calculation of complex {Fourier} series,'' \emph{Mathematics of computation}, vol.~19, no.~90, pp. 297--301, 1965.

\bibitem{nes68}
W.~Nester, ``The fast {Fourier} transform and the {Butler} matrix,'' \emph{IEEE Trans. Antennas Propag.}, vol.~16, no.~3, pp. 360--360, 1968.

\bibitem{she68}
J.~Shelton, ``Fast {Fourier} transforms and {Butler} matrices,'' \emph{Proc. IEEE}, vol.~56, no.~3, pp. 350--350, 1968.

\bibitem{uen81}
M.~Ueno, ``A systematic design formulation for {Butler} matrix applied {FFT} algorithm,'' \emph{IEEE Trans. Antennas Propag.}, vol.~29, no.~3, pp. 496--501, 1981.

\bibitem{an24}
J.~An, C.~Yuen, Y.~L. Guan, M.~D. Renzo, M.~Debbah, H.~V. Poor, and L.~Hanzo, ``Two-dimensional direction-of-arrival estimation using stacked intelligent metasurfaces,'' \emph{IEEE J. Sel. Areas Commun.}, vol.~42, no.~10, pp. 2786--2802, 2024.

\bibitem{fin77}
B.~J. Fino and V.~R. Algazi, ``A unified treatment of discrete fast unitary transforms,'' \emph{SIAM Journal on Computing}, vol.~6, no.~4, pp. 700--717, 1977.

\bibitem{kim09}
J.~H. Kim and W.~S. Park, ``A {Hadamard} matrix feed network for a dual-beam forming array antenna,'' \emph{IEEE Trans. Antennas Propag.}, vol.~57, no.~1, pp. 283--286, 2009.

\bibitem{gol13}
M.~Goldenbaum, H.~Boche, and S.~Stańczak, ``Harnessing interference for analog function computation in wireless sensor networks,'' \emph{IEEE Trans. Signal Process.}, vol.~61, no.~20, pp. 4893--4906, 2013.

\bibitem{yan20}
K.~Yang, T.~Jiang, Y.~Shi, and Z.~Ding, ``Federated learning via over-the-air computation,'' \emph{IEEE Trans. Wireless Commun.}, vol.~19, no.~3, pp. 2022--2035, 2020.

\bibitem{del18}
P.~del Hougne and G.~Lerosey, ``Leveraging chaos for wave-based analog computation: Demonstration with indoor wireless communication signals,'' \emph{Physical Review X}, vol.~8, no.~4, p. 041037, 2018.

\bibitem{sol22}
J.~Sol, D.~R. Smith, and P.~Del~Hougne, ``Meta-programmable analog differentiator,'' \emph{Nature Communications}, vol.~13, no.~1, p. 1713, 2022.

\bibitem{wu19}
Q.~Wu and R.~Zhang, ``Intelligent reflecting surface enhanced wireless network via joint active and passive beamforming,'' \emph{IEEE Trans. Wireless Commun.}, vol.~18, no.~11, pp. 5394--5409, 2019.

\bibitem{joy25}
A.~T. Joy, A.~Tishchenko, H.~Taghvaee, P.~Mursia, V.~Sciancalepore, and M.~Khalily, ``{RIS}-enabled {ISAC} in {6G}: Exploring the role of wave domain computing,'' \emph{IEEE Commun. Standards Mag.}, 2025.

\bibitem{an23}
J.~An, C.~Xu, D.~W.~K. Ng, G.~C. Alexandropoulos, C.~Huang, C.~Yuen, and L.~Hanzo, ``Stacked intelligent metasurfaces for efficient holographic {MIMO} communications in {6G},'' \emph{IEEE J. Sel. Areas Commun.}, vol.~41, no.~8, pp. 2380--2396, 2023.

\bibitem{soh16}
F.~Sohrabi and W.~Yu, ``Hybrid digital and analog beamforming design for large-scale antenna arrays,'' \emph{IEEE J. Sel. Top. Signal Process.}, vol.~10, no.~3, pp. 501--513, 2016.

\bibitem{zhu24}
M.~Zhu, T.-W. Kuo, and C.-T.~M. Wu, ``A reconfigurable linear {RF} analog processor for realizing microwave artificial neural network,'' \emph{IEEE Trans. Microw. Theory Tech.}, vol.~72, no.~2, pp. 1290--1301, 2024.

\bibitem{gu24}
Z.~Gu, Q.~Ma, X.~Gao, J.~W. You, and T.~J. Cui, ``Direct electromagnetic information processing with planar diffractive neural network,'' \emph{Science Advances}, vol.~10, no.~29, p. eado3937, 2024.

\bibitem{gao24}
X.~Gao, Z.~Gu, Q.~Ma, B.~J. Chen, K.-M. Shum, W.~Y. Cui, J.~W. You, T.~J. Cui, and C.~H. Chan, ``Terahertz spoof plasmonic neural network for diffractive information recognition and processing,'' \emph{Nature Communications}, vol.~15, no.~1, p. 6686, 2024.

\bibitem{kes25}
R.~Keshavarz, K.~Zelaya, N.~Shariati, and M.-A. Miri, ``Programmable circuits for analog matrix computations,'' \emph{Nature Communications}, vol.~16, no.~1, p. 8514, 2025.

\bibitem{ner25-3}
M.~Nerini and B.~Clerckx, ``Capacity of {MIMO} systems aided by microwave linear analog computers (milacs),'' \emph{arXiv preprint arXiv:2506.05983}, 2025.

\bibitem{ner25-4}
M.~Nerini and B.~Clerckx, ``{MIMO} systems aided by microwave linear analog computers: Capacity-achieving architectures with reduced circuit complexity,'' \emph{arXiv preprint arXiv:2506.15052}, 2025.

\bibitem{tza25}
D.~C. Tzarouchis, B.~Edwards, and N.~Engheta, ``Programmable wave-based analog computing machine: A metastructure that designs metastructures,'' \emph{Nature Communications}, vol.~16, no.~1, p. 908, 2025.

\bibitem{lin18}
X.~Lin, Y.~Rivenson, N.~T. Yardimci, M.~Veli, Y.~Luo, M.~Jarrahi, and A.~Ozcan, ``All-optical machine learning using diffractive deep neural networks,'' \emph{Science}, vol. 361, no. 6406, pp. 1004--1008, 2018.

\bibitem{liu22}
C.~Liu, Q.~Ma, Z.~J. Luo, Q.~R. Hong, Q.~Xiao, H.~C. Zhang, L.~Miao, W.~M. Yu, Q.~Cheng, L.~Li \emph{et~al.}, ``A programmable diffractive deep neural network based on a digital-coding metasurface array,'' \emph{Nature Electronics}, vol.~5, no.~2, pp. 113--122, 2022.

\bibitem{gao23}
X.~Gao, Q.~Ma, Z.~Gu, W.~Y. Cui, C.~Liu, J.~Zhang, and T.~J. Cui, ``Programmable surface plasmonic neural networks for microwave detection and processing,'' \emph{Nature Electronics}, vol.~6, no.~4, pp. 319--328, 2023.

\bibitem{mom23}
A.~Momeni, B.~Rahmani, M.~Mall{\'e}jac, P.~del Hougne, and R.~Fleury, ``Backpropagation-free training of deep physical neural networks,'' \emph{Science}, vol. 382, no. 6676, pp. 1297--1303, 2023.

\bibitem{gov25}
B.~Govind, M.~G. Anderson, F.~O. Wu, P.~L. McMahon, and A.~Apsel, ``An integrated microwave neural network for broadband computation and communication,'' \emph{Nature Electronics}, pp. 1--13, 2025.

\bibitem{reg89}
P.~A. Regalia and M.~K. Sanjit, ``Kronecker products, unitary matrices and signal processing applications,'' \emph{SIAM review}, vol.~31, no.~4, pp. 586--613, 1989.

\bibitem{van00}
C.~F. Van~Loan, ``The ubiquitous kronecker product,'' \emph{Journal of computational and applied mathematics}, vol. 123, no. 1-2, pp. 85--100, 2000.

\bibitem{bon76a}
G.~Bonnerot and M.~Bellanger, ``Odd-time odd-frequency discrete {Fourier} transform for symmetric real-valued series,'' \emph{Proc. IEEE}, vol.~64, no.~3, pp. 392--393, 1976.

\bibitem{bon76b}
G.~Bongiovanni, P.~Corsini, and G.~Frosini, ``One-dimensional and two-dimensional generalised discrete {Fourier} transforms,'' \emph{IEEE Trans. Acoust., Speech, Signal Process.}, vol.~24, no.~1, pp. 97--99, 1976.

\end{thebibliography}

\end{document}